\documentclass[11pt]{article}

\usepackage[margin=1in]{geometry}

\usepackage{fullpage}
\usepackage[utf8]{inputenc}
\usepackage[english]{babel}
\usepackage[shortlabels]{enumitem}
\usepackage{mathtools,
  booktabs,
  braket,
  mathrsfs,
  latexsym,
  epsfig,
  xcolor,
  url,
  setspace,
  graphics,
  lipsum,
  float,
  lineno,
  placeins,
  pifont
}

\usepackage{graphicx}
\usepackage{subcaption}
\usepackage[labelfont=bf]{caption}

\usepackage{float}
\usepackage{pdflscape}
\newcommand{\bland}{\begin{landscape}}
\newcommand{\eland}{\end{landscape}}

\usepackage[flushmargin]{footmisc}
\definecolor{CardinalRed}{cmyk}{0,1,0.65,0.34} 
\usepackage[colorlinks,linkcolor=CardinalRed,citecolor=CardinalRed,urlcolor = CardinalRed]{hyperref}
\usepackage{hyperref}

\makeatletter
\def\maxwidth{\ifdim\Gin@nat@width>\linewidth\linewidth\else\Gin@nat@width\fi}
\def\maxheight{\ifdim\Gin@nat@height>\textheight\textheight\else\Gin@nat@height\fi}
\makeatother

\setkeys{Gin}{width=\maxwidth,height=\maxheight,keepaspectratio}

\newcommand{\burl}[1]{\textcolor{blue}{\url{#1}}}

\usepackage[round]{natbib}

\usepackage[OT1]{fontenc}

\usepackage{todonotes}

\makeatletter
\providecommand\@dotsep{5}
\def\listtodoname{List of Todos}
\def\listoftodos{\@starttoc{tdo}\listtodoname}
\makeatother

\graphicspath{{./graphs/}}%

\usepackage[labelsep=period, font=sc, skip=0pt,justification=centering]{caption}
\usepackage[figuresright]{rotating}
\usepackage{titlesec}
\titlelabel{\thetitle.\ }

\usepackage{titling}
\usepackage{titlesec}
\titleformat{\section}
{\normalfont\fontsize{15}{15}\bfseries}{\thesection.}{0.5em}{}

\newenvironment{itemize*}%
  {\begin{itemize}%
    \setlength{\itemsep}{0pt}%
    \setlength{\parskip}{0pt}}%
  {\end{itemize}}
\newenvironment{enumerate*}%
  {\begin{enumerate}%
    \setlength{\itemsep}{0pt}%
    \setlength{\parskip}{0pt}}%
  {\end{enumerate}}

\usepackage{
  amscd,
  amsfonts,
  amsmath,
  amssymb,
  amsbsy,
  amsthm,
  bm, 
  dsfont,
  cancel, 
  latexsym,
  mathtools,
  graphicx,
  xcolor,
  xargs
}

\usepackage{longtable}

\newcommand{\beq}{\begin{equation}}
\newcommand{\eeq}{\end{equation}}

\newcommand*\Bigpar[1]{\left( #1 \right )}

\newcommand{\ba}{\begin{array}}
\newcommand{\ea}{\end{array}}
\newcommand{\be}{\begin{enumerate}}
\newcommand{\ee}{\end{enumerate}}
\newcommand{\bi}{\begin{itemize}}
\newcommand{\ei}{\end{itemize}}
\newcommand{\bs}{\begin{align}\begin{split}\nonumber}
\newcommand{\bsnumber}{\begin{align}\begin{split}}
\newcommand{\es}{\end{split}\end{align}}

\newcommandx{\deriv}[2][1=x,2=f]{\nabla \, #2 \Bigpar{ #1 } }

\newcommand{\abs}[1]{\left\vert {#1} \right\vert}

\renewcommand{\dim}{\mathrm{dim}}

\renewcommand{\to}{{\rightarrow}}

\newcommand{\R}{\ensuremath{\mathbb{R}}}

\newcommand\frakfamily{\usefont{U}{yfrak}{m}{n}}
\DeclareTextFontCommand{\textfrak}{\frakfamily}

\newcommand{\argmin}{\operatornamewithlimits{arg\,min}}

\renewcommand{\dim}{\mathrm{dim}}
\renewcommand{\to}{{\rightarrow}}

\providecommand{\argmin}{\mathop{\rm argmin}}

\providecommand{\diag}{\mathop{\rm diag}}

\providecommand{\abs}{\mathop{\rm abs}}

\newcommand{\hyp}[2]{
\ensuremath{H_0:#1 \ifhmode\quad\text{versus}\quad\fi\text{ vs. } H_1:#2}}

\newcommandx{\uniff}[1][1={a,b}]{\textrm{Unif}\left({#1}\right)}
\newcommandx{\unifd}[1][1={a,\ldots,b}]{\textrm{Unif}\left\{{#1}\right\}}
\newcommandx{\dunif}[3][1=x,2=a,3=b]{\frac{I(#2<#1<#3)}{#3-#2}}
\newcommandx{\dunifd}[3][1=x,2=a,3=b]{\frac{I(#2\le#1\le#3)}{#3-#2+1}}
\newcommandx{\punif}[3][1=x,2=a,3=b]{
\begin{cases} 0 & #1 < #2 \\ \frac{#1-#2}{#3-#2} & #2 < #1 < #3 \\ 1 & #1 > #3\\\end{cases}}
\newcommandx{\punifd}[3][1=x,2=a,3=b]{
\begin{cases} 0 & #1 < #2\\ \frac{\lfloor#1\rfloor-#2+1}{#3-#2} & #2 \le #1 \le #3 \\ 1 & #1 > #3\\ \end{cases}}

\newcommandx\bern[1][1=p]{\textrm{Bern}\left({#1}\right)}
\newcommandx\dbern[2][1=x,2=p]{#2^{#1} \left(1-#2\right)^{1-#1}}
\newcommandx\pbern[2][1=x,2=p]{\left(1-#2\right)^{1-#1}}

\newcommandx\bin[1][1={n,p}]{\textrm{Bin}\left(#1\right)}
\newcommandx\dbin[3][1=x,2=n,3=p]{\binom{#2}{#1}#3^#1\left(1-#3\right)^{#2-#1}}

\newcommandx\mult[1][1={n,p}]{\textrm{Mult}\left(#1\right)}
\newcommandx\dmult[3][1=x,2=n,3=p]{\frac{#2!}{#1_1!\ldots#1_k!}#3_1^{#1_1}\cdots#3_k^{#1_k}}

\newcommandx\hyper[1][1={N,m,n}]{\textrm{Hyp}\left({#1}\right)}
\newcommandx\dhyper[4][1=x,2=N,3=m,4=n]{\frac{\binom{#3}{#1}\binom{#2-#3}{#4-#1}}{\binom{#2}{#4}}}

\newcommandx\nbin[1][1={r,p}]{\textrm{NBin}\left({#1}\right)}
\newcommandx\dnbin[3][1=x,2=r,3=p]{\binom{#1+#2-1}{#2-1}#3^#2(1-#3)^#1}
\newcommandx\pnbin[3][1=x,2=r,3=p]{I_#3(#2,#1+1)}

\newcommandx\geo[1][1=p]{\textrm{Geo}\left(#1\right)}
\newcommandx\dgeo[2][1=x,2=p]{#2(1-#2)^{#1-1}}
\newcommandx\pgeo[2][1=x,2=p]{1-(1-#2)^#1}

\newcommandx\pois[1][1=\lambda]{\textrm{Po}\left({#1}\right)}
\newcommandx\dpois[2][1=x,2=\lambda]{\frac{#2^#1 e^{-#2}}{#1!}}
\newcommandx\ppois[2][1=x,2=\lambda]{e^{-#2}\sum_{i=0}^#1\frac{#2^i}{i!}}

\newcommandx\normall[1][1={\mu,\sigma^2}]{\mathcal{N}\left({#1}\right)}
\newcommandx\dnormall[3][1=x,2=\mu,3=\sigma]%
  {\frac{1}{#3\sqrt{2\pi}}\exp \Bigpar{-\frac{\left(#1-#2\right)^2}{2 #3^2}}}
\newcommandx\pnormall[1][1=x]{\Phi\left({#1}\right)}
\newcommandx\qnormall[1]{\Phi^{-1}\left({#1}\right)}

\newcommandx\mvn[1][1={\mu,\Sigma}]{\mathrm{MVN}\left({#1}\right)}

\newcommandx\ex[1][1=\lambda]{\textrm{Exp}\left(#1\right)}
\newcommandx\dex[2][1=x,2=\lambda]{#2e^{-#1 #2}}
\newcommandx\pex[2][1=x,2=\lambda]{1-e^{-#1 #2}}

\newcommandx\gam[1][1={\alpha,\lambda}]{\textrm{Gamma}\left({#1}\right)}
\newcommandx\dgamma[3][1=x,2=\alpha,3=\lambda]%
  {\frac{#3^{#2}}{\Gamma\left( #2 \right)} #1^{#2-1}e^{-#3#1}}

\newcommandx\invgamma[1][1={\alpha,\beta}]{\textrm{InvGamma}\left({#1}\right)}
\newcommandx\dinvgamma[3][1=x,2=\alpha,3=\beta]%
{\frac{#3^{#2}}{\Gamma\left(#2\right)}#1^{-#2-1}e^{-#3/#1}}
\newcommandx\pinvgamma[3][1=x,2=\alpha,3=\beta]%
{\frac{\Gamma\left(#2,\frac{#3}{#1}\right)}{\Gamma\left(#2\right)}}

\newcommandx\bet[1][1={\alpha,\beta}]{\textrm{Beta}\left(#1\right)}
\newcommandx\dbeta[3][1=x,2=\alpha,3=\beta]
{\frac{\Gamma\left(#2+#3\right)}{\Gamma\left(#2\right)\Gamma\left(#3\right)}#1^{#2-1}\left(1-#1\right)^{#3-1}}

\newcommandx\dir[1][1={\alpha}]{\textrm{Dir}\left(#1\right)}
\newcommandx\ddir[3][1=x,2=\alpha]{\frac{\Gamma\left(\sum_{i=1}^k #2_i\right)}{\prod_{i=1}^k\Gamma\left(#2_i\right)}\prod_{i=1}^k #1_i^{#2_i-1}}

\newcommandx\weibull[1][1={\alpha}]{\textrm{Dir}\left(#1\right)}
\newcommandx\dweibull[3][1=x,2=\lambda,3=k]{\frac{#3}{#2}
\left(\frac{#1}{#2}\right)^{#3-1} e^{-(#1/#2)^k}}

\newcommandx\chisq[1][1=k]{\chi_{#1}^2}

\newcommandx\zet[1][1=s]{\textrm{Zeta}\left(#1\right)}
\newcommandx\dzeta[2][1=x,2=s]{\frac{#1^{-#2}}{\zeta\left(#2\right)}}

\newtheoremstyle{mystyle}
  {12pt}
  {12pt}
  {}
  {}
  {\sffamily \bfseries }
  {.}
  {0.5em}
  {\thmname{#1}\thmnumber{ #2}\thmnote{ (#3)}}

\theoremstyle{mystyle}

\newtheorem{definition}{Definition}
\newtheorem{lemma}{Lemma}
\newtheorem{proposition}{Proposition}

\renewenvironment{proof}{\noindent{\bf Proof}\hspace*{1em}}{\qed\bigskip\\}
\newenvironment{proof-sketch}{\noindent{\bf Sketch of Proof}
  \hspace*{1em}}{\qed\bigskip\\}
\newenvironment{proof-idea}{\noindent{\bf Proof Idea}
  \hspace*{1em}}{\qed\bigskip\\}
\newenvironment{proof-of-lemma}[1][{}]{\noindent{\bf Proof of Lemma {#1}}
  \hspace*{1em}}{\qed\bigskip\\}
\newenvironment{proof-of-proposition}[1][{}]{\noindent{\bf
    Proof of Proposition {#1}}
  \hspace*{1em}}{\qed\bigskip\\}
\newenvironment{proof-of-theorem}[1][{}]{\noindent{\bf Proof of Theorem {#1}}
  \hspace*{1em}}{\qed\bigskip\\}
\newenvironment{inner-proof}{\noindent{\bf Proof}\hspace{1em}}{
  $\bigtriangledown$\medskip\\}
\newenvironment{proof-attempt}{\noindent{\bf Proof Attempt}
  \hspace*{1em}}{\qed\bigskip\\}

\usepackage{array}
\newcolumntype{L}[1]{>{\raggedright\let\newline\\\arraybackslash\hspace{0pt}}m{#1}}
\newcolumntype{C}[1]{>{\centering\let\newline\\\arraybackslash\hspace{0pt}}m{#1}}
\newcolumntype{R}[1]{>{\raggedleft\let\newline\\\arraybackslash\hspace{0pt}}m{#1}}

\usepackage{multirow}

\usepackage{makecell} 
\newcommand{\cR}{\mathcal{R}}
\newcommand{\1}{\mathbf{1}}
\newcommand{\pre}{\mathrm{pre}}
\newcommand{\post}{\mathrm{post}}
\newcommand{\oracle}{\mathrm{oracle}}
\newcommand{\Null}{\mathrm{Null}}

\graphicspath{{./graphs/}}%

\begin{document}


\title{\Large\bf The Harmonic Synthetic Control Method
\thanks{Ziyi Liu, PhD student, Haas School of Business, University of California, Berkeley. Email: \url{zyliu2023@berkeley.edu}. Yiqing Xu, Assistant Professor, Department of Political Science, Stanford University. Email: \url{yiqingxu@stanford.edu}. This project was inspired by discussions with Hongyu Mou and Yifan Sun regarding the replication of synthetic control studies. We thank David Bruns-Smith, Kevin Chen, Alex Hayes, Guido Imbens, Lihua Lei, David Ritzwoller, Sarah Vicol, and participants at the Stanford Metrics Lunch for their helpful comments. The authors used Claude Code Opus 4.7 as a research, coding, and writing assistant in preparing this manuscript. All errors remain solely the responsibility of the authors.
}
\\\bigskip}

\author{Ziyi Liu \\(UC Berkeley)\and Yiqing Xu\\(Stanford)}

\date{\bigskip \today
}


\maketitle

\vspace{-2em}
\begin{abstract}
\noindent Synthetic control methods can produce misleading counterfactual predictions when outcome series contain unit-specific stochastic trends, a common feature of nonstationary macroeconomic data. Existing remedies, such as pre-filtering or differencing, reduce spurious matching but may discard \emph{shared} nonstationary variation that helps estimate donor weights. We propose Harmonic Synthetic Control (HSC), which replaces this binary choice with a soft allocation mechanism. HSC jointly estimates donor weights and a treated-unit-specific smooth residual component, then extrapolates this component into post-treatment periods using a time-series forecaster. A tuning parameter, selected by rolling-origin cross-validation, governs the division between donor matching and forecasting. As it varies, HSC continuously interpolates between synthetic control applied to differenced outcomes and synthetic control applied to raw outcomes with an intercept or trend. We provide a spectral interpretation showing how HSC downweights low-frequency residual components in donor matching and assigns them to the forecasting branch. A prediction-error decomposition separates weight-estimation distortion from residual-forecasting error. Monte Carlo exercises show that HSC adapts across regimes, performing well when stochastic trends are predominantly common or idiosyncratic, while estimators fixed to one regime can fail in the other.

\bigskip\noindent\textbf{Keywords:} synthetic control, nonstationarity, spurious regression, causal inference, frequency domain
\end{abstract}

\thispagestyle{empty}  
\clearpage
\newpage
\doublespacing

\clearpage

\setcounter{page}{1}
\abovedisplayskip=5pt
\belowdisplayskip=5pt


\section{Introduction}
\label{sec:intro}

Synthetic control methods construct counterfactuals for treated units by finding weighted combinations of untreated donors that match the treated unit's pre-treatment outcome path \citep{abadie2003ecc,abadie2010synthetic, abadie2015comparative}. The logic is that if a weighted combination of donors can reproduce the treated unit's outcomes before treatment, the same combination should approximate what the treated unit's outcomes would have been in the absence of treatment. A difficulty arises when outcome series are nonstationary: the pre-treatment fit that synthetic control exploits may be spurious. Specifically, when units contain unit-specific stochastic trends, a convex combination of donors can closely track the treated unit's pre-treatment path through coincidental co-movement rather than shared structure, producing in-sample fit that breaks down out of sample and leading to biased estimation and distorted inference
\citep{phillips1986understanding,masini2021counterfactual,
masini2022counterfactual,shi2025synthetic}.

Existing responses to this problem face a tradeoff over whether to preprocess the data, such as detrending or differencing. These transformations turn nonstationary time series into stationary ones, thereby alleviating the spurious matching risk. However, these transformations also discard nonstationary variation that the treated unit potentially shares with donor units, which is the main source of identifying variation for donor-weight estimation in synthetic control \citep{ferman2021synthetic,abadie2021using}. Other variants of synthetic control, such as augmented or bias-corrected extensions, can reduce the residual imbalance left by imperfect pre-treatment fit \citep{ben2021augmented,arkhangelsky2021synthetic}, but because they still begin from weights chosen to match raw pre-treatment outcomes, they can inherit the same underlying weight distortion caused by spurious matching.

We formalize this tradeoff through a conceptual distinction. In macroeconomic panel data, nonstationarity typically takes the form of persistent stochastic components, such as random walks or other integrated processes, whose variance grows without bound over time. We categorize this stochastic trend variation into two sources. By a \emph{shared stochastic trend}, we mean a stochastic trend component whose innovations are shared across units; such a trend moves the treated unit and donors together, possibly with unit-specific responses. Synthetic control is designed to exploit this shared structure, using a weighted combination of donors to approximate the treated unit's trajectory. By an \emph{idiosyncratic stochastic trend}, we mean a stochastic trend component whose realizations are unit-specific and do not generate stable comovement across units, even though they may appear correlated by chance in any finite sample. This is the source of the spurious matching problem. We use \emph{stochastic trend} as an umbrella term for both.

\begin{figure}[!h]
    \centering
    \setlength{\arrayrulewidth}{1.5pt}
    \renewcommand{\arraystretch}{2.5}
    \Large
    \begin{tabular}{c|c|c|}
         \multicolumn{1}{c}{} & \multicolumn{1}{c}{\textbf{Shared}} & \multicolumn{1}{c}{\textbf{Idiosyncratic}} \\ \cline{2-3}
        \textbf{Short-run} & \multirow{2}{*}{\centering \fontsize{70}{80}\selectfont \textbf{L}} & \huge $\boldsymbol{\varepsilon}$ \\ \cline{3-3}
        \textbf{Stochastic-trend} &  & \huge \textbf{R} \\ \cline{2-3}
    \end{tabular}\\ \medskip
    \caption{Schematic decomposition of untreated potential outcomes.}
    \label{fig:econometric_problem}
    \vspace{0.5em}
    \begin{minipage}{0.88\textwidth}
    \footnotesize
    \textit{Notes:} $L$ denotes the component governed by the shared factors, which may contain
    both short-run and stochastic trend latent factors. $\varepsilon$ denotes
    idiosyncratic short-run noise. $R$ denotes an idiosyncratic stochastic
    trend that is not governed by the shared factor structure.
    \end{minipage}
\end{figure}

Based on these concepts, we decompose untreated potential outcomes of the treated unit and all control donors into three components (Figure~\ref{fig:econometric_problem}). The first is a component $L$, possibly low-rank and driven by shared latent factors whose loadings vary across units. These factors may include both stochastic trend and short-run variations.\footnote{We avoid the labels stationary and nonstationary because what matters for our analysis is bounded long-run variance, not strict stationarity. By short-run we mean variation with bounded long-run variance, which includes strictly stationary processes as well as bounded-variance nonstationary processes.} The second is idiosyncratic short-run noise $\varepsilon$. It does not fundamentally distort donor-weight estimation: it averages out over long pre-treatment windows under standard moment conditions. The third is the idiosyncratic stochastic trend $R$, which is not governed by the shared factor structure. Unlike $\varepsilon$, $R$ can severely distort donor-weight estimation: independent stochastic trends produce realized correlations whose magnitude does not vanish as $T_0$ grows, so longer pre-treatment windows do not help \citep{phillips1986understanding}. Whether the stochastic trend variation in the treated unit's series is mostly shared(captured by $L$) or idiosyncratic (captured by $R$) is generally unknown to the researcher ex ante. In the first case, preprocessing removes the variation that synthetic control would otherwise use for donor-weight estimation. In the second case, allowing stochastic trends to enter weight estimation creates the risk of spurious matching. This tradeoff motivates us to propose a solution that adapts to both regimes.

This paper proposes \emph{harmonic synthetic control} (HSC), which replaces this binary choice of preprocessing with a soft, data-driven allocation. HSC jointly estimates donor weights and a treated-unit-specific smooth component $E$ that absorbs idiosyncratic trend-like variation not reproducible by a convex combination of donors. The roughness of $E$ is controlled by a penalty $\|D_q E\|_2^2$, where $D_q$ is the $q$th-order difference operator, with $q \in \{1, 2\}$ specifying the order of smoothness. The division of labor between donor matching and the smooth component is governed by a single tuning parameter $\rho \in [0,1]$, selected by rolling-origin cross-validation. When $\rho$ is close to $1$, HSC imposes a high penalty on the roughness of $E$ and recovers synthetic control with an intercept or with an intercept plus a linear trend, depending on whether $q=1$ or $q=2$. When $\rho$ is close to $0$, $E$ absorbs the entire discrepancy between the treated unit and the convex combination of donors, and the weight estimation in HSC approaches synthetic control applied to first or second differences of the raw outcomes (for $q=1$ or $q=2$, respectively). At intermediate $\rho$, HSC continuously interpolates between these two endpoints. In post-treatment periods, the smooth component is forecast by a time series forecaster, and the counterfactual is constructed by adding the forecast of $E$ to the donor matching component. HSC does not attempt to disentangle which portion of the stochastic trend variation is shared and which is idiosyncratic; instead, cross-validation selects the allocation between donor matching and the smooth component that yields the best out-of-sample predictive performance.

This soft allocation has a clean spectral interpretation. Any pre-treatment series can be decomposed into components at different frequencies: low-frequency content carries slowly varying (trend-like) variation, while high-frequency content carries short-run variation. The HSC weight estimation problem down-weights low-frequency components and amplifies high-frequency components, thus alleviating the spurious matching risk. The treated-unit-specific component $E$ absorbs the low-frequency residual variation that remains after donor matching, and the post-treatment forecast of $E$ extrapolates this low-frequency, trend-like component forward in time. Equivalently, HSC can be understood as applying synthetic control to a soft spectral transformation of the raw data that interpolates between two extremes: at $\rho = 0$, the transformation reduces to $q$-th order differencing, which strongly suppresses low-frequency content; at $\rho = 1$, the transformation removes only the null-space component (constants for $q=1$, constants plus linear trends for $q=2$), leaving other frequencies unchanged. The method's name reflects the form of this interpolation: at each frequency, the spectral gain of the HSC transformation is a weighted harmonic mean of the gains at the two endpoints, with weights $1-\rho$ and $\rho$.

We develop an envelope bound on the HSC counterfactual's prediction error. The error decomposes into a weight-estimation term and a forecasting term, each depending on $\rho$. The weight-estimation term reflects that the researcher observes $Y = L + R + \varepsilon$ rather than the shared component $L$ alone; the HSC-estimated donor weights therefore deviate from the oracle weights constructed from $L$ alone. This term captures the tradeoff between the risk of spurious matching at large $\rho$ and the downweighting of useful low-frequency variation in $L$ at small $\rho$. The forecasting term captures the error that would remain even with oracle weights; depending on the prediction quality of the time series forecaster, this term can be monotonically increasing, monotonically decreasing, or non-monotonic in $\rho$. Together these two terms determine the tradeoff that cross-validation aims to balance in the choice of $\rho$.

The synthetic control literature has produced many methods that match donors to the treated unit's pre-treatment outcomes, differing in weight constraints, bias-correction strategies, and temporal aggregation \citep{doudchenko2016balancing,ben2021augmented,arkhangelsky2021synthetic,sun2024temporal}. A growing subset of this literature addresses synthetic control specifically under nonstationarity. \citet{masini2021counterfactual,masini2022counterfactual} show that counterfactual estimation requires a cointegrating relationship between treated and donor units; without it, estimated effects diverge and inference suffers severe size distortion. \citet{harvey2021cointegration} model the common stochastic trend explicitly and propose stationarity tests on the pre-treatment difference as a diagnostic for donor selection. \citet{shi2025synthetic} decompose outcomes into trend and cycle via the Hamilton filter and restrict donor matching to the cyclical residual. These contributions diagnose the nonstationarity problem or propose hard filters that remove persistent variation before matching. HSC introduces the smooth component $E$ as a new degree of freedom; rather than diagnosing nonstationarity or applying a fixed filter, HSC offers a soft, data-driven alternative that retains shared stochastic trend while mitigating idiosyncratic spurious matching.

The estimator's mechanism draws on a different technical toolkit than is standard in the synthetic control literature. The roughness penalty $\|D_q E\|_2^2$ connects HSC to the Whittaker--Henderson smoothing framework \citep{whittaker1922new,henderson1924new}, the Hodrick--Prescott filter \citep{hodrick1997postwar}, and penalized spline formulations \citep{eilers1996flexible}. The spectral decomposition of the HSC metric, in which the tuning parameter $\rho$ acts as a frequency-dependent gain function on the pre-treatment residual, imports ideas from spectral analysis in time series into the synthetic control setting. This connection provides both interpretive clarity ($\rho$ governs a soft spectral partition between what is matched cross-sectionally and what is forecasted univariately) and computational tractability, since the profiled HSC objective reduces to a standard constrained quadratic program.

We illustrate the method on the canonical Hong Kong example, the path of per-capita GDP after the 1997 return of Hong Kong to Chinese sovereignty, studied by \citet{hsiao2012panel} and \citet{shi2025synthetic}. Cross-validation selects an interior allocation rather than either preprocessing extreme, confirming that the data prefer a soft partition between donor matching and the smooth component. HSC distributes donor weight broadly across the control pool, whereas level-matching and filter-based competitors either concentrate weight on a few donors or extrapolate the treated unit's own trend and overshoot the observed series. On a rolling-origin out-of-sample criterion HSC is the most accurate estimator among those we compare, and this ranking is stable across the cross-validation horizon and across two very different donor-selection philosophies. The application thus reproduces, in real data, the soft-allocation behavior that the theory and the Monte Carlo evidence predict.

The remainder of the paper is organized as follows. Section~\ref{sec:tradeoff} formalizes the outcome decomposition, establishes notation, and characterizes the two failure modes (spurious donor matching and over-filtering) that motivate the need for a soft allocation mechanism. Section~\ref{sec:estimator} introduces the HSC estimator, derives its profiled representation, and constructs the forecast operator that extrapolates the smooth component into post-treatment periods. Section~\ref{sec:spectral} develops the spectral interpretation, showing that the tuning parameter $\rho$ acts as a frequency-dependent gain function, and describes the cross-validation procedure for selecting $\rho$. Section~\ref{sec:predict_decompose} presents the prediction-error decomposition into weight-estimation and forecasting terms, and discusses the trade-off under the selection of $\rho$. Section~\ref{sec:simulation} reports Monte Carlo evidence for the HSC estimator. Section~\ref{sec:empirical} applies HSC to the 1997 Hong Kong handover and compares it with established alternatives. Section~\ref{sec:conclusion} concludes.

\section{A Tradeoff between Spurious Donor Matching and Over-Filtering}
\label{sec:tradeoff}

This section formalizes the allocation problem described in the Introduction. We first establish notation for the synthetic control setting, then provide the formal $L + R + \varepsilon$ decomposition and characterize two failure modes: spurious donor matching when the idiosyncratic stochastic trend $R$ dominates, and over-filtering when shared stochastic trend in $L$ is discarded. A simulated illustration shows that existing methods commit to one regime or the other.

\subsection{Setup and notation}
\label{subsec:setup}

We observe a panel of $N_0+1$ units over $T=T_0+T_{\post}$ periods.
Unit $i=1$ is the treated unit; units $i=2,\ldots,N_0+1$ form the
donor pool.  Treatment is imposed at the end of period $T_0$, so the
pre-treatment window is $t=1,\ldots,T_0$ and the post-treatment window
is $t=T_0+1,\ldots,T$.  Let $Y_{it}(0)$ denote the untreated potential
outcome for unit $i$ at time $t$.  Under the standard no-anticipation
assumption, we observe $Y_{1t}=Y_{1t}(0)$ for $t\le T_0$.  Let
$X_t=(Y_{2t},\ldots,Y_{N_0+1,t})'$ denote the
$N_0\times 1$ vector of donor outcomes at time $t$.  We write
$Y_{\pre}=(Y_{1,1},\ldots,Y_{1,T_0})'$ and
$X_{\pre}=(X_1,\ldots,X_{T_0})'$ for the $T_0\times 1$ and
$T_0\times N_0$ pre-treatment arrays.  $Y_{\post}$ and $X_{\post}$ are
defined similarly.

The synthetic control estimator constructs a counterfactual for the
treated unit as a weighted combination of donors.  The weights
$\hat\omega$ are chosen from the simplex
$\Delta=\{\omega\in\mathbb{R}^{N_0}:\omega\ge 0,\;\1'\omega=1\}$ by
minimizing the pre-treatment sum of squared residuals:
\begin{equation*}
\hat\omega \;=\; \argmin_{\omega\in\Delta}\;
\|Y_{\pre}-X_{\pre}\omega\|^2,
\end{equation*}
and the counterfactual at horizon $h\ge 1$ is
$\hat{Y}_{1,T_0+h}(0) = X_{T_0+h}'\hat\omega$.
A common simplex-constrained extension adds an intercept $\hat\alpha$ to absorb a constant
level shift between the treated unit and the weighted donors.  The
weights and intercept are estimated jointly:
\begin{equation*}
(\hat\omega,\hat\alpha) \;=\; \argmin_{\omega\in\Delta,\;\alpha\in\mathbb{R}}\;
\|Y_{\pre}-X_{\pre}\omega - \alpha\1_{T_0}\|^2,
\end{equation*}
which is equivalent to matching on demeaned pre-treatment outcomes
\citep{doudchenko2016balancing,ferman2021synthetic}.  The counterfactual
becomes $\hat{Y}_{1,T_0+h}(0) = X_{T_0+h}'\hat\omega + \hat\alpha$.

Both formulations share the same basic structure: donor weights are
chosen to minimize a pre-treatment loss computed on the observed
outcome series, and the counterfactual extrapolates those weights into
the post-treatment window.

\subsection{Decomposing untreated outcomes and the allocation problem}
\label{subsec:allocation}

We now formalize the decomposition introduced informally in Section~\ref{sec:intro} (Figure~\ref{fig:econometric_problem}). We decompose untreated potential outcomes into three components:
\begin{equation}
Y_{it}(0) \;=\; L_{it} \;+\; R_{it} \;+\; \varepsilon_{it}.
\end{equation}
Here $L_{it} = \Lambda_i'F_t$ is a shared low-rank component, where
$F_t$ collects latent factors and $\Lambda_i$ is the corresponding
vector of unit-specific loadings. The factors $F_t$ may contain
both short-run and stochastic trend movements. The defining property of
$L_{it}$ is that it is \emph{shared}: when a convex combination of donors matches the treated unit's loadings $\Lambda_1$, it reproduces $L_{1t}$ in every period. By contrast, $R_{it}$ and $\varepsilon_{it}$ are
idiosyncratic. The term $R_{it}$ denotes an idiosyncratic stochastic
trend, whose long-run variance grows without bound, whereas $\varepsilon_{it}$ denotes
idiosyncratic short-run noise with bounded long-run variance. Unlike the shared component, neither
$R_{it}$ nor $\varepsilon_{it}$ is governed by a shared factor
structure. As a result, observed outcomes may mask the low-rank
structure in $L_{it}$, with the main difficulty arising from the
idiosyncratic stochastic trend component $R_{it}$.

This decomposition refines the outcome framework of \citet{arkhangelsky2021synthetic}
by separating idiosyncratic stochastic trends from short-run idiosyncratic noise.
In \citeauthor{arkhangelsky2021synthetic}'s notation, untreated outcomes are represented
by a systematic component $\mathcal{L}_{it}$ and an idiosyncratic error $\epsilon_{it}$.
Their asymptotic analysis allows $\mathcal{L}$ to be an approximately low-rank systematic matrix,
without imposing a fixed known rank, while their Assumption~1 restricts the rows
$\epsilon_{i\cdot}$ of the noise matrix to be i.i.d.\ Gaussian vectors with covariance
matrix $\Sigma\in\mathbb R^{T\times T}$ whose eigenvalues are bounded and bounded
away from zero. This condition permits temporal dependence and some forms of
nonstationary covariance heterogeneity. In the growing-$T$ asymptotics
under which Assumption~1 is imposed, however, it excludes integrated
unit-specific stochastic trends: if $\epsilon_{it}$ were a random walk
with innovation variance $\sigma^2$, then
$\operatorname{Cov}(\epsilon_{is},\epsilon_{it})=\sigma^2\min\{s,t\}$,
so the largest eigenvalue of $\Sigma$ grows on the order of $T^2$ and
is not bounded uniformly in $T$, violating the bounded-eigenvalue
requirement of Assumption~1. Our decomposition therefore writes the idiosyncratic component
as $R_{it}+\varepsilon_{it}$, where $R_{it}$ captures the unit-specific stochastic trend
excluded by the bounded-eigenvalue condition and $\varepsilon_{it}$ denotes the
remaining short-run noise. This split makes explicit the type of persistent
idiosyncratic variation that we argue is central to the spurious matching problem.

A parallel restriction appears in \citet{ferman2021synthetic}, who study synthetic control in the fixed-$N$, large-$T_0$ regime under a linear factor model with common factors, unit-specific loadings, and idiosyncratic shocks $\epsilon_{it}$. Their Assumption~4 requires the pre-treatment moments to converge in probability to non-stochastic constants: $T_0^{-1}\sum_t \epsilon_{it}^2 \xrightarrow{p} \sigma_\epsilon^2$, with analogous conditions on the common factors and on cross-products of factors and noise. The substantive content matches \citeauthor{arkhangelsky2021synthetic}'s Assumption~1: the idiosyncratic noise must have bounded long-run variance with well-behaved pre-treatment sample moments. An idiosyncratic random walk in $\epsilon_{it}$ violates Assumption~4 in the same way it violates the bounded-eigenvalue condition.

As discussed in Section~\ref{sec:intro}, the fitting criterion in Section~\ref{subsec:setup} operates on observed outcomes and therefore does not distinguish among $L_{it}$, $R_{it}$, and $\varepsilon_{it}$. The resulting allocation problem, deciding how much stochastic trend variation to attribute to the shared component $L_{it}$ versus the idiosyncratic stochastic trend $R_{it}$, leads to two failure modes formalized in the following subsections.

\subsection{The risk of spurious donor matching}
\label{subsec:spurious}

When $R_{it}$ contributes substantially to $Y_{it}(0)$, the
pre-treatment optimization can achieve a close fit by assigning weight
to donors whose independent persistent movements happen to co-move with
the treated unit over the observed window. This is a manifestation of
spurious regression in the synthetic control setting: the in-sample fit
may appear excellent, yet it is driven by coincidental trending behavior
rather than a shared factor structure, and therefore does not extend
beyond the pre-treatment period. \citet{shi2025synthetic} formalize
this as the \emph{spurious synthetic control} problem, noting that
``even if a country's GDP can be closely approximated by a weighted
average of others over a given period, such a fit may arise purely from
coincidental trending behavior.'' The simplex constraints
$\omega\ge 0$, $\1'\omega=1$ narrow the feasible set but do not prevent
spurious fit: independent random walks can still produce close
pre-treatment matches within the simplex.

It is worth emphasizing that the central failure is \emph{weight
distortion}, not merely imperfect fit.  When $R_{it}$ is quantitatively
important, the optimizer is drawn toward donors whose idiosyncratic
trends happen to track $R_{1t}$ in the pre-treatment period, pulling
the weights away from the combination that would best reproduce the
shared component $L_{it}$. This distinction
between weight distortion and lack of fit is important because several
recent proposals, including augmented synthetic control
\citep{ben2021augmented} and synthetic difference-in-differences
\citep{arkhangelsky2021synthetic}, improve on the standard synthetic
control by augmenting it with an outcome model that corrects for the
residual imbalance left by imperfect pre-treatment fit.  Because these
methods still begin from the donor weights chosen to match pre-treatment
outcome levels, they can inherit the same underlying weight-distortion
channel: the bias-correction step operates conditional on already-distorted
weights, so it can at best mitigate but does not fully undo this distortion.

More fundamentally, the existing literature on synthetic control with
nonstationary data identifies cointegration between the treated unit
and the synthetic control as the key condition separating valid donor matching from spurious fit.  In the regression-based
framework studied by \citet{masini2021counterfactual}, counterfactual
estimation requires the data-generating process to admit a cointegrating
relationship involving the treated unit; without such a relationship,
the regression is spurious.  \citet{masini2022counterfactual} show that
in this case the estimated treatment effect diverges and that ignoring
the nonstationary nature of the data leads to severe over-rejection of
the null hypothesis of no effect.
\citet{harvey2021cointegration} reach a similar conclusion from a
structural time series perspective: they model the shared stochastic
trend explicitly and argue that a synthetic control is valid when the
target and control series share a common stochastic trend, proposing
stationarity tests on the pre-treatment difference as a diagnostic for
donor selection.

In the language of Section~\ref{subsec:allocation}, cointegration between the treated unit and the synthetic control therefore requires that the stochastic trend content of their difference be eliminated not only in the pre-treatment period but also out of sample. When the stochastic trend variation is entirely driven by shared factors in $L_{it}$, a cointegrating relationship among the units is guaranteed by the factor structure, and the donor weights that minimize the pre-treatment criterion can recover the common component both in pre- and post-treatment periods. When the treated unit also contains a quantitatively important idiosyncratic stochastic trend $R_{1t}$, the cointegrating relationship involving the treated unit may not exist, and the critiques from the cited literature then apply directly.

\subsection{The cost of hard filtering}
\label{subsec:filtering}

The previous subsection showed that level-based matching is vulnerable
to spurious donor matching when $R_{it}$ is large. Two natural
responses have been proposed, both of which remove stochastic trend
variation before constructing the synthetic control.

The first is explicit pre-filtering. \citet{shi2025synthetic} propose
decomposing the outcome into a trend and a cyclical component using the
Hamilton filter \citep{hamilton2018you}, forecasting the treated unit's trend from its own
lagged values, and restricting donor matching to the stationary cyclical
residual. Their strategy is designed to eliminate spurious matching from idiosyncratic stochastic trend by
construction under the maintained trend-cycle decomposition: the Hamilton
filter removes the trend component, and donor matching is then applied to
the resulting cyclical residual rather than to the raw persistent series.

The second is differencing. If $R_{it}$ is a random walk,
first-differencing yields
$\Delta Y_{it}=\Delta L_{it}+\Delta R_{it}+\Delta\varepsilon_{it}$,
where $\Delta R_{it}$ is now stationary. This logic underlies the
concern raised by \citet{masini2022counterfactual} that nonstationary
data can produce spurious counterfactual estimates, motivating
transformations such as differencing before applying counterfactual
methods, and is related to
\citeauthor{abadie2021using}'s~(\citeyear{abadie2021using}) observation
that matching in differences can help when levels are not credibly
matchable.

Both strategies address the spurious matching problem, but they share a
common cost: neither distinguishes between $R_{it}$ and the
stochastic trend part of $L_{it}$. If the common factor $F_t$ is a random
walk with heterogeneous loadings $\Lambda_i$, then first-differencing
removes the resulting stochastic trend in $\Lambda_i'F_t$ entirely. \citet{ferman2021synthetic} show that when
common factors include diverging nonstationary components, these
components dominate the pre-treatment fitting criterion as $T_0$ grows,
providing the primary source of identifying variation for donor weight
estimation. Mechanically removing them discards an important signal for donor
matching. What remains after filtering or differencing is a stationary
factor structure with potentially much lower signal-to-noise ratio. The resulting donor weights can become substantially more
imprecise and perform poorly in recovering the shared component of
$L_{it}$. \citet{abadie2021using} indeed notes that matching in first
differences can inflate the variance attributable to noise, potentially
inducing an increase in bias.

\subsection{Illustration of the tradeoff}
\label{subsec:design_goal}

The preceding two subsections identified a tension between two failure
modes: spurious donor matching when $R_{it}$ dominates, and
over-filtering when shared stochastic trend variation in $L_{it}$ is
discarded. The relative severity of these two risks depends on the
magnitude of $R_{it}$ relative to $L_{it}$.

Figure~\ref{fig:tradeoff} illustrates the tradeoff with a simulated
example. The data-generating process is
$Y_{it}(0)=\Lambda_i' F_t+\kappa\cdot R_{it}+\varepsilon_{it}$, where
$F_t$ is a shared random-walk factor with heterogeneous loadings
($\Lambda_i \sim N(1,1)$), $R_{it}$ are idiosyncratic random walks, and $\varepsilon_{it}$ is iid noise with
$\sigma_\varepsilon=2$. The common factor structure is the same across the two rows; the
only difference is the importance of the idiosyncratic stochastic trend,
governed by $\kappa$. The top row sets $\kappa=0$ (only shared stochastic trend); the bottom row sets $\kappa=2$ (an idiosyncratic
stochastic trend added on top of the same common structure). Each column applies a different
estimator. The first is synthetic control in levels with an intercept,
which constructs the counterfactual as
$\hat{Y}_{1,T_0+h}(0) = X_{T_0+h}'\hat\omega^{\mathrm{lev}}
+ \hat\alpha$, where $\hat\alpha$ absorbs a constant shift between the
treated unit and the weighted donors. The second is synthetic control in
first differences, which estimates donor weights on the differenced data
$\Delta Y_{it}$ and anchors the counterfactual at the last
pre-treatment observation:
$\hat{Y}_{1,T_0+h}(0) = X_{T_0+h}'\hat\omega^{\mathrm{dif}}
+ (Y_{1,T_0} - X_{T_0}'\hat\omega^{\mathrm{dif}})$.

\begin{figure}[!h]
\centering
\includegraphics[width=\textwidth]{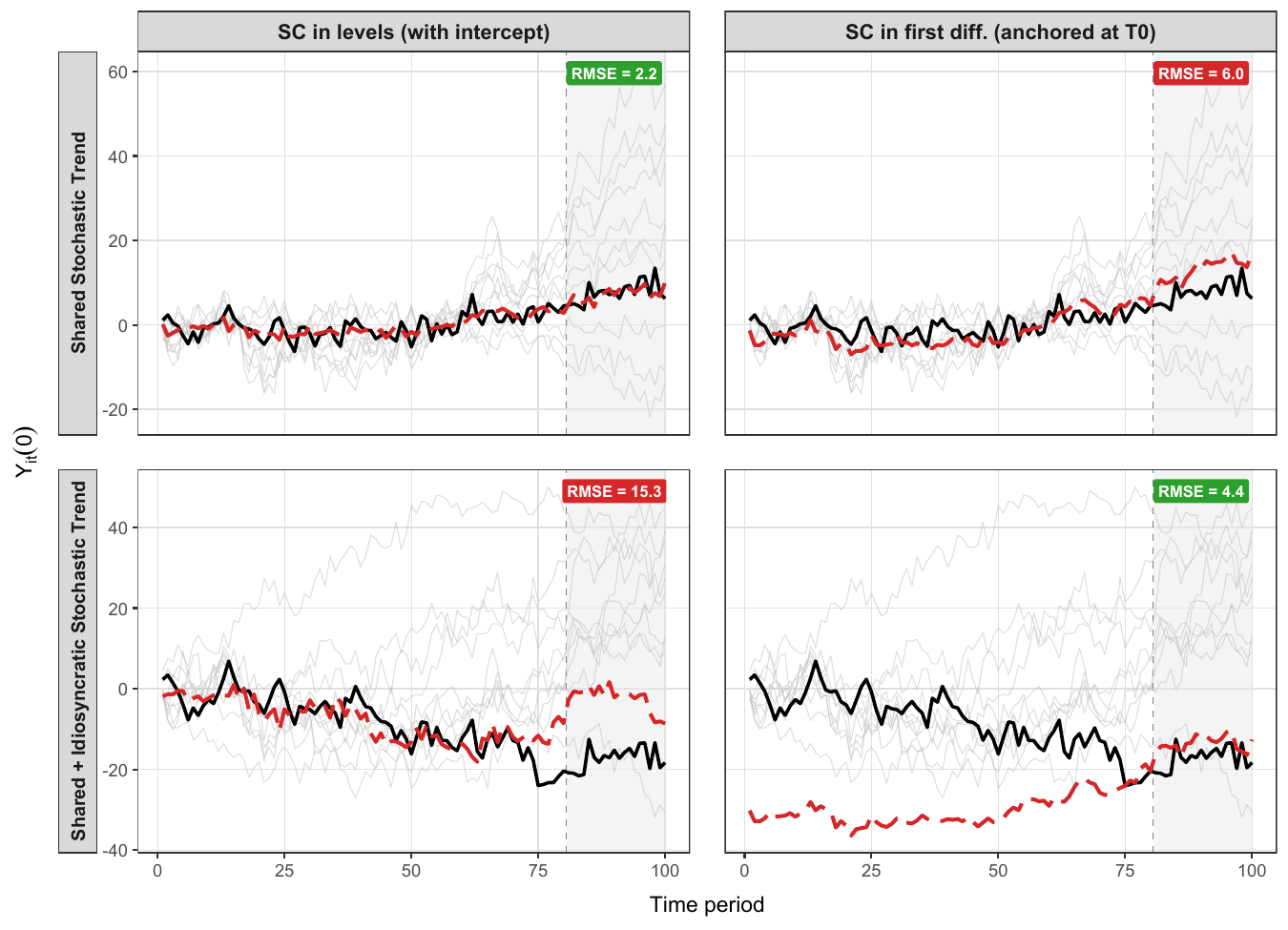}
\caption{Spurious donor matching versus over-filtering.}
\label{fig:tradeoff}

\vspace{4pt}
\begin{minipage}{0.92\textwidth}
\footnotesize
\textit{Notes:} Solid black lines show the treated unit's outcome path
(no treatment effect is imposed). Dashed red lines show the synthetic
control counterfactual. Thin grey lines show individual donor units.
The shaded region marks post-treatment periods. The data-generating
process is
$Y_{it}(0)=\Lambda_i' F_t+\kappa\cdot R_{it}+\varepsilon_{it}$ with one
common random-walk factor, loadings $\Lambda_i\sim N(1,1)$,
$\sigma_\varepsilon=2$, $N_0=10$ donors, $T_0=80$
pre-treatment periods, and $T_{\post}=20$ post-treatment periods. The
top row sets $\kappa=0$ (shared stochastic trend only); the bottom
row adds idiosyncratic stochastic trend ($\kappa=2$). RMSE is computed over
post-treatment periods.
\end{minipage}
\end{figure}

In the top-left panel, synthetic control with intercept captures the common factor
structure and tracks the treated unit closely in both the pre- and
post-treatment periods (RMSE~$=2.2$). In the top-right panel,
first-differencing weakens the dominant common signal; matching on
differenced data in the presence of large stationary noise
($\sigma_\varepsilon=2$) produces weights that recover the common
structure less precisely, and the reconstructed level path has a larger
discrepancy (RMSE~$=6.0$). In the bottom-left panel, synthetic control with intercept shows a large performance discrepancy between pre- and post- treatment periods, suggesting an instance of spurious donor matching
(RMSE~$=15.3$). In the bottom-right panel, first-differencing synthetic
control avoids the spurious match and produces a more accurate
counterfactual (RMSE~$=4.4$).

In practice, the researcher does not know how important the
idiosyncratic stochastic trend component is relative to the shared
component. A method that commits fully to either level matching or
hard filtering will fail in one regime or the other. This motivates the need for a \emph{soft} allocation mechanism rather
than a binary choice. The goal of this paper is to provide a synthetic control estimator that lets the data
determine how much of the stochastic trend variation in the treated unit should be allocated between donor matching and a smooth treated-specific component.

\section{Harmonic Synthetic Control}
\label{sec:estimator}

Section~\ref{sec:tradeoff} showed that synthetic control's pre-treatment fitting criterion does not distinguish between the shared low-rank component $L_{it}$ and the idiosyncratic stochastic trend $R_{it}$; level matching and hard filtering each fail in one regime or the other. To bridge these two regimes, we propose the harmonic synthetic control (HSC) estimator. The construction proceeds in three steps: weight estimation for donor matching, the treated-unit-specific time series forecaster, and the construction of the counterfactual. HSC does not attempt to disentangle $L_{it}$ and $R_{it}$ in the data-generating process. Instead, it seeks the optimal allocation between modeling stochastic trend variation as shared low-rank structure and modeling it as idiosyncratic stochastic trends. A single tuning parameter $\rho \in [0,1]$ governs such an allocation.

\subsection{Donor weights of HSC}\label{subsec:hsc_weights}

We use $q\in\{1,2\}$ to denote the smoothness order. The
difference operator $D_q$ is the $q$th-order difference operator: $D_1$ is the $(T_0-1)\times T_0$ first-difference operator with
rows $(D_1 x)_t = x_{t+1}-x_t$, and $D_2$ is the
$(T_0-2)\times T_0$ second-difference operator with rows
$(D_2 x)_t =x_{t+2}-2x_{t+1}+x_t$.\footnote{Explicitly, $D_1$ has
$-1$ on the main diagonal and $+1$ on the first superdiagonal; $D_2$
has entries $+1, -2, +1$ on three consecutive diagonals.} Let $K_q := D_q'D_q$, which is symmetric positive semidefinite
with null space $\Null(K_q)$ equal to $\mathrm{span}\{\mathbf{1}_{T_0}\}$ for
$q=1$ and $\mathrm{span}\{\mathbf{1}_{T_0}, t_\pre\}$ for
$q=2$, where $t_\pre:=(1,\dots,T_0)'$. Recall that
$X_\pre\in\R^{T_0\times N_0}$ denotes the pre-treatment donor outcome
matrix, $Y_{\pre}\in\R^{T_0}$ is the treated unit's pre-treatment
outcome vector, and
$\Delta_{N_0}:=\{\omega\in\R^{N_0}:\omega\ge 0,\;\mathbf{1}'\omega=1\}$
is the unit simplex.

\begin{definition}[Harmonic Synthetic Control, $\rho\in(0,1)$]
\label{def:hsc_interior}
For $\rho\in(0,1)$, the HSC estimator jointly solves
\begin{equation}
\bigl(\hat\omega(\rho,q),\,\hat E_\pre(\rho,q)\bigr)
\;\in\;
\argmin_{\omega\in\Delta_{N_0},\; E\in\R^{T_0}}
\left\{
\frac{1}{\rho}\bigl\|Y_{\pre}-X_\pre\omega - E\bigr\|_2^2
\;+\;\frac{1}{1-\rho}\|D_q E\|_2^2
\;+\;\zeta^2 T_0\|\omega\|_2^2
\right\},
\label{eq:hsc_primal}
\end{equation}
where $\zeta>0$ is a ridge regularization parameter.\footnote{Following
\citet{arkhangelsky2021synthetic}, we set the default value of $\zeta$
as $T_\post^{1/4}\hat\sigma$, where $\hat\sigma$ is the standard
deviation of all elements of the first-differenced donor matrix
$D_1 X_\pre$. This is the single-treated-unit specialization of the
formula
$\zeta = (N_{\mathrm{tr}}T_\post)^{1/4}\hat\sigma$ used for multiple
treated units.} The first term penalizes the discrepancy between the
treated unit and the donor-weighted combination after removing the
latent smooth component $E$; the second penalizes the roughness of $E$
through the $q$th-difference penalty $\|D_q E\|_2^2 = E'K_q E$; the
third is a ridge penalty that helps stabilize the donor weights.
\end{definition}

$D_q$ determines what counts as a smooth component.
When $q=1$, the penalty
$\|D_1E\|_2^2=\sum_{t=1}^{T_0-1}(E_{t+1}-E_t)^2$ penalizes local
changes in level, so smoother paths are those that vary less from one
period to the next.
Constant vectors, which are in the null space of $K_1$, receive no penalty. When $q=2$, the penalty
$\|D_2E\|_2^2=\sum_{t=1}^{T_0-2}(E_{t+2}-2E_{t+1}+E_t)^2$ penalizes
local changes in slope, so smoother paths are those with less curvature. Intercept-plus-linear-trend components, which are in the null space of $K_2$, receive no penalty. Thus, for $q=1$ the smooth
component is encouraged to be locally flat in levels, whereas for $q=2$
it is encouraged to be locally linear in time.

The tuning parameter $\rho$ governs the relative incentives in the joint
optimization over $(\omega,E)$. Conditional on donor weights $\omega$, the optimizer chooses $E$ to balance fidelity to the residual
$Y_{\pre}-X_\pre\omega$ against the smoothness restriction imposed by
$\|D_qE\|_2^2$. When $\rho$ is small, the fit term
$\frac{1}{\rho}\|Y_{\pre}-X_\pre\omega-E\|_2^2$ receives relatively
large weight compared with the roughness penalty. Therefore, conditional
on a given $\omega$, the optimizer is more willing to let $E$ track a
larger and potentially rougher portion of the residual, where roughness
is measured by $D_q$. When $\rho$ is large, the roughness penalty
$\frac{1}{1-\rho}\|D_qE\|_2^2$ receives relatively large weight, so
conditional on a given $\omega$, the optimizer is forced to choose a
smoother $E$, where smoothness is in the sense measured by $D_q$.

Because $\omega$ and $E$ are chosen jointly, the tuning parameter $\rho$ does not act on a fixed pre-treatment discrepancy. Instead, it determines how the joint optimizer splits the treated series $Y_\pre$ between the donor-matched component $X_\pre\hat\omega$ and the treated-unit-specific smooth component $\hat{E}$, with larger $\rho$ forcing $\hat{E}$ to be smoother.

\subsection{Profiling and the HSC metric}
\label{subsec:profiling}

The joint formulation in Definition~\ref{def:hsc_interior} is useful
for intuition, but the estimator becomes more transparent and
computationally tractable after profiling out the treated-unit-specific
smooth component \(E\). This yields an equivalent weight-estimation
problem in which the pre-treatment residual is measured under a
\(\rho\)- and \(q\)-dependent metric.\footnote{We use the term
``metric'' informally throughout: \(W_{\rho,q}\) is symmetric
positive semidefinite (not positive definite), and it annihilates
\(\Null(K_q)\), so the quadratic form
\(r'W_{\rho,q}r\) is a seminorm rather than a norm.  Residual
components in \(\Null(K_q)\) contribute zero to the HSC criterion
and are handled separately through the smooth component
\(\hat E_\pre\); see the remark after
Proposition~\ref{prop:endpoints}.}

For any candidate donor weight vector
\(\omega\in\Delta_{N_0}\), define the pre-treatment discrepancy/residual as
\begin{equation}
r_\pre(\omega):=Y_{\pre}-X_\pre\omega.
\label{eq:r_pre}
\end{equation}
Fix \(\rho\in(0,1)\). For each \(\omega\), the inner minimization over
\(E\) in~\eqref{eq:hsc_primal} is a strictly convex quadratic program:
\[
\min_{E\in\R^{T_0}}
\left\{
\frac{1}{\rho}\|r_\pre(\omega)-E\|_2^2
+\frac{1}{1-\rho}\|D_qE\|_2^2
\right\}.
\]
To express its solution, let
\begin{equation}
\lambda_\rho:=\frac{\rho}{1-\rho},
\qquad
S_{\rho,q}:=(I_{T_0}+\lambda_\rho K_q)^{-1}.
\label{eq:Srhoq_def}
\end{equation}

\begin{proposition}[Profiled representation]
\label{prop:hsc_profiled}
Fix \(q\in\{1,2\}\) and \(\rho\in(0,1)\). For every
\(\omega\in\Delta_{N_0}\), the inner problem in \(E\) has the unique
minimizer
\begin{equation}
\hat E_\pre(\omega;\rho,q)=S_{\rho,q}\,r_\pre(\omega).
\label{eq:Ehat_profile}
\end{equation}
Substituting this optimizer back into the criterion yields the profiled
objective
\begin{equation}
\hat\omega(\rho,q)
\;\in\;
\argmin_{\omega\in\Delta_{N_0}}
\left\{
r_\pre(\omega)'W_{\rho,q}r_\pre(\omega)
+\zeta^2T_0\|\omega\|_2^2
\right\},
\label{eq:hsc_profiled}
\end{equation}
where
\begin{equation}
W_{\rho,q}:=\frac{1}{\rho}(I_{T_0}-S_{\rho,q}),
\qquad
\rho\in(0,1).
\label{eq:W_rhoq}
\end{equation}
Equivalently, the fitted treated-unit-specific smooth component is
\begin{equation}
\hat E_\pre(\rho,q)
=
S_{\rho,q}\bigl(Y_{\pre}-X_\pre\hat\omega(\rho,q)\bigr).
\label{eq:Ehat_profiled}
\end{equation}
\end{proposition}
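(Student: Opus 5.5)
The plan is to fix $\omega$ and solve the inner problem in closed form, then substitute back and simplify the minimized value. Write $r=r_\pre(\omega)$ and $g(E):=\frac{1}{\rho}\|r-E\|_2^2+\frac{1}{1-\rho}E'K_qE$.

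\textbf{Step 1 (inner minimizer).} The Hessian of $g$ is $\frac{2}{\rho}I_{T_0}+\frac{2}{1-\rho}K_q$. Since $K_q\succeq 0$, this is positive definite, so $g$ is strictly convex and has a unique minimizer. Setting the gradient to zero gives $-\frac{1}{\rho}(r-E)+\frac{1}{1-\rho}K_qE=0$. Multiplying by $\rho$ yields $(I+\lambda_\rho K_q)E=r$. The matrix $I+\lambda_\rho K_q$ has all eigenvalues at least $1$, so it is invertible, and $\hat E=S_{\rho,q}r$. This is \eqref{eq:Ehat_profile}.

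\textbf{Step 2 (profiled value).} I would substitute $\hat E=Sr$ with $S=S_{\rho,q}$, using that $S$ is symmetric and commutes with $K_q$. First, $r-\hat E=(I-S)r$. Second, from the defining relation $(I+\lambda_\rho K_q)S=I$ we get $\lambda_\rho K_qS=I-S$.

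For the roughness term,
\[
\frac{1}{1-\rho}\hat E'K_q\hat E=\frac{1}{1-\rho}r'SK_qSr=\frac{1}{\rho}r'S\,\lambda_\rho K_qS\,r=\frac{1}{\rho}r'S(I-S)r.
\]
For the fit term, $\frac{1}{\rho}r'(I-S)^2r$. Adding the two,
\[
\frac{1}{\rho}r'(I-S)\bigl[(I-S)+S\bigr]r=\frac{1}{\rho}r'(I-S)r=r'W_{\rho,q}r.
\]
The main risk is algebraic slippage in this step. To guard against it, I would cross-check the result in the eigenbasis of $K_q$: for an eigenvalue $\mu$, $s=1/(1+\lambda_\rho\mu)$, and the value should equal $\frac{\lambda_\rho\mu}{\rho(1+\lambda_\rho\mu)}$ times the squared coordinate.

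\textbf{Step 3 (outer problem).} The ridge term $\zeta^2T_0\|\omega\|_2^2$ does not depend on $E$. Therefore $\min_{\omega,E}$ of \eqref{eq:hsc_primal} equals $\min_\omega\bigl[\min_E g(E)+\zeta^2T_0\|\omega\|_2^2\bigr]$, which is \eqref{eq:hsc_profiled}.

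\textbf{Step 4 (correspondence of minimizers).} It remains to check that the two problems have the same solutions. If $(\hat\omega,\hat E)$ is a joint minimizer, then $\hat\omega$ minimizes the profiled criterion and $\hat E$ must be the unique inner minimizer $S_{\rho,q}r_\pre(\hat\omega)$. Conversely, any profiled minimizer paired with $S_{\rho,q}r_\pre(\hat\omega)$ is a joint minimizer. This gives \eqref{eq:Ehat_profiled}.

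Existence of a minimizer is not an issue. The profiled criterion is continuous, and $\Delta_{N_0}$ is compact. The ridge term with $\zeta>0$ together with $W_{\rho,q}\succeq 0$ makes the criterion strictly convex in $\omega$, so the profiled minimizer is in fact unique.
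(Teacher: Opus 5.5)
Your proof is correct and follows essentially the same route as the paper: strict convexity and the first-order condition give $\hat E_\pre = S_{\rho,q} r$, and the identity $I_{T_0}-S_{\rho,q}=\lambda_\rho K_q S_{\rho,q}$ reduces the minimized inner value to $r'W_{\rho,q}r$. The only difference is cosmetic. The paper uses the first-order condition to rewrite the fit term, obtaining $\frac{1}{1-\rho}r'K_qS_{\rho,q}r$ first, whereas you expand both terms directly; your explicit check that joint and profiled minimizers correspond, and your uniqueness remark from $\zeta>0$, are slightly more careful than the paper's one-line Step 3.
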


The proof is deferred to Appendix~\ref{app:proofs_sec3}. Proposition~\ref{prop:hsc_profiled} shows that donor weight estimation in HSC can be understood through the metric $W_{\rho,q}$, which re-weights the pre-treatment discrepancy $r_\pre$ in a standard ridge-regularized quadratic program on the simplex. The operator $S_{\rho,q}$ acts as a smoother that extracts the smooth part of $\hat r_\pre$ as the smooth component $\hat E_\pre$.

We now show that this family of metrics extends continuously to the boundary cases $\rho=0$ and $\rho=1$. Define
\begin{align}
S_{0,q} &:= I_{T_0}, & S_{1,q} &:= P_{0,q}, \label{eq:S_boundary} \\
W_{0,q} &:= K_q,    & W_{1,q} &:= I_{T_0}-P_{0,q}, \label{eq:W_boundary}
\end{align}
where $P_{0,q}$ denotes the orthogonal projector onto $\Null(K_q)$.

\begin{proposition}[Continuous extension of the HSC metric]
\label{prop:endpoints}
For each fixed \(q\in\{1,2\}\):
\begin{enumerate}[label=(\roman*)]
\item The interior families satisfy
\[
\lim_{\rho\downarrow 0}S_{\rho,q}=I_{T_0},
\qquad
\lim_{\rho\uparrow 1}S_{\rho,q}=P_{0,q},
\]
and
\[
\lim_{\rho\downarrow 0}W_{\rho,q}=K_q,
\qquad
\lim_{\rho\uparrow 1}W_{\rho,q}=I_{T_0}-P_{0,q}.
\]
Hence the boundary definitions~\eqref{eq:S_boundary}
and~\eqref{eq:W_boundary} are the unique continuous extensions of
\(\{S_{\rho,q}\}\) and \(\{W_{\rho,q}\}\) from \((0,1)\) to
\([0,1]\).
\item \(W_{\rho,q}\) is symmetric positive semidefinite for every
\(\rho\in[0,1]\).
\end{enumerate}
\end{proposition}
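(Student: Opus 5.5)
The plan is to diagonalize everything simultaneously. $K_q=D_q'D_q$ is real symmetric positive semidefinite, so write $K_q=U\,\mathrm{diag}(\mu_1,\dots,\mu_{T_0})\,U'$ with $U$ orthogonal and $\mu_j\ge 0$. The eigenvectors with $\mu_j=0$ span $\Null(K_q)$, so $P_{0,q}=U\,\mathrm{diag}(\1\{\mu_j=0\})\,U'$. Then $S_{\rho,q}$ and $W_{\rho,q}$ are spectral functions of $K_q$, with eigenvalues $s_\rho(\mu)=1/(1+\lambda_\rho\mu)$ and
\[
w_\rho(\mu)=\frac{1}{\rho}\Bigl(1-\frac{1}{1+\lambda_\rho\mu}\Bigr)=\frac{\lambda_\rho\mu}{\rho(1+\lambda_\rho\mu)}=\frac{\mu}{(1-\rho)+\rho\mu},
\]
using $\lambda_\rho/\rho=1/(1-\rho)$. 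Since $U$ is fixed and $T_0$ is finite, convergence of the matrices is equivalent to convergence of these finitely many scalar functions, so both claims reduce to scalar statements.

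\emph{Limits.} As $\rho\downarrow0$, $\lambda_\rho\to0$, so $s_\rho(\mu)\to1$ for every $\mu$, giving $S_{\rho,q}\to I_{T_0}$. Also $w_\rho(\mu)\to\mu$, giving $W_{\rho,q}\to K_q$. As $\rho\uparrow1$, $\lambda_\rho\to\infty$, so $s_\rho(\mu)\to\1\{\mu=0\}$, giving $S_{\rho,q}\to P_{0,q}$. For $W$, the closed form gives $w_\rho(\mu)\to \mu/\mu=1$ when $\mu>0$, while $w_\rho(0)=0$ for every $\rho$. Hence $W_{\rho,q}\to I_{T_0}-P_{0,q}$.

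\emph{Main obstacle.} The only delicate point is this $\rho\uparrow1$ limit for $W$. The naive form $\rho^{-1}(I-S_{\rho,q})$ involves $\lambda_\rho\to\infty$, and one must check that the zero eigenvalues stay exactly zero, not merely small, for all $\rho$. The simplified form $\mu/((1-\rho)+\rho\mu)$ settles this, because its denominator is positive on $[0,1]$ whenever $\mu>0$, and the expression vanishes identically when $\mu=0$.

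\emph{Uniqueness and PSD.} The same formula shows that $\rho\mapsto w_\rho(\mu)$ is continuous on all of $[0,1]$, and so is $s_\rho$, away from the $\mu=0$, $\rho=1$ corner, where it equals 1 constantly. The extensions \eqref{eq:S_boundary}--\eqref{eq:W_boundary} are therefore continuous, and uniqueness follows because limits in the finite-dimensional matrix space are unique. For (ii), every $W_{\rho,q}$, including the endpoints, equals $U\,\mathrm{diag}(w_\rho(\mu_j))\,U'$ with $w_\rho(\mu_j)\ge0$. It is therefore symmetric positive semidefinite. At the endpoints one can check this directly: $K_q$ is a Gram matrix, and $I-P_{0,q}$ is an orthogonal projector.
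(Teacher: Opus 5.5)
Your proposal is correct and follows essentially the same route as the paper's proof: you diagonalize $K_q$ in an orthonormal eigenbasis, reduce $S_{\rho,q}$ and $W_{\rho,q}$ to the scalar gains $1/(1+\lambda_\rho\mu)$ and $\mu/((1-\rho)+\rho\mu)$, take eigenvalue-wise limits (null-space gains fixed at $1$ and $0$), and read off positive semidefiniteness from the nonnegative gains, checking $K_q=D_q'D_q$ and the projector $I_{T_0}-P_{0,q}$ directly at the endpoints. One cosmetic point: the closed form for $w_\rho(0)$ is $0/0$ at $\rho=1$, so ``$w_\rho(0)=0$ for every $\rho$'' should be read for $\rho\in[0,1)$, with the endpoint value supplied by the definition $W_{1,q}=I_{T_0}-P_{0,q}$.
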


The proof is deferred to Appendix~\ref{app:proofs_sec3}. The key
implication of Proposition~\ref{prop:endpoints} is that the HSC weight
problem extends continuously from the interior \(\rho\in(0,1)\) to the
endpoint cases \(\rho=0\) and \(\rho=1\), where the two boundary values
correspond to familiar special cases.

At \(\rho=0\), \(S_{0,q}=I_{T_0}\) means the smoother assigns the
entire residual to the treated-unit-specific ``smooth'' component;
correspondingly, \(W_{0,q}=K_q=D_q'D_q\), so the metric measures only the
\(q\)th-order roughness of the residual:
\[
r_\pre(\omega)'W_{0,q}r_\pre(\omega)
=
\|D_qr_\pre(\omega)\|_2^2.
\]
Hence the profiled objective becomes
\begin{equation}
\hat\omega(0,q)
\;\in\;
\argmin_{\omega\in\Delta_{N_0}}
\left\{
\|D_qY_{\pre}-(D_qX_\pre)\omega\|_2^2
+\zeta^2T_0\|\omega\|_2^2
\right\},
\label{eq:endpoint_rho0}
\end{equation}
that is, synthetic control applied to the \(q\)th-differenced outcomes
with a ridge penalty.

At \(\rho=1\), because \(W_{1,q}=I_{T_0}-P_{0,q}\), the profiled
criterion penalizes only the component of the residual orthogonal to
\(\Null(K_q)\). Using the projection identity
\(r'(I_{T_0}-P_{0,q})r=\min_{\gamma}\|r-Z_{0,q}\gamma\|_2^2\), where
\(Z_{0,q}\) is a basis of \(\Null(K_q)\), the profiled objective at
\(\rho=1\) is equivalent to the simultaneous least-squares fit of donor
weights and unregularized null-space coefficients \(\gamma\). Thus, for
\(q=1\), since \(\Null(K_1)=\mathrm{span}\{\mathbf 1_{T_0}\}\), the
endpoint estimator is synthetic control with an intercept; for
\(q=2\), since \(\Null(K_2)=\mathrm{span}\{\mathbf 1_{T_0},t_\pre\}\),
the endpoint estimator is synthetic control with an intercept and a
linear trend in time.

By Proposition~\ref{prop:endpoints}, null-space components of the residual are not tuned by $\rho$. For every $v \in \Null(K_q)$ and every $\rho \in (0,1)$, $(I_{T_0} + \lambda_\rho K_q)v = v$, so $S_{\rho,q}v = v$ and $W_{\rho,q}v = 0$; the same identities hold at $\rho \in \{0,1\}$ by definitions. Thus, a constant component of the residual (for $q=1$) or any intercept-plus-linear-trend component (for $q=2$) is always absorbed entirely into $\hat E_\pre(\rho,q)$ and contributes nothing to the donor-matching criterion. The tuning parameter $\rho$ reallocates only the residual components outside $\Null(K_q)$ between donor matching and the smooth component; null-space components are always assigned to the smooth branch.

With these properties at $\rho=0$ and $\rho=1$, we can extend the
definition of the HSC weight estimation to $\rho \in [0,1]$.

\begin{definition}[Harmonic Synthetic Control,
\texorpdfstring{$\rho\in[0,1]$}{rho in [0,1]}]
\label{def:hsc}
For \(\rho\in[0,1]\) and \(q\in\{1,2\}\), define the HSC weight
estimator by
\begin{equation}
\hat\omega(\rho,q)
\;\in\;
\argmin_{\omega\in\Delta_{N_0}}
\left\{
r_\pre(\omega)'W_{\rho,q}r_\pre(\omega)
+\zeta^2T_0\|\omega\|_2^2
\right\},
\label{eq:hsc_extended}
\end{equation}
where \(W_{\rho,q}\) is the continuously extended metric family from
Proposition~\ref{prop:endpoints}. The associated treated-unit-specific
smooth component is
\begin{equation}
\hat E_\pre(\rho,q):=S_{\rho,q}\,\hat r_\pre(\rho,q),
\qquad
\hat r_\pre(\rho,q):=Y_{\pre}-X_\pre\hat\omega(\rho,q),
\label{eq:hsc_E_extended}
\end{equation}
with \(S_{\rho,q}\) continuously extended to \([0,1]\) as in
Proposition~\ref{prop:endpoints}.
\end{definition}

This profiled Definition~\ref{def:hsc} is the HSC weight estimator used throughout the remainder of the paper. Section~\ref{sec:spectral} discusses the interpretation of this metric from the spectral perspective and shows how $W_{\rho,q}$ amplifies or downweights components at different frequencies.

\subsection{Forecast operator and the HSC counterfactual}
\label{subsec:counterfactual}

The previous subsection develops the HSC weight estimator $\hat\omega(\rho,q)$ and shows that it arises from minimizing the pre-treatment residual norm under the metric $W_{\rho,q}$. However, the donor weights alone do not define the full counterfactual. The profiled representation produces a smooth component $\hat E_\pre(\rho,q)$, which captures the smooth part of the donor matching residual. To construct a post-treatment counterfactual, this smooth component is extrapolated to post-treatment periods. This term serves a similar function to the bias-correction term in augmented synthetic control \citep{ben2021augmented}. 

To see why a forecasting step is needed, consider the decomposition implied by HSC. For each $(\rho,q)$, the pre-treatment residual
$\hat r_\pre(\rho,q):=Y_{\pre}-X_\pre\hat\omega(\rho,q)$ is split by
the smoother $S_{\rho,q}$ into two components:
\[
\hat r_\pre(\rho,q)
=
\underbrace{S_{\rho,q}\hat r_\pre(\rho,q)}_{\displaystyle\hat E_\pre(\rho,q)}
\;+\;
\underbrace{(I_{T_0}-S_{\rho,q})\hat r_\pre(\rho,q)}_{\displaystyle\hat u_\pre(\rho,q)}.
\]
The first term, $\hat E_\pre(\rho,q)$, is the treated-unit-specific smooth component: the smooth part of the residual that cannot be explained by the donors. The second term, $\hat u_\pre(\rho,q)$, is the rough remainder, containing the higher-roughness components of the residual. Note that any component of the residual in $\Null(K_q)$ (a constant for $q=1$, or an intercept-plus-linear-trend component for $q=2$) is contained in $\hat E_\pre(\rho,q)$.

The treated unit's pre-treatment outcome therefore admits the three-part decomposition. 
\begin{equation}
Y_{\pre}
=
\underbrace{X_\pre\hat\omega(\rho,q)}_{\text{donor-matched}}
\;+\;
\underbrace{\hat E_\pre(\rho,q)}_{\text{smooth}}
\;+\;
\underbrace{\hat u_\pre(\rho,q)}_{\text{rough remainder}}.
\label{eq:nested_decomp}
\end{equation}
In the post-treatment period, the donor-matched component extends to $X_\post\hat\omega(\rho,q)$ using observed donor outcomes. The rough remainder is the part of the pre-treatment residual that is suppressed by the smoother $S_{\rho,q}$; it is treated as noise and is not carried forward. The smooth component, however, represents a treated-unit-specific component that, if present in the pre-treatment period, is likely to persist. Discarding it would leave a predictable bias in the counterfactual. This motivates the introduction of a forecast operator to extrapolate $\hat E_\pre(\rho,q)$ into post-treatment periods.

Let $G_q\in\R^{T_\post\times T_0}$ be a deterministic linear forecast
operator. The HSC counterfactual is defined as
\begin{equation}
\hat Y_\post(0;\rho,q)
:=
X_\post\hat\omega(\rho,q)+G_q\hat E_\pre(\rho,q).
\label{eq:hsc_counterfactual}
\end{equation}

The counterfactual thus combines two branches: a donor-matching component, $X_\post\hat\omega(\rho,q)$, that uses cross-sectional information to predict the treated unit's counterfactual, and a time series forecasting component, $G_q\hat E_\pre(\rho,q)$, that uses temporal information in the pre-treatment residual to extrapolate the smooth component. Writing
$\Pi_{\rho,q}:=G_qS_{\rho,q}\in\R^{T_\post\times T_0}$ for the
composed forecast-smoother, the counterfactual can also be expressed as
\begin{equation}
\hat Y_\post(0;\rho,q)
=
X_\post\hat\omega(\rho,q)+\Pi_{\rho,q}\hat r_\pre(\rho,q).
\label{eq:hsc_counterfactual_alt}
\end{equation}

The forecast operator $G_q$ is treated as a fixed linear map throughout
the analysis of donor weights and counterfactual construction. In practice, $G_q$ may be estimated
from the pre-treatment data.
We impose one structural requirement on $G_q$.

\begin{definition}[Admissible forecast operator]
\label{def:Gq}
A linear operator $G_q:\R^{T_0}\to\R^{T_\post}$ is an
\emph{admissible forecast operator of order $q$} if, for every
polynomial $p$ of degree less than $q$,
\begin{equation}
G_q\,\bigl(p(t)\bigr)_{t=1}^{T_0}
\;=\;
\bigl(p(t)\bigr)_{t=T_0+1}^{T_0+T_\post}.
\label{eq:Gq_continuation}
\end{equation}
\end{definition}

For $q=1$, the requirement reduces to
$G_1\,\mathbf{1}_{T_0} = \mathbf{1}_{T_\post}$: constants are
continued as constants. For $q=2$, it adds
$G_2\,(1,2,\ldots,T_0)' = (T_0+1,\ldots,T_0+T_\post)'$: linear
trends are continued as linear trends. These are the only cases
we use in what follows.

Definition~\ref{def:Gq} ensures that the forecast operator
extrapolates the components which the roughness penalty leaves
unpenalized, namely those in $\Null(K_q)$, to the post-treatment
period without distortion. The components that pass mechanically
into $\hat E$ (constants for $q=1$; constants and linear trends
for $q=2$) are thereby continued unchanged. Since $S_{\rho,q}$
also leaves $\Null(K_q)$ untouched for every $\rho\in[0,1]$, the
composed operator $\Pi_{\rho,q} = G_q S_{\rho,q}$ inherits the
same continuation property whenever $G_q$ is admissible.

Definition~\ref{def:Gq} is a design requirement on the forecast
operator, not a property that generic forecasters automatically
satisfy. The simplest admissible operators are constant
extrapolation (for $q=1$) and linear extrapolation (for $q=2$).
Constant extrapolation defines
$(G_1^{\mathrm{const}}x)_h := x_{T_0}$ for $h=1,\dots,T_\post$:
each post-treatment period receives the last pre-treatment
value. Then
$G_1^{\mathrm{const}}\mathbf{1}_{T_0}=\mathbf{1}_{T_\post}$.
Linear extrapolation defines
$(G_2^{\mathrm{lin}}x)_h := x_{T_0}+h\cdot(x_{T_0}-x_{T_0-1})$
for $h=1,\dots,T_\post$. Then
$G_2^{\mathrm{lin}}\mathbf{1}_{T_0}=\mathbf{1}_{T_\post}$ and
$G_2^{\mathrm{lin}}t_\pre=t_\post$.

In practice, one may wish to use a richer forecasting model, for
example an autoregressive or ARIMA-type specification, to
extrapolate $\hat E_\pre(\rho,q)$. Because such procedures
involve parameter estimation, the resulting forecast operator
$\hat G_q$ is data-dependent and need not be admissible on its
own. A convenient way to enforce the requirement is to separate
the null-space and non-null-space components of the input.

Recall that $P_{0,q}$ denotes the orthogonal projector onto
$\Null(K_q)$, and $P_{\perp,q} := I_{T_0} - P_{0,q}$.
Definition~\ref{def:Gq} constrains a forecast operator only
through its action on $\Null(K_q)$. Let $G_q^{\mathrm{null}}$
denote the unique admissible action on the null space: it has
the property that, for $v\in\Null(K_q)$ equal to
$\bigl(p(t)\bigr)_{t=1}^{T_0}$ for the unique polynomial $p$ of
degree less than $q$,
$G_q^{\mathrm{null}} v = \bigl(p(t)\bigr)_{t=T_0+1}^{T_0+T_\post}$. Construct the forecast operator
\[
\widetilde G_q
:=
G_q^{\mathrm{null}} P_{0,q}
+
\hat G_q P_{\perp,q}.
\]
For every $z \in \Null(K_q)$,
$\widetilde G_q z = G_q^{\mathrm{null}} z$, so
Definition~\ref{def:Gq} holds by construction. For every
$z \in \Null(K_q)^\perp$, $\widetilde G_q z = \hat G_q z$, so the
data-driven forecaster retains full flexibility on the
$\Null(K_q)^{\perp}$ component. For components in $\Null(K_q)$, $G_q^{\mathrm{null}}$ simply
lets the constant ($q=1$) or the linear trend ($q=2$) persist
into the post-treatment periods. These are the carry-forward
$G_1^{\mathrm{const}}$ and the two-point linear extension
$G_2^{\mathrm{lin}}$ defined above, respectively.

The tuning parameter $\rho$ determines the content of the smooth
component $\hat E_\pre(\rho,q)$ and therefore what the forecast
operator $G_q$ is asked to extrapolate. At $\rho=1$, the smoother
$S_{1,q} = P_{0,q}$ extracts only the null-space component of
the residual, and $G_q$ continues these unpenalized components
into the post-treatment period. For interior values
$\rho \in (0,1)$, the smooth component absorbs additional
variation beyond the null space, and $G_q$ extrapolates these
additional components. At $\rho=0$, the smoother
$S_{0,q} = I_{T_0}$ assigns the entire residual to
$\hat E_\pre$, so $G_q$ carries the full pre-treatment
discrepancy forward. The choice of $\rho$ thus calibrates the
burden on $G_q$: minimal at $\rho=1$, where extrapolation
amounts to leaving null-space components unchanged, and maximal
at $\rho=0$, where the entire residual must be forecast.

\section{Spectral Interpretation and Tuning}
\label{sec:spectral}

The previous section defined the HSC estimator and derived its profiled representation. We now develop the complementary perspective on this allocation. The metric $W_{\rho,q}$ acts as a frequency-dependent gain function, so $\rho$ determines which frequency components of $Y_\pre$ enter the donor-matching criterion and which are diverted to the time series forecaster. We use cross-validation to select $\rho$ by minimizing out-of-sample prediction error. We then illustrate the resulting adaptation using the simulated data introduced in Section~\ref{sec:tradeoff}.

\subsection{Spectral decomposition of the HSC metric}\label{subsec:spectral}

The previous section showed that $\rho$ controls a smooth
allocation of $Y_\pre$ between donor matching and the
time series forecaster. We now examine the complementary
question: how does $\rho$ shape the donor-matching criterion?
The spectral decomposition of $W_{\rho,q}$ provides a precise
answer and reveals that the key difference between HSC and
existing synthetic control methods lies in how each method
weights different frequency components in the optimization
criterion.

Since the penalty matrix $K_q = D_q^{\top} D_q$ is symmetric
positive semidefinite, $\mathbb{R}^{T_0}$ admits an orthonormal
basis of eigenvectors of $K_q$, denoted
$v_{1,q}, \ldots, v_{T_0,q}$, with corresponding eigenvalues
$0 \leq \mu_{1,q} \leq \cdots \leq \mu_{T_0,q}$. Any vector of
length $T_0$ can be written as a linear combination of these
eigenvectors. Each eigenvalue $\mu_{j,q}$ measures the
roughness of the corresponding basis function as assessed by
the $q$th-difference penalty:
$\mu_{j,q} = v_{j,q}' K_q v_{j,q} = \|D_q v_{j,q}\|_2^2$. The
first $q$ eigenvectors span the null space $\Null(K_q)$,
corresponding to constants when $q=1$ and constants and linear
trends when $q=2$, with $\mu_{j,q}=0$. As the index $j$
increases beyond $q$, the eigenvectors oscillate progressively
more rapidly and $\mu_{j,q}$ grows.\footnote{For $q=1$, the
eigenvalues admit the closed form
$\mu_{j,1} = 4\sin^2\!\bigl((j-1)\pi/(2T_0)\bigr)$ for
$j=1,\dots,T_0$. For $q=2$, the eigenvalue spectrum grows more
steeply, as illustrated in Figure~\ref{fig:spectral}(a).}
Small $\mu_{j,q}$ corresponds to low-frequency variation, such
as slow-moving trends and long cycles, while large $\mu_{j,q}$
corresponds to high-frequency variation, including rapid,
short-run oscillations.

For a residual vector $r_{\pre}(\omega)$, define its spectral
coordinates by $\tilde r_j(\omega) := v_{j,q}' r_{\pre}(\omega)$,
the coefficient on the $j$-th eigenvector. One can thus write
$r_{\pre}(\omega) = \sum_{j=1}^{T_0} \tilde r_j(\omega) v_{j,q}$.
In these coordinates, the smoother $S_{\rho,q}$ and the profiled
metric $W_{\rho,q}$ each act componentwise through scalar
functions of the eigenvalue $\mu$:
\begin{equation}
S_{\rho,q} r_{\pre}(\omega)
=
\sum_{j=1}^{T_0}
\underbrace{\frac{1-\rho}{(1-\rho)+\rho\mu_{j,q}}}_{=:~s_q(\mu_{j,q};\,\rho)}
\tilde r_j(\omega)\,v_{j,q},
\label{eq:hsc_spectral_s}
\end{equation}
\begin{equation}
r_{\pre}(\omega)' W_{\rho,q} r_{\pre}(\omega)
=
\sum_{j=1}^{T_0}
\underbrace{\frac{\mu_{j,q}}{(1-\rho)+\rho\mu_{j,q}}}_{=:~w_q(\mu_{j,q};\,\rho)}
\tilde r_j(\omega)^2.
\label{eq:hsc_spectral_w}
\end{equation}

The shrinkage function $s_q(\mu;\rho)$ is decreasing in $\mu$:
smoother components (small $\mu$) in $r_{\pre}(\omega)$ are less
shrunk toward zero than rougher components (larger $\mu$).
Components in $\Null(K_q)$ ($\mu=0$) survive intact. The weight
function $w_q(\mu;\rho)$ is increasing in $\mu$: rougher
components in $r_{\pre}(\omega)$ receive more emphasis in the
donor-matching problem. Components in $\Null(K_q)$ ($\mu=0$)
are completely excluded from the donor-matching problem. From
this spectral perspective, HSC therefore routes low-frequency
discrepancy primarily to the time series branch and
high-frequency discrepancy primarily to the donor-matching
branch.

Figure~\ref{fig:spectral} displays these two functions and their
consequences for the simulated pre-treatment series introduced in Section~\ref{sec:tradeoff}, with $q=1$ on
the left and $q=2$ on the right. We highlight three
features of the spectral decomposition.

First, Panel~(a) plots the shrinkage function $s_q(\mu;\rho)$
for five values of $\rho$. This function determines how much of
each spectral coordinate of the pre-treatment residual is
retained in the smooth component $\hat E_\pre(\rho,q)$. For
$\rho \in (0,1)$, each curve equals one at $\mu=0$ (null-space
components are fully retained) and decays smoothly toward zero
as $\mu$ increases, with smaller $\rho$ producing slower decay.
At $\rho=0$, the shrinkage function equals one everywhere: the
entire residual is assigned to $\hat E_\pre$ and extrapolated by
$G_q$. At $\rho=1$, the function collapses to the null-space
indicator, $s_q(\mu;1) = \1\{\mu=0\}$. Intermediate $\rho$ values interpolate smoothly
between these extremes.

Second, Panel~(b) plots the complementary weight function
$w_q(\mu;\rho)$, which determines how much of each spectral
coordinate enters the donor-matching
criterion~\eqref{eq:hsc_profiled}. Every curve passes through
$(\mu,w) = (1,1)$: components with $\mu > 1$ are amplified
relative to uniform weighting, components with $\mu < 1$ are
shrunk. At $\rho=0$, $w_q(\mu;0) = \mu$ (the 45-degree line), so the profiled metric reduces to $K_q$ and
donors are matched entirely in $q$th differences. At $\rho=1$,
$w_q(\mu;1) = \1\{\mu>0\}$ (the horizontal line), so all
components outside $\Null(K_q)$ are retained intact in the donor-matching criterion, while the
null-space directions are absorbed by $\hat E_\pre$. The $\rho=1$ endpoint thus coincides
with SC with intercept at $q=1$ and
with SC with intercept-plus-linear-trend at $q=2$.

Third, the eigenvalue ranges of $K_1$ and $K_2$ differ
substantially. For $q=1$, the eigenvalues of $K_1$ lie in
$[0,4]$ when $T_0=80$, whereas for $q=2$ the eigenvalues of
$K_2$ span a much wider range (up to roughly $16$). Under the
same $\rho$, the wider spectrum therefore yields more extreme
amplification of rough components in the sense of $K_2$ than in
the sense of $K_1$ in the donor-matching criterion.

\begin{figure}[!h]
\centering
\includegraphics[width=0.9\textwidth]{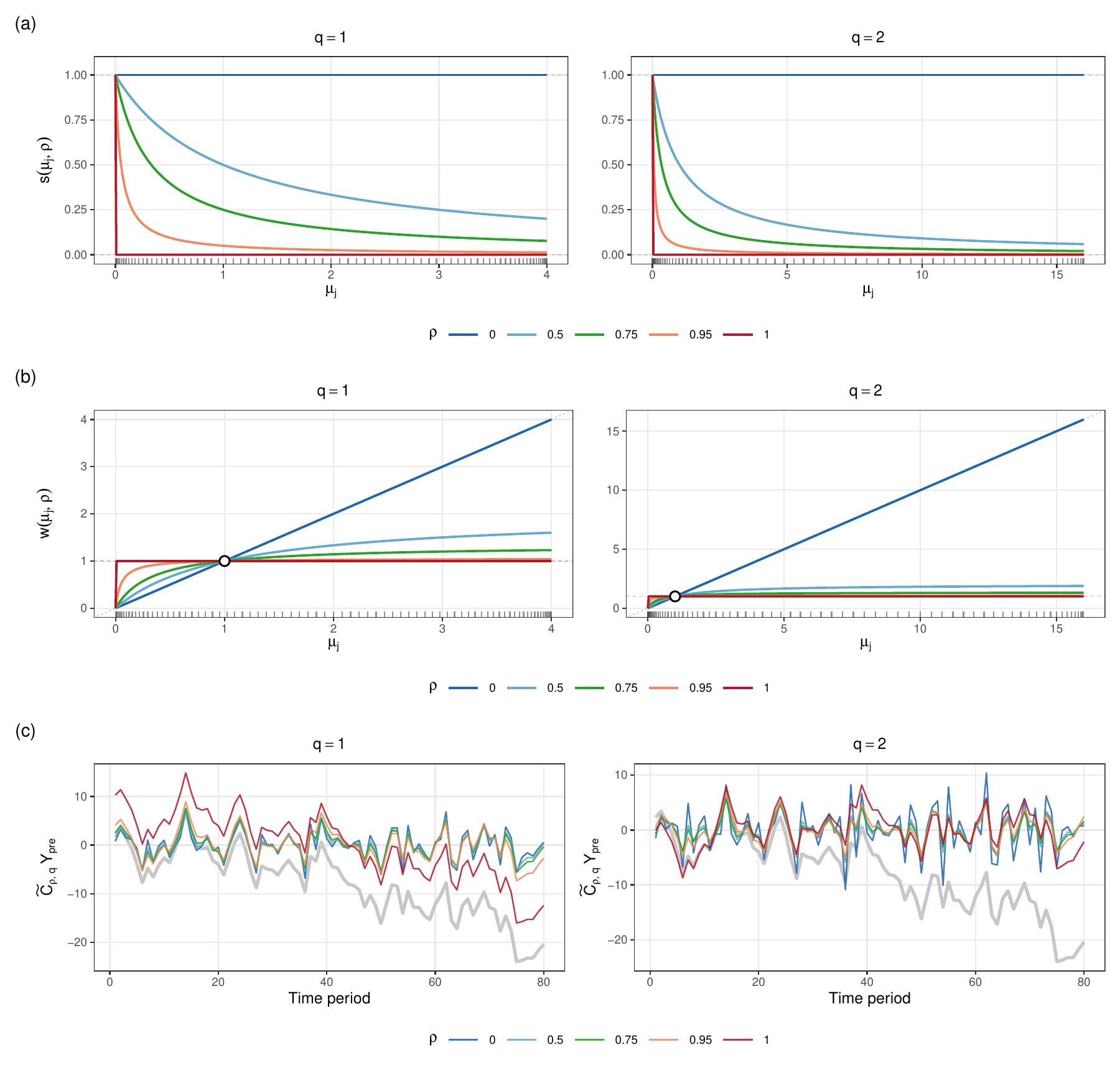}
\caption{Spectral interpretation of the HSC operators $S_{\rho,q}$ and
$W_{\rho,q}$.}
\label{fig:spectral}
\vspace{0.5em}
\begin{minipage}[]{\textwidth}
{\footnotesize\textbf{Note:} This figure illustrates the spectral
reweighting mechanism of HSC with $T_0=80$. Panel~(a) plots the
shrinkage function
$s_q(\mu;\rho)=(1-\rho)/\bigl((1-\rho)+\rho\mu\bigr)$, which
determines how much of each spectral coordinate is retained in the
smooth component $\hat E_\pre$. Panel~(b) plots the weight
function $w_q(\mu;\rho)=\mu/\bigl((1-\rho)+\rho\mu\bigr)$, which
determines how much of each spectral coordinate enters the
donor-matching criterion. Both panels display five values
$\rho\in\{0,\,0.5,\,0.75,\,0.95,\,1\}$, with $q=1$ (left) and $q=2$
(right). Rug marks along the horizontal axis show the positive
eigenvalues of $K_q$. In panel~(b), the open circle marks the fixed
point $(\mu,w)=(1,1)$ through which all curves pass; the dashed line at
$w=1$ corresponds to null-space-projected weighting ($\rho=1$); the dotted
45-degree line corresponds to pure differencing ($\rho=0$). Panel~(c)
displays the spectrally transformed pre-treatment series
$\widetilde C_{\rho,q}\,Y_\pre$ for the same five values of $\rho$,
with $q=1$ (left) and $q=2$ (right). The grey line shows the raw
pre-treatment series $Y_\pre$ of the treated unit from the
shared + idiosyncratic stochastic trend regime ($\kappa=2$) of Figure~\ref{fig:tradeoff}. At $\rho=1$, the
transformation removes the null-space component (mean for $q=1$; mean
and linear trend for $q=2$) and weights all remaining directions
equally. At $\rho=0$, the transformation applies $K_q^{1/2}$,
amplifying high-roughness directions; this effect is particularly
visible for $q=2$, where the wider eigenvalue range of $K_2$ produces a
noticeably rougher transformed series. Intermediate values of $\rho$
blend these effects, progressively attenuating trends while preserving
moderate-scale variation.\par}
\end{minipage}
\end{figure}

Because $W_{\rho,q}$ is symmetric positive semidefinite, it
admits a symmetric square root
$\widetilde C_{\rho,q} := W_{\rho,q}^{1/2}
= V_q \diag\!\bigl(\sqrt{w_q(\mu_{j,q};\rho)}\bigr) V_q'$,
where $V_q := (v_{1,q}, \ldots, v_{T_0,q})$ is the orthogonal
matrix of eigenvectors introduced above. The donor-matching problem
can equivalently be written as
\[
\hat\omega(\rho,q)
\in
\arg\min_{\omega \in \Delta_{N_0}}
\left\{
\|\widetilde C_{\rho,q}\,r_{\pre}(\omega)\|_2^2
+ \zeta^2 T_0 \|\omega\|_2^2
\right\}.
\]
The transformed series $\widetilde C_{\rho,q}\,Y_\pre$ makes the
effect of $\rho$ on the data visible. Panel~(c) of
Figure~\ref{fig:spectral} plots this transformation for the
treated unit from the shared + idiosyncratic stochastic trend regime ($\kappa=2$) of
Figure~\ref{fig:tradeoff}, at five values of $\rho$.

In the $q=1$ panel, at $\rho=1$, the transformation projects out the constant direction and retains all other components with
equal weight; the result resembles a demeaned version of
$Y_\pre$. At $\rho=0$, the transformation reduces to $K_1^{1/2}$,
which has the same quadratic form as the first-difference operator
($\|K_1^{1/2} r\|_2^2 = \|D_1 r\|_2^2$): slow trends
are strongly attenuated and period-to-period oscillations are
amplified. At $\rho \in \{0.5, 0.75, 0.95\}$, low-frequency
components are progressively attenuated while high-frequency
components are increasingly amplified. The $q=2$ panel exhibits the same pattern with sharper contrast.
At $\rho=1$, the transformation projects out both the constant
and linear-trend directions. At
$\rho=0$, it applies $K_2^{1/2}$, which has the same quadratic
form as the second-difference operator: the transformed series is visibly
rougher than the raw data because rapid oscillations that
contribute little to the original amplitude are greatly
amplified.

The name ``harmonic'' reflects the spectral structure of HSC.
The weight $w_q(\mu;\rho) = \mu/\bigl((1-\rho) + \rho\mu\bigr)$
admits the decomposition
\[
\frac{1}{w_q(\mu;\rho)}
\;=\;
(1-\rho)\cdot\frac{1}{\mu} \;+\; \rho\cdot 1,
\]
which identifies $w_q(\mu;\rho)$ as the weighted harmonic mean
of $\mu$ and $1$ with weights $(1-\rho)$ and $\rho$. The two
endpoints recover the arguments themselves: $w_q(\mu;0) = \mu$
and $w_q(\mu;1) = \1\{\mu>0\}$. This structure
mirrors the coefficients $1/\rho$ and $1/(1-\rho)$ on the
smoothness and residual terms in the primal
objective~\eqref{eq:hsc_primal}. Because the harmonic mean is
dominated by the smaller of its arguments, $w_q(\mu;\rho)$ is
small whenever either $\mu$ is small (a low-frequency,
near-null-space component) or $\rho$ is small (the user has
shifted emphasis toward $\hat E_\pre$), giving the metric its
characteristic soft cutoff.

\subsection{Selection of $\rho$}\label{subsec:rho_selection}

The central tuning decision in HSC is how to allocate
low-frequency components of the pre-treatment discrepancy
between donor matching and the smooth component $\hat E_\pre$.
We select $\rho$ by rolling-origin cross-validation using
\emph{only pre-treatment data of the treated unit}:\footnote{Each unit may carry its own idiosyncratic stochastic
trend, so the $\rho$ that minimizes prediction error on a
control unit need not be appropriate for the treated unit. We
therefore restrict cross-validation to the treated unit's own
pre-treatment history.} at each fold, HSC is fitted on a
shortened training window and used to predict held-out
pre-treatment outcomes, and the $\rho$ minimizing average
prediction error is selected. This mimics the forecasting task
the estimator faces at the treatment date.

The cross-validation procedure requires the researcher to
specify four inputs: the smoothness order $q \in \{1,2\}$, the
forecast operator $G_q$, a forecast horizon $h \ge 1$, and a
number of folds $L \ge 1$. The rolling forecast origins are
then $k_\ell = T_0 - h - L + \ell$ for $\ell = 1, \ldots, L$;
by construction, $k_L + h = T_0$, so all validation windows lie
within the pre-treatment period and no post-treatment data are
used. At each origin $k_\ell$, the training window is
$\{1,\dots,k_\ell\}$ and the validation window is
$\{k_\ell+1,\dots,k_\ell+h\}$. For each candidate $\rho$ and
each origin $k_\ell$, the procedure (i)~re-estimates the HSC
weights $\hat\omega(\rho,q)$ and the smooth component
$\hat E_\pre(\rho,q)$ on the training window, (ii)~refits any
data-dependent parameters of $G_q$ on the training window (for
instance, the coefficients of an ARIMA specification),
(iii)~constructs the counterfactual forecast
$\hat Y_{k_\ell+s}(0;\rho,q)$ for the validation window by
combining the donor-matched component with the
$G_q$-extrapolated smooth component, and (iv)~computes the
squared prediction errors against the actual outcomes
$Y_{k_\ell+1},\dots,Y_{k_\ell+h}$. Because all quantities are
re-estimated within each fold, the validation error is an
out-of-sample measure of the joint performance of donor
matching and time series extrapolation. The cross-validation
criterion is
\begin{equation}
\mathrm{CV}(\rho)
\;=\;
\frac{1}{Lh}\sum_{\ell=1}^{L}\sum_{s=1}^{h}
\bigl(Y_{k_\ell+s} - \hat Y_{k_\ell+s}(0;\rho,q)\bigr)^2,
\label{eq:cv}
\end{equation}
where $\hat Y_{k_\ell+s}(0;\rho,q)$ denotes the HSC
counterfactual prediction for period $k_\ell+s$ constructed
from the training window $\{1,\dots,k_\ell\}$.

One practical consideration in constructing the candidate grid
is worth noting. The equivalent smoothing parameter
$\lambda_\rho = \rho/(1-\rho)$ is a convex function of $\rho$
that increases slowly near $\rho=0$ but diverges as
$\rho \uparrow 1$. Consequently, a uniformly spaced grid in
$\rho$ induces a nonuniform grid in $\lambda_\rho$, with
increasingly coarse coverage on the $\lambda_\rho$ scale near
$\rho=1$. When the cross-validation curve exhibits dramatic change near $\rho=1$, a convenient alternative is to construct the
grid on a logarithmic scale in $\lambda_\rho$ and map back via
$\rho = \lambda_\rho/(1+\lambda_\rho)$, while retaining the
boundary $\rho=0$ and $\rho=1$ explicitly.

Beyond the choice of $\rho$, the same procedure can also be
used to select $(\rho, q)$ jointly by minimizing
$\mathrm{CV}(\rho, q)$ over the pair. If multiple forecast
operators are under consideration, the grid extends further to
$(\rho, q, G_q)$ triples, comparing, for example, constant
extrapolation against an ARIMA specification for each
$(\rho, q)$ combination. Joint selection remains computationally
inexpensive because the quadratic program for each
$(\rho, q, k_\ell)$ combination is fast to
solve.\footnote{Our implementation uses Gurobi.}

\subsection{An illustrative example}
\label{subsec:hsc_example}

We illustrate HSC on the two simulated regimes from
Section~\ref{sec:tradeoff}. Two configurations vary both the
smoothness order $q$ and the forecast operator $G_q$ to show
the roles of all three tuning choices ($\rho$, $q$, and $G_q$).
In both configurations, $\rho$ is selected by the
rolling-origin cross-validation procedure of
Section~\ref{subsec:rho_selection} with one-step-ahead horizon
($h=1$), 10 folds ($L=10$), and a grid of 21 equally spaced
values in $[0,1]$.

\paragraph*{Configuration 1: $q=1$, constant extrapolation.}
The first configuration sets $q=1$ and uses the
constant-extrapolation operator $G_1^{\mathrm{const}}$, the
most conservative admissible forecaster: each post-treatment
period receives the last pre-treatment value of $\hat E_\pre$.

Figure~\ref{fig:hsc_example_d1} displays the results. Each row
corresponds to one regime: shared stochastic trend ($\kappa=0$,
top) and shared + idiosyncratic stochastic trend ($\kappa=2$,
bottom). The three columns show the cross-validation curve,
the counterfactual fit, and the decomposition of the fitted
counterfactual into its donor-matched and time series
components.

In the regime with only shared stochastic trend,
cross-validation selects $\hat\rho=1$, the
endpoint corresponding to SC with intercept. This is consistent with the
spectral interpretation of Section~\ref{subsec:spectral}: when
the stochastic trend factor is shared across all units, the
donor pool can reproduce the treated unit's low-frequency
dynamics, so HSC lets the donor-matching branch carry most of
the load in modeling the treated series. The decomposition panel
confirms this allocation: the donor component
$X_\post\hat\omega$ (green) almost coincides with the
counterfactual $\hat Y_\post(0)$ (blue), while the extrapolated
smooth component $G_1^{\mathrm{const}}\hat E_\pre$ (red) is a
flat horizontal line.

In the regime with both shared and idiosyncratic stochastic
trends, cross-validation selects $\hat\rho=0$. The donor
pool can no longer reproduce the treated unit's idiosyncratic
stochastic trend, so cross-validation finds that matching
entirely in first differences, with the time series branch
carrying more of the prediction, yields the best out-of-sample
predictions. The decomposition panel reflects this: the
extrapolated smooth component $G_1^{\mathrm{const}}\hat E_\pre$
now carries a large level correction that accounts for the
drift accumulated up to $T_0$. In this regime, HSC
reduces to ridge-regularized synthetic control in first
differences, anchored at $T_0$, as shown in Section~\ref{sec:tradeoff}.

\begin{figure}[!h]
\centering
\includegraphics[width=\textwidth]{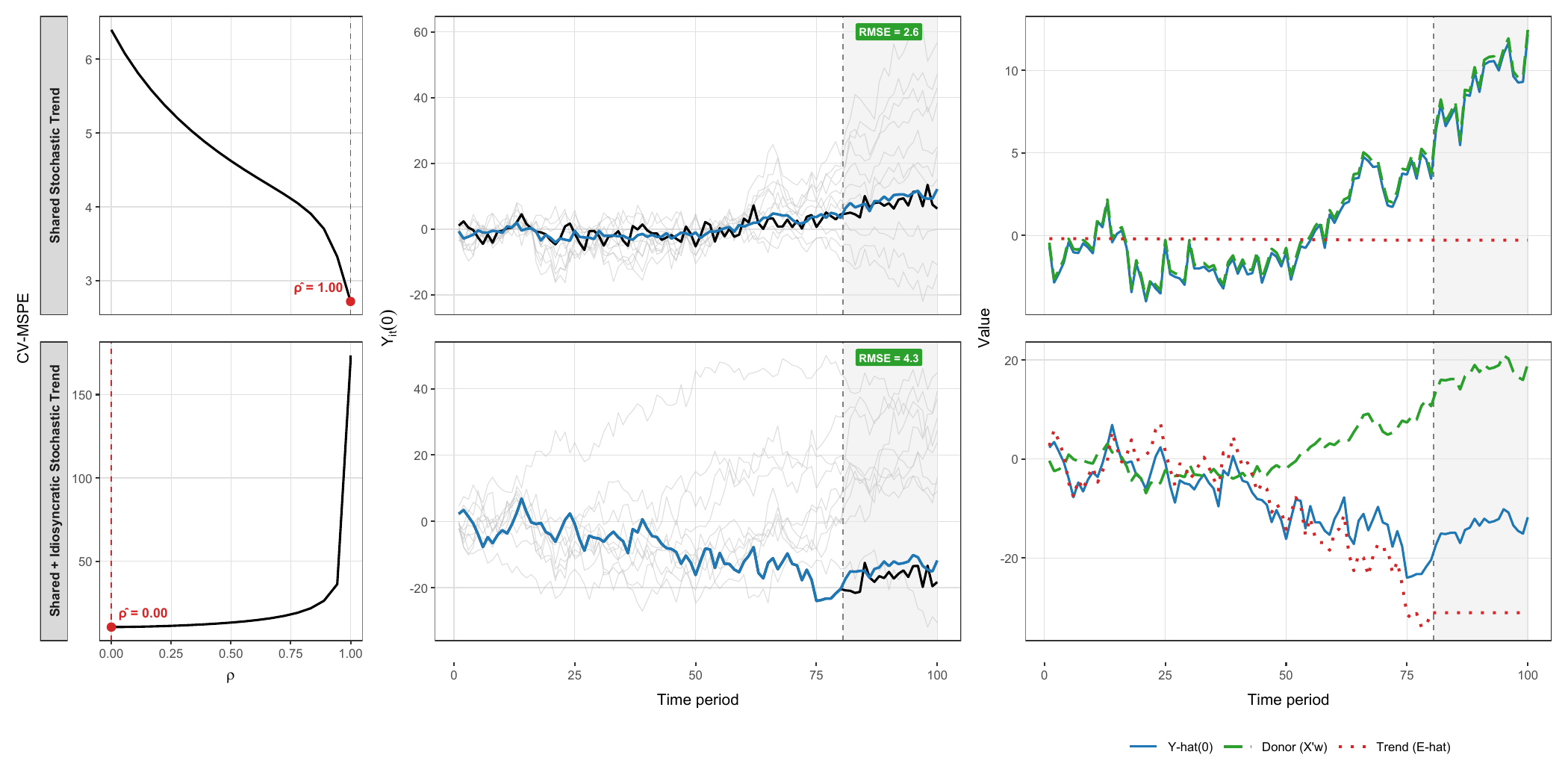}
\caption{HSC with $q=1$ and constant extrapolation, applied to the two
simulated regimes of Section~\ref{sec:tradeoff}.}
\label{fig:hsc_example_d1}
\vspace{0.5em}
\begin{minipage}[]{\textwidth}
{\footnotesize\textbf{Note:} Each row corresponds to one DGP regime from
Section~\ref{sec:tradeoff}: shared stochastic trend ($\kappa=0$,
top) and shared + idiosyncratic stochastic trend ($\kappa=2$, bottom). The DGP parameters and seed are identical to those in Figure~\ref{fig:tradeoff}.
HSC is estimated with smoothness order $q=1$ and constant extrapolation
forecast operator $G_1^{\mathrm{const}}$; the tuning parameter $\rho$ is
selected by rolling-origin cross-validation
(Section~\ref{subsec:rho_selection}) with one-step-ahead horizon $h=1$.
\emph{Left column}: Cross-validation MSPE as a function of $\rho$. The
red point and dashed vertical line mark the selected $\hat\rho$.
\emph{Middle column}: Counterfactual fit. The black line is the treated
unit's outcome $Y_{1t}(0)$; grey lines are donor units; the blue line is
the HSC counterfactual $\hat Y_t(0;\hat\rho,1)$. The shaded region marks
the post-treatment window ($t>T_0$), and the green label reports the
post-treatment RMSE.
\emph{Right column}: Decomposition of the HSC counterfactual for the
treated unit. The blue line is the counterfactual
$\hat Y_t(0)=X_t'\hat\omega + \hat E_t$ (identical to the middle
column); the green dashed line is the donor-matched component
$X_t'\hat\omega$; the red dotted line is the extrapolated smooth
component $\hat E_t$.\par}
\end{minipage}
\end{figure}

The contrast between $\hat\rho=1$ and $\hat\rho=0$ provides a sharp demonstration of the adaptive mechanism of HSC: the same
estimator with the same tuning procedure selects the two extreme
endpoints of the $\rho$ continuum, recovering pure level matching when
the stochastic trend is shared and pure differencing when the drift is partly
idiosyncratic, precisely the regime-dependent behavior that
Section~\ref{sec:tradeoff} argues is needed.

\paragraph*{Configuration 2: $q=2$, ARIMA forecast.}

The second configuration sets $q=2$ and uses an ARIMA(1,1,0)
model as the data-driven forecaster $\hat G_q$ within the
null-space separation construction of
Section~\ref{subsec:counterfactual}. Definition~\ref{def:Gq} is satisfied by design: the null-space
component of $\hat E_\pre$ (which now includes both an
intercept and a linear trend) is
extrapolated by the canonical linear forecaster
$G_2^{\mathrm{lin}}$, while the non-null-space component is
extrapolated by an ARIMA model. 

Figure~\ref{fig:hsc_example_d2} displays the results in the
same format as Figure~\ref{fig:hsc_example_d1}. In the shared
stochastic trend regime, cross-validation again selects
$\hat\rho=1$. As in Configuration~1, the donor pool suffices
to match the treated unit's dynamics, and the time series forecaster plays a minimal role. The post-treatment RMSE (2.7) is
comparable to Configuration~1 (2.6).

In the regime with both shared and idiosyncratic stochastic trends, cross-validation selects $\hat\rho=0.33$. The decomposition
panel reveals that the smooth component $\hat E_\pre$ exhibits
a clear downward trend in the pre-treatment period, and the
ARIMA forecaster extrapolates this trend into the
post-treatment window. This trending extrapolation is visible
in the red dotted line, which continues to decline after $T_0$
rather than remaining flat as in
Figure~\ref{fig:hsc_example_d1}. As a result, less of the
idiosyncratic drift needs to be absorbed by the level shift
alone, and the post-treatment RMSE improves from 4.3 to
3.4.\footnote{The small RMSE difference (4.3 vs.\ 4.4) between
Configuration~1 at $\hat\rho=0$ and the unregularized
first-differenced SC of Figure~\ref{fig:tradeoff} reflects the
ridge term $\zeta^2 T_0 \|\omega\|_2^2$.}

\begin{figure}[!h]
\centering
\includegraphics[width=\textwidth]{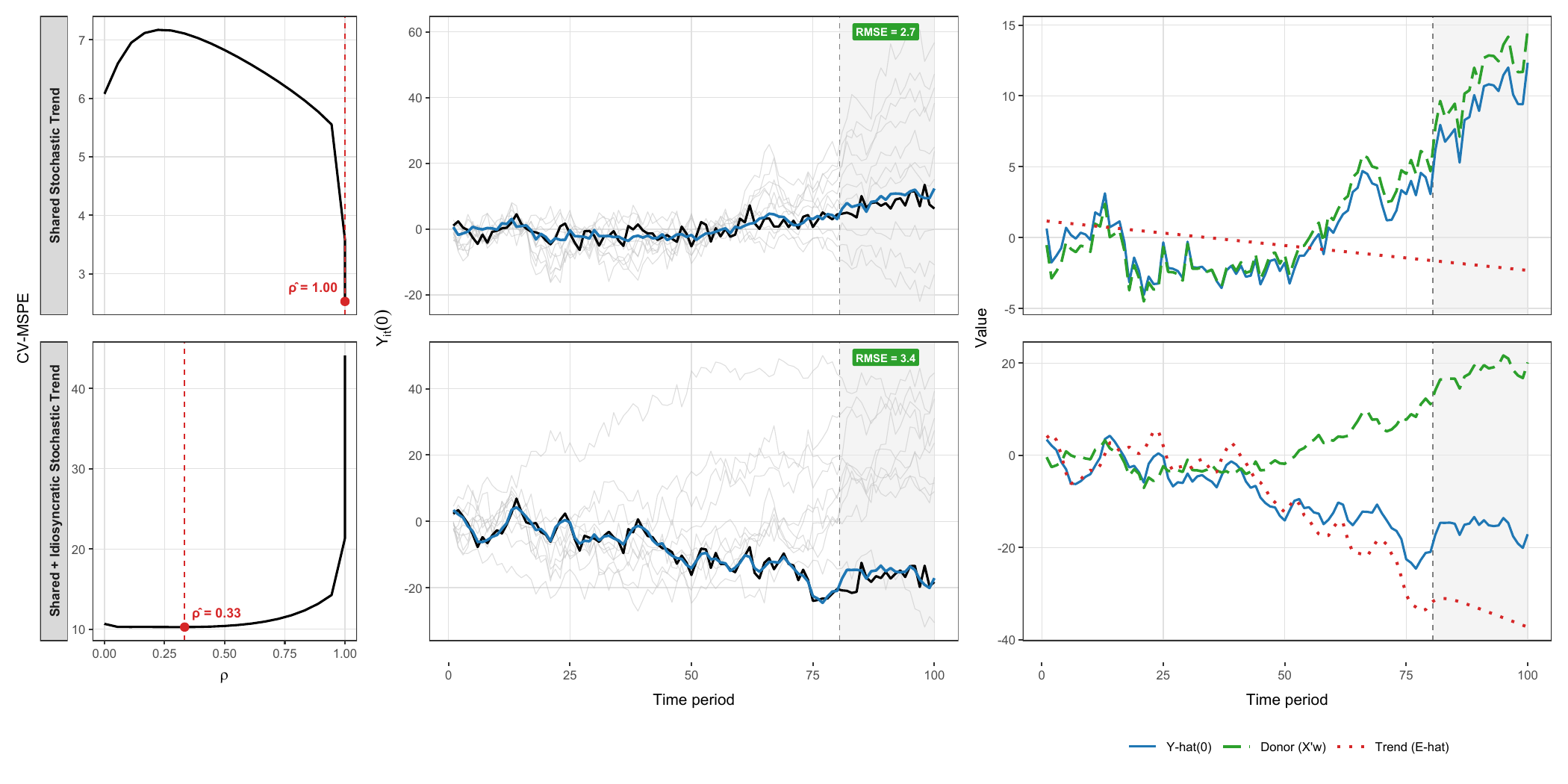}
\caption{HSC with $q=2$ and ARIMA(1,1,0) forecast, applied to the two
simulated regimes of Section~\ref{sec:tradeoff}.}
\label{fig:hsc_example_d2}
\vspace{0.5em}
\begin{minipage}[]{\textwidth}
{\footnotesize\textbf{Note:} Same DGP and format as
Figure~\ref{fig:hsc_example_d1}, but with smoothness order $q=2$ and an
ARIMA(1,1,0) forecast. The second-order penalty penalizes
changes in slope, so the null space of $K_2$ includes both intercept and
linear trend components. In the shared stochastic trend regime (top row),
cross-validation again selects $\hat\rho=1$, and the results are
similar to Figure~\ref{fig:hsc_example_d1}. In the shared + idiosyncratic stochastic trend regime (bottom row), cross-validation selects the interior value
$\hat\rho=0.33$. The decomposition panel shows that the smooth component
$\hat E_t$ (red dotted line) exhibits a trending pattern that the ARIMA
forecaster extrapolates beyond $T_0$, unlike the flat carry-forward in
Figure~\ref{fig:hsc_example_d1}. This trending extrapolation improves
the post-treatment RMSE from 4.3 (Figure~\ref{fig:hsc_example_d1}) to
3.4.\par}
\end{minipage}
\end{figure}

This comparison highlights the complementary roles of the three tuning
choices. The tuning parameter $\rho$ controls how much of the
pre-treatment discrepancy is allocated to each branch. The forecast operator $G_q$ determines how the
smooth component is extrapolated. The smoothness order $q$ determines what counts as ``smooth''. Across both configurations, the
cross-validation procedure adapts $\rho$ to the data-generating regime.

\section{Prediction-Error Decomposition}
\label{sec:predict_decompose}

Having established the HSC estimator and its spectral interpretation,
we now develop a formal decomposition of the counterfactual prediction
error.  This decomposition separates the prediction error into a
\emph{weight-estimation} component, which captures the discrepancy
introduced by estimating donor weights from observed outcomes rather
than from the underlying shared structure, and a \emph{forecasting}
component, which captures the prediction error that would remain even
with oracle weights.  Each component depends on $\rho$, and their
interplay provides a formal decomposition that helps interpret the
adaptive $\rho$-selection illustrated in
Section~\ref{subsec:hsc_example}.

Throughout this section we fix $q\in\{1,2\}$ and suppress the
$q$-subscript when no ambiguity arises, writing $K=K_q$,
$S_\rho=S_{\rho,q}$, $W_\rho=W_{\rho,q}$, $P_0=P_{0,q}$, and
$P_\perp=I_{T_0}-P_{0,q}$.  We use the null-space-separated forecast
operator $\widetilde G_q=G_q^{\mathrm{null}}P_0+\hat G_q P_\perp$
constructed in Section~\ref{subsec:counterfactual} and write
$\Pi_\rho:=\widetilde G_q S_\rho$ for the composed operator.  Since $S_\rho$ leaves $\Null(K)$ components intact and
$\widetilde G_q$ continues $\Null(K)$ components to post-treatment periods
(Definition~\ref{def:Gq}), the composed operator inherits the
null-space continuation property:
\begin{equation}
\Pi_\rho v = G_q^{\mathrm{null}} v
\qquad\text{for every }v\in\Null(K).
\label{eq:Pi_null}
\end{equation}

\subsection{Outcome model and oracle benchmark}
\label{subsec:LR_oracle}

We connect the prediction error to the data-generating structure
discussed in Section~\ref{sec:tradeoff}.  Recall from
Section~\ref{subsec:allocation} the decomposition
$Y_{it}(0)=L_{it}+R_{it}+\varepsilon_{it}$, where $L_{it}$ denotes
the shared component that provides the signal for donor matching, $R_{it}$ is an idiosyncratic stochastic
trend, and $\varepsilon_{it}$ is idiosyncratic short-run noise.  For the
prediction-error analysis, we work with the two-component grouping
\begin{equation}
Y_{it}(0) \;=\; L_{it} \;+\; \cR_{it},
\label{eq:DGP_2comp}
\end{equation}
where $\cR_{it}:=R_{it}+\varepsilon_{it}$ collects all components
not explained by the shared component.\footnote{The grouping
$\cR_{it}=R_{it}+\varepsilon_{it}$ is adopted for analytical
convenience: the prediction-error decomposition distinguishes between
the shared component $L$ and everything else, regardless of
whether that remainder is short-run noise or a stochastic trend.  The persistence
structure of $\cR_{it}$ matters for the spurious matching
channel of Term~A
(Section~\ref{subsec:TermA_channels}).}  

Define donor and treated stacks
for $L$ and $\cR$ analogously to $X_\pre$ and $Y_\pre$:
\[
X_\pre = L_{0,\pre}+\cR_{0,\pre},
\quad
Y_\pre = L_{1,\pre}+\cR_{1,\pre},
\qquad
X_\post = L_{0,\post}+\cR_{0,\post},
\quad
Y_\post(0) = L_{1,\post}+\cR_{1,\post}.
\]

We compare the HSC estimator against an oracle that observes the
signal $L$ directly and solves the $\rho=1$ HSC program on $L$
(with the null space projected out):\footnote{The oracle uses the
same ridge parameter $\zeta$ as the HSC estimator.  This ensures
that the oracle benchmark is not trivially superior due to
different regularization, so the decomposition isolates the
effects of observing $Y$ rather than $L$ and of using $W_\rho$
rather than $P_\perp$.}
\begin{equation}
\omega^\oracle
\;\in\;
\argmin_{\omega\in\Delta_{N_0}}
\left\{
\bigl\|P_\perp(L_{1,\pre}-L_{0,\pre}\omega)\bigr\|_2^2
+\zeta^2T_0\|\omega\|_2^2
\right\},
\label{eq:oracle_def}
\end{equation}
where $P_\perp$ projects onto $\Null(K)^\perp$, removing an
intercept ($q=1$) or an intercept and linear trend ($q=2$).
This is the same quadratic program as the $\rho=1$ endpoint of
HSC (Definition~\ref{def:hsc} with $W_{1,q}=P_\perp$), but on
the shared component $L$ rather than the observed outcome
$Y=L+\cR$.  The oracle is treated as a fixed benchmark
throughout the analysis.

\subsection{The prediction-error decomposition}
\label{subsec:AB_decomp}

Recall from~\eqref{eq:hsc_counterfactual_alt} that the HSC
counterfactual at any $\rho$ takes the form
$\hat Y_\post(0;\rho,q) = X_\post\hat\omega(\rho,q)
+ \Pi_\rho\,r_\pre(\hat\omega(\rho,q))$,
where $r_\pre(\omega):=Y_\pre-X_\pre\omega$ is the pre-treatment
residual.  We decompose the prediction error by introducing an oracle
predictor that replaces the estimated weights with $\omega^\oracle$
but retains the same $\rho$-dependent smoother and forecaster:
\begin{equation}
\widetilde Y_\post^\oracle(0;\rho)
\;:=\;
X_\post\omega^\oracle
\;+\;
\Pi_\rho\,r_\pre^\oracle,
\label{eq:oracle_pred}
\end{equation}
where $r_\pre^\oracle:=Y_\pre-X_\pre\omega^\oracle$ is the oracle
pre-treatment residual.  Adding and subtracting
$\widetilde Y_\post^\oracle(0;\rho)$ in the prediction error yields:

\begin{proposition}[Prediction-error decomposition]
\label{prop:AB}
For every $\rho\in[0,1]$,
\begin{equation}
\underbrace{Y_\post(0)-\hat Y_\post(0;\rho,q)}_{\text{prediction
error}}
\;=\;
\underbrace{\bigl(\widetilde Y_\post^\oracle(0;\rho)
  -\hat Y_\post(0;\rho,q)\bigr)}_{\mathrm{Term~A}(\rho)
  \ \text{(weight estimation)}}
\;+\;
\underbrace{\bigl(Y_\post(0)
  -\widetilde Y_\post^\oracle(0;\rho)\bigr)}_{\mathrm{Term~B}(\rho)
  \ \text{(forecasting)}}.
\label{eq:AB}
\end{equation}
The two terms admit the closed forms
\begin{align}
\mathrm{Term~A}(\rho)
&=
\bigl(X_\post-\Pi_\rho X_\pre\bigr)
\bigl(\omega^\oracle-\hat\omega(\rho,q)\bigr),
\label{eq:TA_closed}\\
\mathrm{Term~B}(\rho)
&=
\bigl(Y_\post(0)-X_\post\omega^\oracle\bigr)
\;-\;
\Pi_\rho\bigl(Y_\pre-X_\pre\omega^\oracle\bigr).
\label{eq:TB_closed}
\end{align}
\end{proposition}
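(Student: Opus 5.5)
The decomposition \eqref{eq:AB} is an add-and-subtract identity. Writing $Y_\post(0)-\hat Y_\post(0;\rho,q)=\bigl(\widetilde Y_\post^\oracle(0;\rho)-\hat Y_\post(0;\rho,q)\bigr)+\bigl(Y_\post(0)-\widetilde Y_\post^\oracle(0;\rho)\bigr)$ holds for every $\rho\in[0,1]$ with no assumptions beyond the existence of $\hat\omega(\rho,q)$ and $\omega^\oracle$. Existence is automatic: the programs \eqref{eq:hsc_extended} and \eqref{eq:oracle_def} minimize continuous functions over the compact simplex. Throughout, we fix one selection from each argmin. The substantive content is therefore the two closed forms, and we verify each by direct substitution.

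For Term~B, we substitute the definition \eqref{eq:oracle_pred} of the oracle predictor. This gives $Y_\post(0)-X_\post\omega^\oracle-\Pi_\rho r_\pre^\oracle$, and inserting $r_\pre^\oracle=Y_\pre-X_\pre\omega^\oracle$ yields \eqref{eq:TB_closed} immediately.

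For Term~A, we use the representation \eqref{eq:hsc_counterfactual_alt} with $\Pi_\rho=\widetilde G_qS_\rho$. The same representation is used in the oracle predictor, so both predictors share the same linear map $\Pi_\rho$. The difference is then
\[
X_\post(\omega^\oracle-\hat\omega)+\Pi_\rho\bigl[(Y_\pre-X_\pre\omega^\oracle)-(Y_\pre-X_\pre\hat\omega)\bigr].
\]
The $Y_\pre$ terms cancel inside the bracket, leaving $-\Pi_\rho X_\pre(\omega^\oracle-\hat\omega)$. Factoring by linearity of $\Pi_\rho$ gives $(X_\post-\Pi_\rho X_\pre)(\omega^\oracle-\hat\omega(\rho,q))$, which is \eqref{eq:TA_closed}.

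The only point needing care is that $\Pi_\rho$ must be one fixed linear operator applied identically in both predictors. The data-driven part $\hat G_q$ of $\widetilde G_q$ is estimated from pre-treatment data, and if its fit depended on the residual being forecast (through $\hat\omega$), linearity across the two residuals would fail. We resolve this by the convention stated in Section~\ref{subsec:counterfactual}: $G_q$ (here $\widetilde G_q$) is treated as a fixed linear map, and the same $\Pi_\rho$ enters \eqref{eq:oracle_pred} by construction. With that convention, linearity holds and the cancellation is exact.

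The endpoints $\rho\in\{0,1\}$ need no separate argument. There $S_\rho$ is defined by the continuous extension in Proposition~\ref{prop:endpoints} and is still a fixed linear operator, so the same algebra applies.
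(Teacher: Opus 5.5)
Your proof is correct and matches the paper's: the decomposition is an add-and-subtract identity, Term~B follows by substituting the oracle predictor, and Term~A follows from linearity of $r_\pre(\omega)$ in $\omega$ together with linearity of $\Pi_\rho$. Your extra remarks on existence of the minimizers and on treating $\widetilde G_q$ as a fixed linear map, so that the same $\Pi_\rho$ acts in both predictors, make explicit a convention the paper states in Section~\ref{subsec:counterfactual} and then uses without comment.
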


\noindent
The derivation given in Appendix~\ref{app:proofs_sec5} uses only
the linearity of $r_\pre(\omega)$ in $\omega$ and the linearity
of~$\Pi_\rho$.

Term~A isolates the cost of using the estimated
weights $\hat\omega(\rho,q)$ rather than the oracle weights
$\omega^\oracle$ in post-treatment periods.  The weight discrepancy
$\omega^\oracle-\hat\omega(\rho,q)$ is transferred into prediction
error through the donor-forecast residual matrix
$C_\rho:=X_\post-\Pi_\rho X_\pre$, whose
column $j$ is the forecast residual when the composed operator
$\Pi_\rho$ extrapolates donor $j$ from the pre-treatment to the
post-treatment period.  The subtracted term $\Pi_\rho X_\pre$ appears
because the donor weights enter the counterfactual twice, directly
through the post-treatment donor block $X_\post\omega$ and indirectly
through the pre-treatment residual $r_\pre(\omega)=Y_\pre-X_\pre\omega$
that $\Pi_\rho$ extrapolates forward.  Term~A therefore depends on $\rho$
through both the estimated weights and $C_\rho$.

Term~B is the prediction error that would remain even if the oracle
weights were available.  It measures how accurately the composed
operator $\Pi_\rho$ extrapolates, from the pre-treatment to the
post-treatment period, the discrepancy between the treated unit and
its oracle synthetic counterfactual, where the oracle weights are
those defined by the shared component $L$.  Term~B depends on $\rho$
through the smoother $S_\rho$ embedded in $\Pi_\rho$, which controls
how much of the pre-treatment oracle residual is passed to the
forecaster and how much is discarded.

\subsection{Weight-estimation error: three channels of distortion}
\label{subsec:TermA_channels}

We now characterize the Term~A cost by identifying three
distinct channels through which the weight discrepancy
$\omega^\oracle-\hat\omega(\rho,q)$ arises.

The estimated weights $\hat\omega(\rho,q)$ and the oracle weights
$\omega^\oracle$ minimize different objectives over the same
constraint set $\Delta_{N_0}$.  The profiled HSC objective
(Definition~\ref{def:hsc}) is
\begin{equation}
	F_\rho(\omega)
	\;:=\;
	r_\pre(\omega)'W_\rho\,r_\pre(\omega)
	+\zeta^2T_0\|\omega\|_2^2,
	\label{eq:F_def}
\end{equation}
while the oracle objective~\eqref{eq:oracle_def} is
\begin{equation}
	H^\oracle(\omega)
	\;:=\;
	\bigl\|P_\perp(L_{1,\pre}-L_{0,\pre}\omega)\bigr\|_2^2
	+\zeta^2T_0\|\omega\|_2^2.
	\label{eq:H_def}
\end{equation}
These two objectives share the same ridge penalty but differ in two
ways.  First, the \emph{metric}: HSC evaluates the pre-treatment
residual under $W_\rho$, whereas the oracle uses $P_\perp$.  The two
coincide only at $\rho=1$ (where $W_\rho=P_\perp$).  Second, the
\emph{data}: HSC observes $X_\pre=L_{0,\pre}+\cR_{0,\pre}$ and
$Y_\pre=L_{1,\pre}+\cR_{1,\pre}$, whereas the oracle operates on the
signal $L$ alone.  The weight discrepancy
$\omega^\oracle-\hat\omega(\rho,q)$ reflects the combined effect of
these two differences.

To separate the contributions of $L$ and $\cR$ to the weight
discrepancy, we decompose the oracle residual by component.  For each
$Z\in\{L,\cR\}$, define
$e^Z_\pre:=Z_{1,\pre}-Z_{0,\pre}\omega^\oracle\in\R^{T_0}$, so that
$r_\pre^\oracle=e^L_\pre+e^\cR_\pre$ by
additivity~\eqref{eq:DGP_2comp}.  The signal residual $e^L_\pre$
contains a null-space component $P_0\,e^L_\pre\in\Null(K)$ that the
oracle objective does not use, and a complement
$P_\perp\,e^L_\pre\in\Null(K)^\perp$ that the oracle directly
minimizes.  Since both
$W_\rho$ and $P_\perp$ annihilate $\Null(K)$ components, we have
$(W_\rho-P_\perp)\,P_\perp\,e^L_\pre=(W_\rho-P_\perp)\,e^L_\pre$
and $W_\rho\,P_\perp\,e^L_\pre=W_\rho\,e^L_\pre$.

The following proposition bounds
$\|\mathrm{Term~A}(\rho)\|_2$ by three interpretable components that
correspond to the metric difference, the data difference, and their
interaction.  Define the bound components
\begin{align}
	A_1(\rho)
	&:=\frac{1}{T_0}\bigl\|L_{0,\pre}'(W_\rho-P_\perp)\,
	e^L_\pre\bigr\|_{Q_\rho^{-1}},
	\label{eq:A1_def}\\
	A_2(\rho)
	&:=\frac{1}{T_0}\bigl\|\cR_{0,\pre}'W_\rho\,
	e^L_\pre\bigr\|_{Q_\rho^{-1}},
	\label{eq:A2_def}\\
	A_3(\rho)
	&:=\frac{1}{\sqrt{T_0}}\,
	\bigl\|e^\cR_\pre\bigr\|_{W_\rho},
	\label{eq:A3_def}
\end{align}
where $\|v\|_{Q_\rho^{-1}}:=\sqrt{v'Q_\rho^{-1}v}$,\;
$\|v\|_{W_\rho}:=\sqrt{v'W_\rho v}$, and
\begin{equation}
	Q_\rho
	\;:=\;
	\frac{1}{T_0}X_\pre'W_\rho X_\pre+\zeta^2 I_{N_0}
	\label{eq:Q_def}
\end{equation}
is the Hessian of $F_\rho$ (up to the constant $2T_0$).

\begin{proposition}[Weight-estimation error envelope]
	\label{prop:TermA_env}
	For every $\rho\in[0,1]$,
	\begin{equation}
		\bigl\|\mathrm{Term~A}(\rho)\bigr\|_2
		\;\le\;
		\mathcal{P}_\rho
		\bigl[
		A_1(\rho)+A_2(\rho)+A_3(\rho)
		\bigr],
		\label{eq:TermA_env}
	\end{equation}
	where
	$\mathcal{P}_\rho
	:=\|(X_\post-\Pi_\rho X_\pre)Q_\rho^{-1/2}\|_{\mathrm{op}}$
	is the transfer multiplier.
\end{proposition}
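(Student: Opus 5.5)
The plan is to factor Term~A through the $Q_\rho$-geometry and then bound the weight discrepancy with a variational-inequality argument that compares the two first-order optimality conditions.

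\textbf{Factoring Term~A.} Write $\delta:=\omega^\oracle-\hat\omega(\rho,q)$ and $C_\rho:=X_\post-\Pi_\rho X_\pre$. By Proposition~\ref{prop:AB}, $\mathrm{Term~A}(\rho)=C_\rho\delta$. Since $\zeta>0$, $Q_\rho\succeq\zeta^2 I_{N_0}$ is invertible for every $\rho\in[0,1]$. Also $W_\rho\succeq 0$ by Proposition~\ref{prop:endpoints}. We can therefore write $C_\rho\delta=(C_\rho Q_\rho^{-1/2})(Q_\rho^{1/2}\delta)$, which gives
\[
\|\mathrm{Term~A}(\rho)\|_2\le\mathcal P_\rho\,\|\delta\|_{Q_\rho},
\qquad \|\delta\|_{Q_\rho}:=\sqrt{\delta'Q_\rho\delta}.
\]
It then suffices to show $\|\delta\|_{Q_\rho}\le A_1(\rho)+A_2(\rho)+A_3(\rho)$.

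\textbf{Comparing optimality conditions.} Both $F_\rho$ and $H^\oracle$ are convex quadratics minimized over the convex set $\Delta_{N_0}$, so each minimizer satisfies the usual variational inequality.
\begin{itemize}
\item For $\hat\omega$: $\nabla F_\rho(\hat\omega)'(\omega^\oracle-\hat\omega)\ge 0$.
\item For $\omega^\oracle$: $\nabla H^\oracle(\omega^\oracle)'(\hat\omega-\omega^\oracle)\ge0$, i.e.\ $\nabla H^\oracle(\omega^\oracle)'\delta\le 0$.
\end{itemize}
Because $F_\rho$ is quadratic with Hessian $2T_0Q_\rho$, we have $\nabla F_\rho(\hat\omega)=\nabla F_\rho(\omega^\oracle)-2T_0Q_\rho\delta$. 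The first inequality therefore becomes $2T_0\|\delta\|_{Q_\rho}^2\le\nabla F_\rho(\omega^\oracle)'\delta$. Subtracting the second inequality gives
\[
2T_0\|\delta\|_{Q_\rho}^2\le\bigl(\nabla F_\rho(\omega^\oracle)-\nabla H^\oracle(\omega^\oracle)\bigr)'\delta .
\]
The ridge gradients $2\zeta^2T_0\omega^\oracle$ cancel in this difference. Using $X_\pre=L_{0,\pre}+\cR_{0,\pre}$ and $r^\oracle_\pre=e^L_\pre+e^\cR_\pre$, the difference equals $-2v$, where
\[
v:=L_{0,\pre}'(W_\rho-P_\perp)e^L_\pre+\cR_{0,\pre}'W_\rho e^L_\pre+X_\pre'W_\rho e^\cR_\pre .
\]
To obtain this form, expand $X_\pre'W_\rho r^\oracle_\pre$ and use $P_\perp=P_\perp'P_\perp$ for the oracle gradient.

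\textbf{Cauchy--Schwarz and the third piece.} Applying $|v'\delta|\le\|v\|_{Q_\rho^{-1}}\|\delta\|_{Q_\rho}$ and dividing by $2T_0\|\delta\|_{Q_\rho}$ (the case $\delta=0$ is trivial) gives
\[
\|\delta\|_{Q_\rho}\le T_0^{-1}\|v\|_{Q_\rho^{-1}}.
\]
The triangle inequality for $\|\cdot\|_{Q_\rho^{-1}}$ then splits this into three pieces. The first two are exactly $A_1(\rho)$ and $A_2(\rho)$.

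The only step needing care, and the one I expect to be the main obstacle, is matching the third piece to $A_3(\rho)$, which involves the $W_\rho$-seminorm rather than a $Q_\rho^{-1}$-norm. Write
\[
X_\pre'W_\rho e^\cR_\pre=(X_\pre'W_\rho^{1/2})(W_\rho^{1/2}e^\cR_\pre).
\]
Since $Q_\rho\succeq T_0^{-1}X_\pre'W_\rho X_\pre$, the matrix $M:=Q_\rho^{-1/2}X_\pre'W_\rho^{1/2}$ satisfies
\[
MM'=Q_\rho^{-1/2}X_\pre'W_\rho X_\pre Q_\rho^{-1/2}\preceq T_0 I_{N_0},
\]
so $\|M\|_{\mathrm{op}}\le\sqrt{T_0}$. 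Hence $T_0^{-1}\|X_\pre'W_\rho e^\cR_\pre\|_{Q_\rho^{-1}}\le T_0^{-1/2}\|e^\cR_\pre\|_{W_\rho}=A_3(\rho)$. Combining this with the factoring step yields~\eqref{eq:TermA_env}.

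All steps use only $W_\rho\succeq0$ and $\zeta>0$, so the endpoints $\rho\in\{0,1\}$ are covered by the same argument.
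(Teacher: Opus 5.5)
Your proof is correct and follows the paper's argument essentially step for step. The factoring $\|\mathrm{Term~A}(\rho)\|_2\le\mathcal{P}_\rho\|\delta\|_{Q_\rho}$, the summed variational inequalities with the gradient difference at the oracle weights split into the three pieces, the generalized Cauchy--Schwarz step, and the triangle inequality all match the paper's gradient-decomposition and minimizer-gap lemmas and its three-step proof. The only difference is cosmetic: you bound the third piece by $A_3(\rho)$ using the Loewner inequality $X_{\mathrm{pre}}'W_\rho X_{\mathrm{pre}}\preceq T_0Q_\rho$ rather than the paper's SVD computation, and the two arguments are equivalent.
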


Proposition~\ref{prop:TermA_env} bounds the weight-estimation
error with four ingredients, each with a distinct role.  The
terms $A_1(\rho)$, $A_2(\rho)$, and $A_3(\rho)$ are three sources of
the weight gap $\omega^\oracle-\hat\omega(\rho,q)$, each tied to a
specific distortion mechanism.  The matrix $Q_\rho$, the
ridge-regularized Gram matrix of the filtered donor series
$W_\rho^{1/2}X_\pre$, summarizes how sharply the filtered donor pool
distinguishes different weight-reallocation patterns; the dual norm
$\|\cdot\|_{Q_\rho^{-1}}$ through which $A_1$ and $A_2$ are measured
reflects this geometry.  The multiplier $\mathcal{P}_\rho$ converts
the weight gap into a prediction error and is the only place in the
Term~A envelope where the forecast operator $\widetilde G_q$ enters.
We unpack each ingredient in turn, then synthesize them into two
opposing forces that shape Term~A in~$\rho$.

The \emph{metric distortion} $A_1(\rho)$ measures the contribution
to the weight gap $\omega^\oracle-\hat\omega$ that arises because
HSC and the oracle use different \emph{metrics} to evaluate the
pre-treatment residual: HSC uses $W_\rho$, whereas the oracle uses
$P_\perp$.  Thus $A_1$ is the structural price HSC pays for
evaluating the pre-treatment residual under $W_\rho$ rather than the
oracle's $P_\perp$, which leaves the weight criterion no longer
aligned with the oracle's identification target.  Two limiting
cases eliminate this price entirely.  First, at $\rho=1$,
$W_\rho=P_\perp$ and $A_1$ vanishes because the two metrics
coincide.  Second, $A_1$ vanishes for
all $\rho$ whenever $P_\perp\,e^L_\pre=0$, that is, whenever the
treated unit's signal $L_{1,\pre}$ can be written as a convex
combination of donor signals $L_{0,\pre}\omega^\oracle$ plus a
$\Null(K)$ component.\footnote{Concretely, $P_\perp\,e^L_\pre=0$
	means that the oracle achieves zero residual-fit term
	in~\eqref{eq:oracle_def}: the mismatch in $L$ lies entirely in
	$\Null(K)$ (a level shift for $q=1$, an affine trend for $q=2$).
	Both $W_\rho$ and $P_\perp$ assign zero weight to $\Null(K)$, so
	the choice of metric is inconsequential.}  Beyond these two cases, the metric discrepancy $W_\rho-P_\perp$
widens as $\rho$ decreases from $1$, which pushes $A_1$ upward.
However, because $A_1$ is measured in the $Q_\rho^{-1}$ norm, which
also depends on $\rho$, the net behavior of $A_1(\rho)$ need not be
monotonic.  Notably, $A_1$ can be nonzero even when $\cR=0$, which
is the sense in which it is a purely structural channel.

The \emph{interaction} $A_2(\rho)$ captures how idiosyncratic
components in the donor units $\cR_{0,\pre}$ can perturb weight
estimation \emph{when the signal $L$ is not perfectly matched}.
The oracle benchmark is defined using the shared component $L$
only, and the HSC objective is evaluated on the observed donor
outcomes $X_{\pre}=L_{0,\pre}+\cR_{0,\pre}$.  As a result,
$\cR_{0,\pre}$ can accidentally align with $W_\rho\,e^L_{\pre}$ and
appear to help reduce the remaining signal mismatch in sample,
thereby shifting $\hat\omega(\rho)$ away from $\omega^\oracle$.
This term is large when $\cR_{0,\pre}$ has a substantial projection
onto $W_\rho\,e^L_{\pre}$, and it vanishes whenever
$P_\perp\,e^L_\pre=0$.

The \emph{spurious matching} $A_3(\rho)$ measures how tempted the
HSC weight criterion is to chase idiosyncratic components in the
donors.  Unlike $A_1$ and $A_2$, this channel does not require any
mismatch in $L$.  The oracle weights $\omega^\oracle$ are chosen to
fit $L$ alone. The $\cR$-mismatch
$e^\cR_\pre=\cR_{1,\pre}-\cR_{0,\pre}\omega^\oracle$ is
untouched by the oracle and can be large. The HSC weight
criterion thus has an incentive to deviate from $\omega^\oracle$ to
absorb it, pulling weights toward donors whose idiosyncratic
components happen to co-move with $\cR_{1,\pre}$.

The sensitivity of $A_3$ to $\rho$ depends critically on whether
$\cR$ is short-run noise or a stochastic trend.  At $\rho=1$,
$W_\rho=P_\perp$, and the weight criterion applies no spectral
down-weighting beyond removing the null-space component.  If $\cR$ contains a
random-walk component, $\|e^\cR_\pre\|_{P_\perp}$ grows at rate
$O_p(T_0)$, so $A_3(1)=O_p\!\bigl(T_0^{1/2}\bigr)\to\infty$, which
reflects the spurious regression phenomenon discussed in
Section~\ref{subsec:spurious}.  At $\rho=0$, the metric reduces to
$W_0=K=D_q'D_q$, so the weight criterion operates on the $q$th
differences of $e^\cR_\pre$; differencing renders the random-walk
component stationary and yields $A_3(0)=O_p(1)$, thereby controlling
the spurious channel.  For interior values $\rho\in(0,1)$, the
spectral weight $w(\mu;\rho)=\mu/\bigl((1-\rho)+\rho\mu\bigr)$
interpolates smoothly between the two extremes: larger $\rho$ retains
more low-frequency energy in the weight criterion and is therefore
more vulnerable to spurious matching when $\cR$ contains a stochastic
trend, whereas smaller $\rho$ down-weights low-frequency components
more aggressively, at the cost of the metric distortion already
discussed for $A_1(\rho)$.  When $e^\cR_\pre$ contains only short-run
noise, either because $\cR$ itself is stationary or because the
treated unit and the control donors form a cointegrated system under
the oracle weights so that the stochastic-trend components cancel,
$A_3$ is $O_p(1)$ at both endpoints.

The operative quantity is $\lambda_{\max}(Q_\rho^{-1})$, the largest
eigenvalue of $Q_\rho^{-1}$.  It measures how weakly the HSC
criterion identifies the donor weights: a large value means the
filtered donors are nearly collinear in some direction, so small
score discrepancies are amplified through the dual norm
$\|\cdot\|_{Q_\rho^{-1}}$, inflating $A_1$ and $A_2$.  How
$\lambda_{\max}(Q_\rho^{-1})$ changes with $\rho$ can be
\emph{non-monotonic}.  At $\rho=0$, the metric is $K_q$, so
low-frequency directions are already most strongly down-weighted
while high-frequency directions are amplified.  Moving $\rho$ away
from zero gradually restores weight on low-frequency components and
reduces the amplification of high-frequency components.  Depending on
the frequency composition of the control donors, these two effects
can make $\lambda_{\max}(Q_\rho^{-1})$ peak at an interior
$\rho\in(0,1)$.  The
ridge floor $\zeta^2$ in $Q_\rho$ guarantees that
$\lambda_{\max}(Q_\rho^{-1})\le 1/\zeta^2$ even when the filtered
donors are nearly collinear; the no-ridge comparison in
Appendix~\ref{app:mc_decomp} shows that without this floor
$\lambda_{\max}(Q_\rho^{-1})$ can grow dramatically.

The transfer multiplier
$\mathcal{P}_\rho=\|(X_\post-\Pi_\rho X_\pre)Q_\rho^{-1/2}\|_\mathrm{op}$
depends on $\rho$ through two distinct mechanisms.  The first is the
inverse curvature $Q_\rho^{-1/2}$, which reflects the same
identification geometry that shapes $A_1$ and $A_2$.  The second is
$C_\rho$; its dependence on $\rho$ runs through the smoother
$S_\rho$, which determines how much of each donor's pre-treatment
path is passed to the forecaster.

Taken together, the envelope $\mathcal{P}_\rho[A_1+A_2+A_3]$ is
governed by two opposing forces in $\rho$.  At large $\rho$ the
dominant cost is \emph{spurious matching}: the $A_3$ channel grows
when $\cR$ carries stochastic trends, formalizing the spurious
donor matching risk of Section~\ref{subsec:spurious}.  At small
$\rho$ the dominant cost is \emph{identification loss}: the metric
gap $W_\rho-P_\perp$ widens, possibly inflating $A_1$, and the filtered donor
design $W_\rho^{1/2}X_\pre$ sheds low-frequency variation, inflating
$\lambda_{\max}(Q_\rho^{-1})$.  This is the over-filtering cost of
Section~\ref{subsec:filtering}, and it is most severe when the donor
series are dominated by low-frequency variation, as is typical for
macroeconomic data.  The net shape of the envelope in $\rho$ is
therefore non-monotonic in general, and depends on the size and
structure of the signal mismatch $e^L_\pre$, the persistence of
$\cR$, the frequency composition of the donor design $X_\pre$, and the
forecast operator $\widetilde G_q$.

\subsection{Forecasting error}
\label{subsec:TermB}

Term~B is the prediction error that would remain even if the HSC
weights coincided with the oracle weights.  Its size depends on how
accurately the composed operator $\Pi_\rho=\widetilde G_q S_\rho$
extrapolates the oracle pre-treatment residual into the
post-treatment window.  Because $S_\rho$ varies with $\rho$ while
$\widetilde G_q$ does not, the $\rho$-dependence of Term~B is
governed entirely by the smoother and by what it forwards to the
forecaster.

Recall the oracle pre-treatment residual $r_\pre^\oracle$ from
Section~\ref{subsec:AB_decomp}, and define its post-treatment
counterpart:
\begin{equation}
r_\pre^\oracle
\;:=\;
Y_\pre - X_\pre\omega^\oracle
\;\in\;\R^{T_0},
\qquad
r_\post^\oracle
\;:=\;
Y_\post(0) - X_\post\omega^\oracle
\;\in\;\R^{T_\post}.
\label{eq:r_oracle}
\end{equation}
The null-space component is the exception under $\Pi_\rho$.  By
the null-space continuation property~\eqref{eq:Pi_null},
$\Pi_\rho v = G_q^{\mathrm{null}}v$ for every $v\in\Null(K)$
regardless of $\rho$: the smoother leaves such a component intact and
the forecaster then continues it by $G_q^{\mathrm{null}}$, in both
steps independently of $\rho$.  The null-space content of the oracle
residual therefore contributes a fixed offset to the post-treatment
prediction at every $\rho$ and plays no role in the $\rho$-dependent
tradeoff.  Accordingly, define
\begin{equation}
\eta_\pre^\oracle
\;:=\;
r_\pre^\oracle - P_0\,r_\pre^\oracle
\;=\;
P_\perp\,r_\pre^\oracle,
\qquad
\eta_\post^\oracle
\;:=\;
r_\post^\oracle
\;-\;
G_q^{\mathrm{null}}\,P_0\,r_\pre^\oracle.
\label{eq:eta_def}
\end{equation}
Both vectors are the oracle residual with the same null-space content
$P_0\,r_\pre^\oracle$ removed: directly in the pre-period, and
through its canonical continuation $G_q^{\mathrm{null}}\,P_0\,
r_\pre^\oracle$ in the post-period.  With this common adjustment,
$\mathrm{Term~B}(\rho)$ takes the form
\begin{equation}
\eta_\post^\oracle
\;-\;
\widetilde G_q\,S_\rho\,\eta_\pre^\oracle ,
\label{eq:TB_direct}
\end{equation}
with the derivation given in Appendix~\ref{app:proofs_sec5}.  Both
$\eta_\post^\oracle$ and $\eta_\pre^\oracle$ are $\rho$-independent;
all $\rho$-dependence in $\mathrm{Term~B}(\rho)$ enters through the
smoother~$S_\rho$.

At the endpoint $\rho=0$, $S_0=I$ and
$\mathrm{Term~B}(0)=\eta_\post^\oracle-\widetilde G_q\,
\eta_\pre^\oracle$.  At the other endpoint $\rho=1$, $S_1=P_0$, and
since $\eta_\pre^\oracle\in\Null(K)^\perp$ by construction,
$S_1\,\eta_\pre^\oracle=0$ and therefore
$\mathrm{Term~B}(1)=\eta_\post^\oracle$.  At $\rho=1$ the forecaster
makes no contribution to Term~B beyond the canonical null-space
continuation already absorbed into $\eta_\post^\oracle$, so the
choice of $\widetilde G_q$ is irrelevant at this endpoint.  Between
the endpoints, as $\rho$ increases from $0$ to $1$, the smoothed
input $S_\rho\,\eta_\pre^\oracle$ decreases monotonically toward zero
in the spectral sense of Section~\ref{sec:spectral}, so $\rho$ is the
dial that controls how much of $\eta_\pre^\oracle$ reaches the
forecaster.

How $\rho$ affects $\|\mathrm{Term~B}\|$ then depends on how well the
raw forecast $\widetilde G_q\,\eta_\pre^\oracle$ tracks
$\eta_\post^\oracle$.  When $\widetilde G_q\,\eta_\pre^\oracle$
already predicts $\eta_\post^\oracle$ well, the raw forecast is
useful and $\rho=0$ is preferred.  When
$\widetilde G_q\,\eta_\pre^\oracle$ over-extrapolates the noisier
part of the residual, smoothing its input first improves the
forecast and an interior $\rho>0$ is preferred.  When
$\widetilde G_q\,\eta_\pre^\oracle$ is far from $\eta_\post^\oracle$,
the forecaster is harmful, and $\rho=1$, which discards its input
entirely and leaves $\mathrm{Term~B}$ equal to $\eta_\post^\oracle$,
is preferred.  This logic suggests that a longer post-treatment
window favors a larger $\rho$: the time series forecaster becomes
less reliable at distant horizons, which pushes the preferred regime
toward stronger regularization or full suppression of the time series forecaster.

\subsection{Implications for tuning}
\label{subsec:synthesis}

Sections~\ref{subsec:TermA_channels} and~\ref{subsec:TermB}
characterized the two errors that HSC trades off as $\rho$
varies.  We now collect their implications for the choices a
practitioner makes.  HSC exposes four such choices, and the
decomposition shows they play distinct roles.  The tuning
parameter $\rho$ allocates the pre-treatment residual between
donor matching and the time series forecaster.  The forecast
operator $\widetilde G_q$ and the smoothness order $q$ are
structural: $\widetilde G_q$ fixes how the non-null residual is
extrapolated, and $q$ fixes what counts as smooth and how the
null space is continued.  The cross-validation horizon $h$ does
not change the estimator; it determines which of the two errors
the cross-validation criterion weighs most heavily.  We take $\rho$
first, then $\widetilde G_q$, $q$, and $h$.

$\rho$ is the allocation lever, and the tradeoff it controls is
two-sided.  At large $\rho$ the weight criterion retains the
low-frequency content of the pre-treatment residual, so when
$\cR$ carries a stochastic trend the spurious matching channel
$A_3$ grows and Term~A rises.  At small $\rho$ the criterion is
restricted toward high-frequency content: the metric distortion
$A_1$ widens and the filtered donor design sheds the
low-frequency variation that identifies the weights, so Term~A
rises again through identification loss.  The $\rho$-shape of
Term~B is governed instead by how accurately the composed
operator extrapolates the oracle residual: when the raw forecast
tracks $\eta_\post^\oracle$ well a small $\rho$ is preferred, and
when it does not a large $\rho$, which suppresses the
forecaster's input, is preferred.  Neither error is monotone in
$\rho$, and their sum has no general optimum; the best $\rho$
depends on the data-generating regime.

The forecast operator $\widetilde G_q$ is the lever common to
both errors.  It enters Term~A only through $C_\rho$ inside the
transfer multiplier $\mathcal{P}_\rho$, and it drives Term~B
directly through $\widetilde G_q\,S_\rho\,\eta_\pre^\oracle$.  A
forecaster that extrapolates the oracle residual well lowers the
Term~B floor.  Its effect on the multiplier $\mathcal{P}_\rho$ is
separate: $\mathcal{P}_\rho$ depends on the forecaster only through the
donor forecast residuals $C_\rho=X_\post-\Pi_\rho X_\pre$, and the same
operator can behave differently on $\eta_\pre^\oracle$ than on the
donor paths, so the two effects need not move together.  In the Monte
Carlo study, the constant carry-forward and the ARIMA$(1,1,0)$
forecasters both perform well.\footnote{In both cases the rule is the data-driven component
$\hat G_q$ of the construction in Definition~\ref{def:Gq}:
it is applied to the non-null part $P_{\perp,q}r$ of the
pre-treatment residual $r$, while the null-space part
$P_{0,q}r$ is continued by the canonical $G_q^{\mathrm{null}}$.
When $\hat G_q$ is the constant carry-forward, the two parts
recombine in closed form.  For $q=1$, the continued mean plus
the carried-forward demeaned residual equals the last entry
$r_{T_0}$ held constant, so $\widetilde G_1$ coincides with the
constant carry-forward applied directly to the raw
pre-treatment residual.  For $q=2$, the composed forecast at
horizon $h$ is $r_{T_0}+\hat\beta\,h$, where $\hat\beta$ is the
slope of the line fitted to the pre-treatment residual over
$\Null(K_2)$; equivalently, the fitted linear trend is
extrapolated with its level re-anchored to the last entry
$r_{T_0}$.  Under the ARIMA$(1,1,0)$ rule, the null-space part is
continued in the same way, while the non-null part is forecast
by the ARIMA model.}

The smoothness order $q$ is a structural choice: it fixes the
penalty $K_q$, the null space $\Null(K_q)$, and the canonical
continuation $G_q^{\mathrm{null}}$, and the decomposition makes
its role precise.  At $\rho=0$ the metric is $K_q=D_q'D_q$, so
$A_3(0)\propto\|D_q\,e^\cR_\pre\|_2$ is controlled only if $q$ is
large enough that $D_q$ stationarizes the idiosyncratic
component of $\cR$.  An $I(1)$ idiosyncratic stochastic trend is
stationarized by $q=1$, whereas an $I(2)$ idiosyncratic
stochastic trend is not and requires $q=2$.  Raising $q$ to $2$
also enlarges the null space and allows a more flexible
specification: since $\Null(K_1)\subset\Null(K_2)$, any
approximately affine gap between the treated unit's signal and
its donor combination is then absorbed at no metric-distortion
cost.  These benefits come with two costs.  First, the same
enlargement forces $G_q^{\mathrm{null}}$ to continue that affine
direction: for $q=2$ it extends the line fitted to
$P_{0,2}\,r_\pre^\oracle$, and by the null-space continuation
property~\eqref{eq:Pi_null} this continuation sits inside
$\mathrm{Term~B}$ at every $\rho$ and can carry an extrapolation
bias that grows with the post-treatment window $T_\post$,
whereas the $q=1$ continuation carries a level forward and
leaves a floor that is flat in the horizon.  Second, the
spectrum of $K_2$ is much wider than that of $K_1$, which makes the amplification to the high-frequency components much more significant. The
metric gap $W_{\rho,q}-P_{\perp,q}$, and hence $A_1$, rises more
steeply as $\rho$ falls from $1$.

The cross-validation horizon $h$ does not alter the estimator
but selects which error the criterion minimizes.  At a short
horizon the forecaster's extrapolation bias is typically small,
so the criterion is dominated by Term~A.  At a long horizon the extrapolation bias
accumulates and Term~B can dominate; the criterion then rewards
a large $\hat\rho$ mechanically, which suppresses the time series forecaster.
It is worth noting that a larger $h$ leaves less pre-treatment
data for cross-validation.  In practice, when the pre-treatment
window is short, the researcher must balance the post-treatment
horizon of interest against the amount of pre-treatment data
retained for cross-validation.

These choices are not independent.  The cross-validation
criterion of Section~\ref{subsec:rho_selection} selects $\rho$,
and optionally $q$ and $\widetilde G_q$ jointly.  In practice
the structural choices can be guided by what is known about the
application.  Set $q$ to the smallest order that plausibly
stationarizes the raw data.  Choose $\widetilde G_q$
conservatively unless the pre-treatment data give clear evidence
that a richer forecaster predicts better.  Then let
cross-validation at the policy-relevant horizon $h$ select
$\rho$.

\section{Monte Carlo Evidence}
\label{sec:simulation}

Sections~\ref{sec:tradeoff}--\ref{sec:predict_decompose} motivate
HSC as a soft allocation mechanism, develop its spectral
interpretation, and derive a prediction-error decomposition that
separates donor matching from residual extrapolation.  This
section reports a Monte Carlo study that evaluates HSC's
finite-sample performance against standard synthetic control
estimators and documents how the cross-validated tuning parameter
$\hat\rho$ adapts to the underlying data-generating regime. Full details of the data-generating process appear in
Appendix~\ref{app:mc_estimators_design}.

\subsection{Design}
\label{subsec:mc_design}

The data-generating process retains the additive structure used
throughout the paper.  Untreated potential outcomes follow
\begin{equation}\label{eq:dgp_mc}
Y_{j,t}(0)=L_{j,t}+\kappa\,\mathcal{E}_{j,t}+\varepsilon_{j,t}+\alpha_j+\delta_t,
\end{equation}
where $L_{j,t}=\sum_{k=1}^{3}\Lambda_{j,k}F_{k,t}$ is a low-rank
component built from three factors $F_{k,t}$ (one random walk, one
ARIMA$(1,1,0)$, one stationary AR(1)), $\kappa \mathcal{E}_{j,t}$ is a
unit-specific ARIMA$(1,1,0)$ component whose innovations interpolate
between a common shock and an idiosyncratic shock by
$\sqrt{\rho_u}u_t^{\mathrm{c}}+\sqrt{1-\rho_u}u_{j,t}^{\mathrm{i}}$,
$\varepsilon_{j,t}\sim\mathcal{N}(0,1)$ is stationary noise, $\alpha_j$
is a unit fixed effect, and $\delta_t$ is a time fixed effect.  The
factor paths $F_{k,t}$, the idiosyncratic component $\mathcal{E}_{j,t}$, the
noise $\varepsilon_{j,t}$, and the time fixed effects $\delta_t$ are
redrawn in every replication; the factor loadings $\Lambda_{j,k}$
and the unit fixed effects $\alpha_j$ are drawn once and held fixed
across replications.  The treated unit's loadings are constructed as
a sparse Dirichlet convex combination of donor loadings, placing the
treated unit inside the donor convex hull.  Two scalar parameters
govern the persistence structure: $\kappa\in\{0,0.5,1,2\}$ controls
the amplitude of the unit-specific stochastic trend, and
$\rho_u\in\{0,0.5,1\}$ controls how much of that persistence is
shared across units.  We replicate every $(\kappa,\rho_u)$ cell
$R=500$ times with $T_0=200$ pre-treatment periods, $T_\post=20$
post-treatment periods, and $N_0=50$ donors.  The treated unit
receives no treatment effect, so the post-period RMSE between the
estimated counterfactual and the untreated potential outcome
measures predictive accuracy.

We compare five baseline synthetic control estimators against HSC.
The baselines are plain SC \citep{abadie2010synthetic}, SC with an
intercept (SC-INT, \citealt{doudchenko2016balancing}), synthetic
difference-in-differences \citep[SDID,][]{arkhangelsky2021synthetic},
and two variants of the synthetic business-cycle estimator of
\citet{shi2025synthetic} that differ in the pre-treatment filter used
to extract the cyclical component (SBCA-ARIMA and SBCA-Hamilton).  Plain SC matches in levels and is therefore
vulnerable to spurious matching.  SC-INT removes a unit-specific level shift with an intercept, and
SDID constructs its unit weights from a level-matching problem with an
intercept and a ridge penalty; both still match the residual variation
in levels.  The SBCA family
applies a pre-treatment filter to separate trend from cycle, matches
donors on the cycle, and extrapolates the treated trend
independently.

For HSC we evaluate four time series forecasters that all
satisfy Definition~\ref{def:Gq}: in each case the forecaster is
applied only to the non-null-space component, while the
null-space component is continued by the canonical
$G_q^{\mathrm{null}}$.  The \texttt{last\_constant} forecaster
carries the last fitted value of the non-null-space component
forward as a constant; under $q=1$ this, combined with the
canonical constant continuation of the null space, recovers
carrying the last fitted value of the residual forward, and
under $q=2$ it adds a constant offset to the null-space linear
extension.  The \texttt{arima110} forecaster fits an
ARIMA$(1,1,0)$ to the non-null-space component, the correctly
specified model for the DGP's idiosyncratic ARIMA$(1,1,0)$
stochastic trend.  The \texttt{ar} forecaster fits a stationary
AR(4) to the non-null-space component, whose forecasts
mean-revert.  The \texttt{hamilton} forecaster forecasts the non-null-space
component with the $h$-step-ahead regression of
\citet{hamilton2018you}.  Each forecaster is evaluated at both
smoothness orders $q\in\{1,2\}$.  The cross-validation horizon is fixed at $h=1$ in
this section; Appendix~\ref{app:mc_cv_h} examines the effect of
choosing $h=20$ instead.

\subsection{HSC ties or improves on baselines across regimes}
\label{subsec:mc_rmse}

Figure~\ref{fig:mc_rmse} reports the post-period RMSE pooled across
the 20 post-treatment periods for each of the 12 $(\kappa,\rho_u)$
cells.  Within each panel we show the five baseline estimators as
single bars, and we show each HSC forecaster as two bars side by
side: the lighter bar reports the $q=1$ result and the darker bar
reports $q=2$.

\begin{figure}[!ht]
\caption{Post-period pooled RMSE by method across the
$(\kappa,\rho_u)$ grid}\label{fig:mc_rmse}
\centering
\begin{minipage}{1\linewidth}{
\centering
\includegraphics[width=\textwidth]{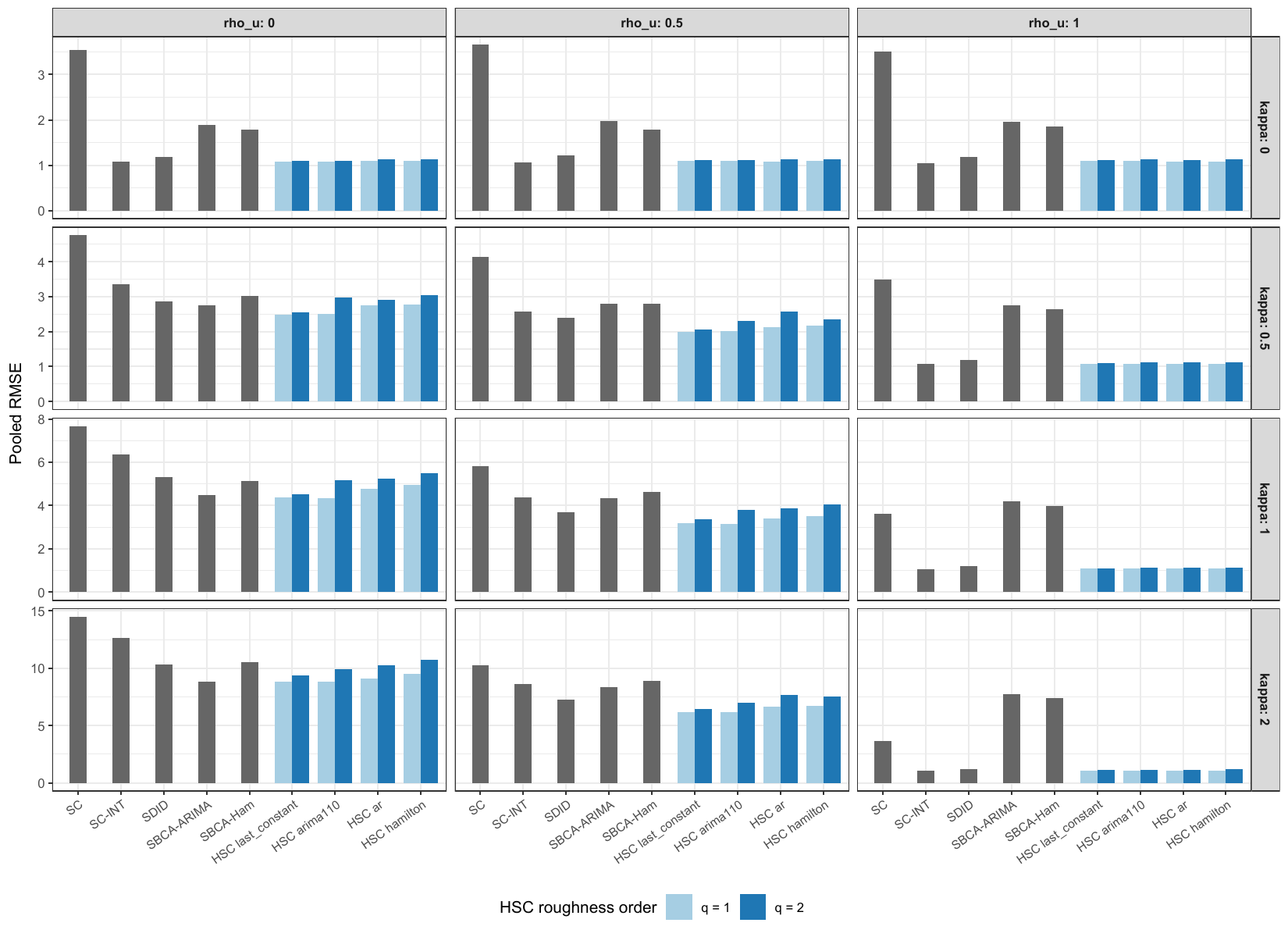}
}it
\footnotesize\textit{Notes:} Bars report pooled RMSE
$\sqrt{R^{-1}T_\post^{-1}\sum_{r,h}(\hat Y_{1,T_0+h}^{(r)}-Y_{1,T_0+h}^{(0,r)})^2}$
across $R=500$ replications and $T_\post=20$ post-treatment periods,
for each $(\kappa,\rho_u)$ cell.  The five baseline estimators
appear as single grey bars; the four HSC forecasters appear as
paired bars with the lighter shade denoting $q=1$ and the darker
shade denoting $q=2$.  $T_0=200$, $N_0=50$, $h=1$
cross-validation; the treated unit's loadings lie inside the donor
convex hull.
\end{minipage}
\end{figure}

Three patterns are visible. First, when the component $\mathcal{E}_{j,t}$ is absent or shared across units, all four HSC forecasters tie SC-INT and SDID and substantially improve on plain SC and the SBCA family. The top row of the figure ($\kappa=0$) and the right column ($\rho_u=1$) display this behavior: HSC's pooled RMSE is within a few percent of SC-INT and SDID, plain SC sits well above the others because it cannot remove the heterogeneous unit intercepts, and the SBCA filters strip away common variation that the other methods can match. Second, when the $\mathcal{E}_{j,t}$ component is present and partially or fully idiosyncratic ($\kappa\ge 0.5$, $\rho_u\le 0.5$), HSC with $q=1$ delivers the lowest pooled RMSE in most cells. The margin between HSC and SC-INT or SDID grows with $\kappa$ and is largest in the corner with the most idiosyncratic drift. The SBCA family performs well when $\mathcal{E}_{j,t}$ is completely idiosyncratic but becomes worse when $\mathcal{E}_{j,t}$ is partially shared. Third, the difference in performance between time series forecasters is visible here; the \texttt{last\_constant} and \texttt{arima110} forecasters perform better than the \texttt{ar} and \texttt{hamilton} forecasters. The $q=1$ HSC also performs better than the $q=2$ configuration. These differences in configurations reflect the design of the DGP, as the idiosyncratic stochastic trend is an ARIMA(1,1,0) model and $q=1$ suffices to control the spurious matching. Appendix~\ref{app:mc_perperiod} shows that this pooled ranking holds horizon by horizon for the strongest configurations, the constant carry-forward and the $q=1$ ARIMA$(1,1,0)$ forecaster, and Appendix~\ref{app:mc_bias_var} attributes the pooled advantage to a variance reduction that more than offsets a small bias penalty.

\subsection{Cross-validation adapts to the regime}
\label{subsec:mc_rho}

The argument in Section~\ref{sec:predict_decompose} predicts that
the optimal $\rho$ depends on the data: when the stochastic trend is shared across units, level matching identifies donor
weights well and $\rho$ near one is optimal; when the stochastic trend is idiosyncratic, the donor pool cannot reproduce it and
filtering it out by pushing $\rho$ toward zero is preferred.
Figure~\ref{fig:mc_rho} reports the distribution of the
cross-validated $\hat\rho$ across the same $(\kappa,\rho_u)$ grid
for the four HSC forecasters at both smoothness orders.

\begin{figure}[!ht]
\caption{Distribution of cross-validated $\hat\rho$ across the
$(\kappa,\rho_u)$ grid}\label{fig:mc_rho}
\centering
\begin{minipage}{1\linewidth}{
\centering
\includegraphics[width=\textwidth]{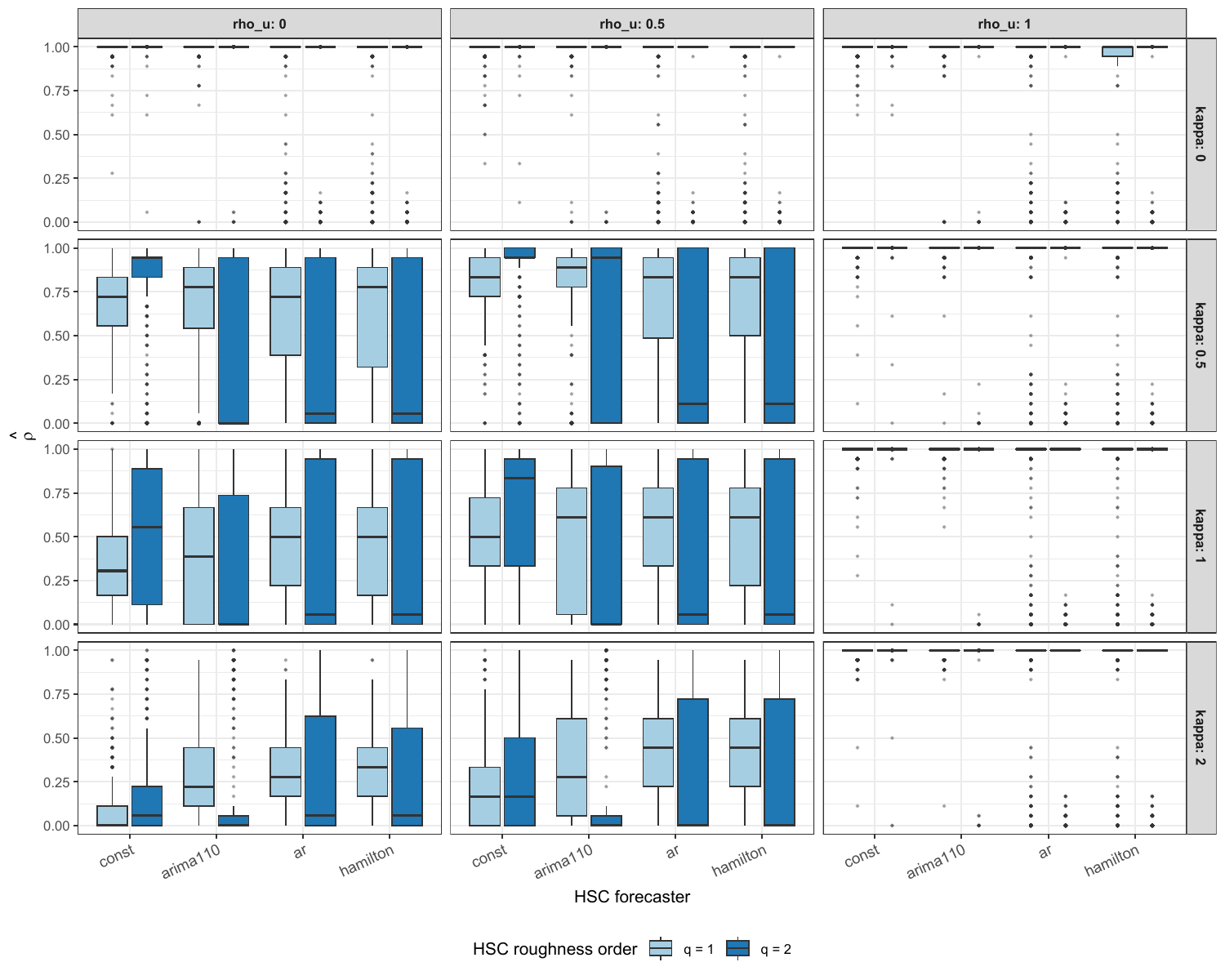}
}
\footnotesize\textit{Notes:} Boxplots report the distribution of
the cross-validated $\hat\rho$ across $R=500$ replications for each
$(\kappa,\rho_u)$ cell, for the four HSC time series forecasters at
smoothness orders $q=1$ (lighter shade) and $q=2$ (darker shade).
$T_0=200$, $N_0=50$, cross-validation horizon $h=1$.
\end{minipage}
\end{figure}

The cross-validated selection behaves as the theory predicts.
When the idiosyncratic stochastic trend is absent ($\kappa=0$, the top row of the
figure or $\rho_u=1$, the right column of the figure) the distribution of $\hat\rho$ almost piles up at one for every
forecaster and both smoothness orders: with no 
idiosyncratic stochastic trend component to filter out, the CV objective rewards
matching in levels.  When the stochastic trend is
present and at least partially unit-specific ($\kappa\ge 1$,
$\rho_u\le 0.5$, bottom-left region of the figure), $\hat\rho$
shifts downward: medians fall to around $0.5$ at $\kappa=1$ and
near zero at $\kappa=2$ with $\rho_u=0$, with substantial dispersion
across replications.  The shift happens for all four forecasters
and at both smoothness orders, confirming that the rolling-origin CV
identifies the correct allocation between donor matching and time series forecaster directly from the data.  Two further
properties of the cross-validated allocation are deferred to the
appendix: Appendix~\ref{app:mc_cv_h} reports how $\hat\rho$ and
post-period RMSE shift when the CV horizon $h$ is extended from
one to twenty.

\section{Empirical Application: The 1997 Handover of Hong Kong}
\label{sec:empirical}

We illustrate HSC on the study of the per-capita GDP after the 1997 return of Hong Kong to Chinese
sovereignty.  The example was introduced by \citet{hsiao2012panel} and
revisited by \citet{shi2025synthetic}.
\citet{hsiao2012panel} difference the data to a stationary growth-rate
outcome and select a small set of geographically and economically
proximate donors, including mainland China and Hong Kong's Asian trading
partners, by an information criterion, fitting an unrestricted
regression of the treated series on the selected donors.
\citet{shi2025synthetic} instead work with the nonstationary level of
annual per-capita GDP, restrict the donor pool to developed economies
with comparable long-run growth, and impose a hard separation between a
treated-unit trend forecast from Hong Kong's own history and a
donor-matched cyclical component.   Throughout we use the sign convention
$\hat\tau_t = Y_{1t}-\hat Y_{1t}(0)$, so a negative value means
observed Hong Kong GDP lies below the estimated no-handover
counterfactual.  The main text uses the annual data of
\citet{shi2025synthetic} so that the comparison with the most closely
related estimator is exact; Appendix~\ref{app:hk_robust} reports
robustness to the cross-validation horizon and to the
\citet{hsiao2012panel} geographic-neighbour donor pool.

\subsection{Data and cross-validated configuration}
\label{subsec:hk_setup}

The panel is the one assembled by \citet{shi2025synthetic}: annual real
per-capita GDP for Hong Kong and eleven developed donor
economies, including Australia, Austria, Canada, Denmark, France, Germany,
Italy, Korea, the Netherlands, New Zealand, and the United
States over $1961$--$2003$.  The treatment year is $1997$, giving
$T_0=36$ pre-treatment years ($1961$--$1996$) and $T_\post=7$
post-treatment years ($1997$--$2003$); the United Kingdom and mainland
China are excluded as parties directly involved in the handover, and
economies exposed to the $1997$--$98$ Asian financial crisis or with
heterogeneous welfare-state structures are excluded, following
\citet{shi2025synthetic}.  The panel-data implementation of
\citet{hsiao2012panel}, which instead selects geographically proximate
Asian donors by an information criterion on differenced data, is
examined as a robustness check in Appendix~\ref{app:hk_robust}.

We evaluate HSC under four configurations: the \texttt{last\_constant}
and ARIMA$(1,1,0)$ forecasters, each at roughness orders
$q\in\{1,2\}$, which are the configurations that performed well in the Monte
Carlo study of Section~\ref{sec:simulation}.  The tuning parameter
$\hat\rho$ is selected by rolling-origin cross-validation with
one-step-ahead horizon ($h=1$), a $21$-point $\rho$-grid, and the
SDID-style ridge $\zeta=T_\post^{1/4}\hat\sigma_{\Delta X}$.
Figure~\ref{fig:hk_cv} reports the cross-validated mean squared
prediction error along the $\rho$-grid for the four configurations.
The cross-validation selects an interior optimum: the best
configuration is ARIMA$(1,1,0)$ at $q=1$ with $\hat\rho=0.11$.

\begin{figure}[!ht]
\caption{Hong Kong: cross-validated MSPE for the four HSC
configurations}\label{fig:hk_cv}
\centering
\begin{minipage}{1\linewidth}{
\centering
\includegraphics[width=\textwidth]{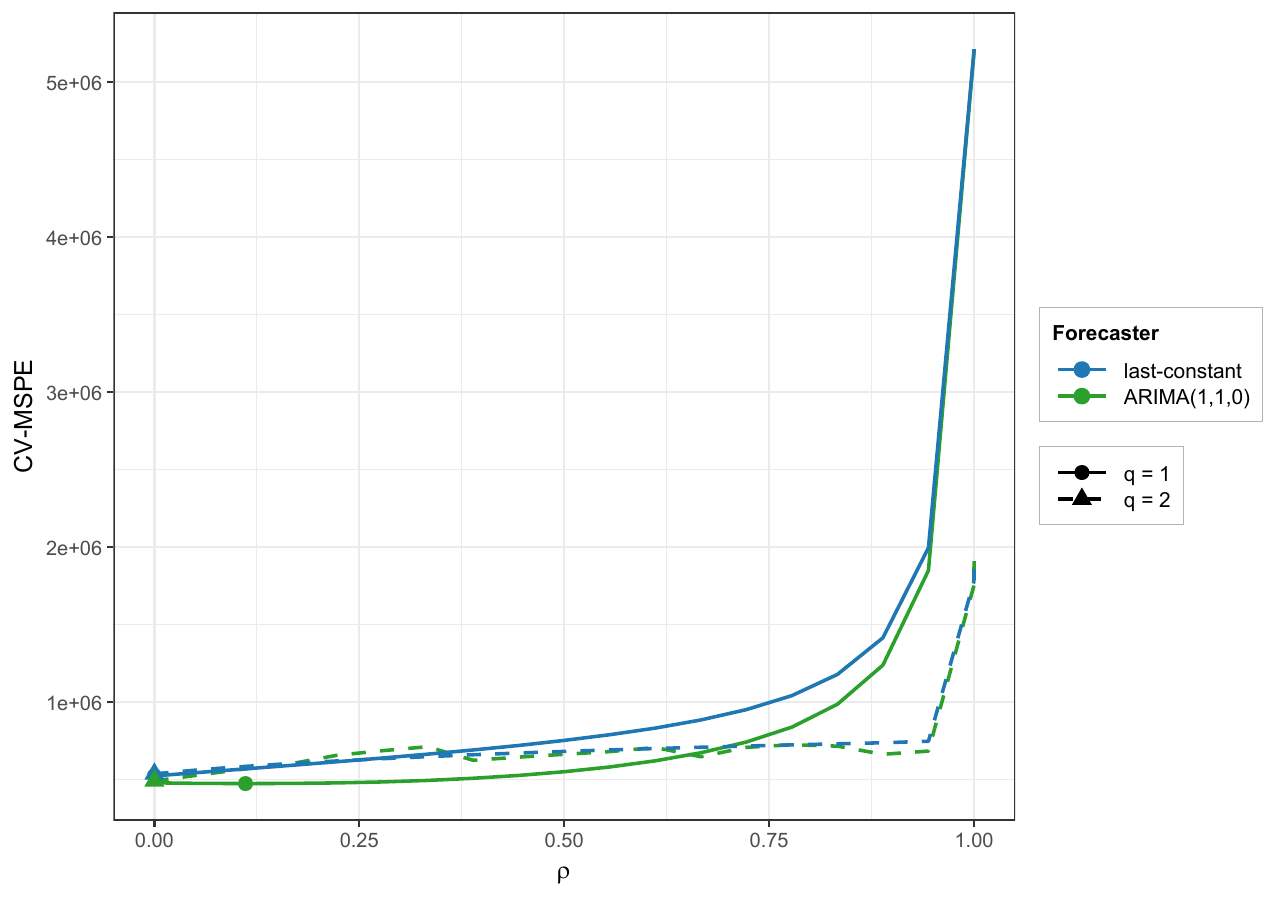}
}
\footnotesize\textbf{Note:} Cross-validated mean squared prediction
error (rolling-origin, one-step-ahead, $\texttt{cv\_last}=15$, $21$
folds) as a function of the spectral allocation parameter $\rho$, for
the four HSC configurations (\texttt{last\_constant} and
ARIMA$(1,1,0)$ forecasters at roughness orders $q=1$ and $q=2$).  A
marker on each curve denotes that configuration's cross-validated
$\hat\rho$.  The selected configuration is ARIMA$(1,1,0)$, $q=1$,
$\hat\rho=0.11$.  Sample: Hong Kong plus eleven developed donor
economies, annual per-capita GDP, $1961$--$1996$ pre-treatment
fitting window.
\end{minipage}
\end{figure}

\subsection{Counterfactual comparison}
\label{subsec:hk_cf}

Figure~\ref{fig:hk_cf} overlays the cross-validation-selected HSC
counterfactual with those of plain synthetic control
\citep[SC,][]{abadie2010synthetic}, synthetic control with an
intercept \citep[SC-INT,][]{doudchenko2016balancing}, synthetic
difference-in-differences \citep[SDID,][]{arkhangelsky2021synthetic},
and the synthetic business-cycle estimator of \citet{shi2025synthetic}
(SBCA-Hamilton), together with observed Hong Kong GDP.  The estimators
fall into three groups.  The HSC counterfactual tracks observed Hong
Kong closely throughout the post-treatment window, reaching about
\$30{,}000 by $2003$ against an observed \$28{,}100, an implied
$\hat\tau_{2003}\approx-\$1{,}900$.  SC, SC-INT, and SDID drift
moderately above the observed series.  SBCA-Hamilton diverges sharply:
because it forecasts Hong Kong's post-$1997$ trend by a recursive
linear projection of its own pre-$1997$ history, and Hong Kong's
pre-handover growth was unusually steep, that projection rises to
roughly \$36{,}100 by $2003$, an implausibly large effect.  

\begin{figure}[!ht]
\caption{Hong Kong: counterfactual per-capita GDP by
estimator}\label{fig:hk_cf}
\centering
\includegraphics[width=0.8\textwidth]{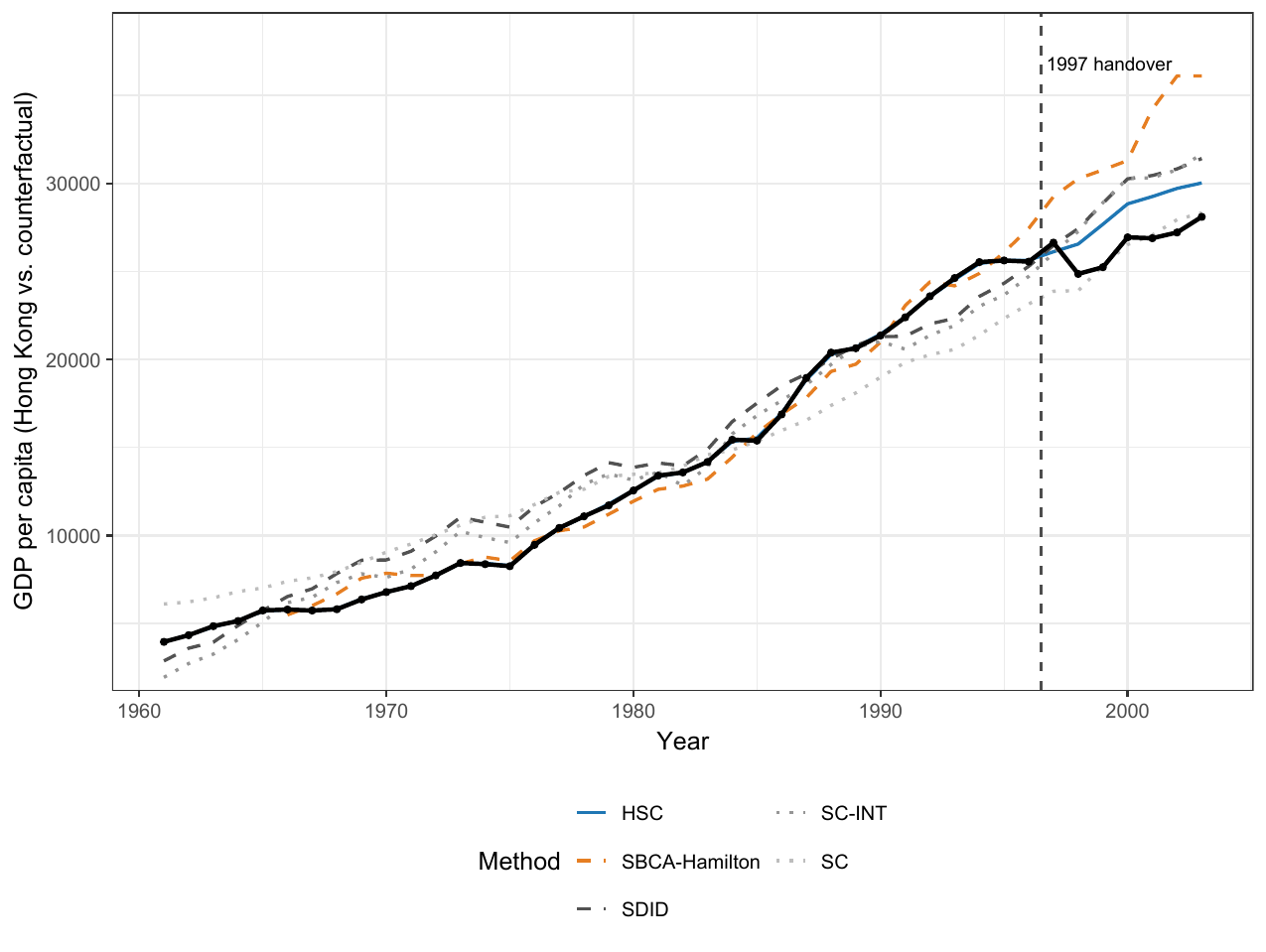}
\begin{minipage}{1\linewidth}
\footnotesize\textbf{Note:} Observed Hong Kong per-capita GDP (solid
black, with markers) and estimated no-handover counterfactuals from
HSC (the cross-validation-selected ARIMA$(1,1,0)$, $q=1$,
$\hat\rho=0.11$), SBCA-Hamilton \citep{shi2025synthetic}, SDID
\citep{arkhangelsky2021synthetic}, SC-INT
\citep{doudchenko2016balancing}, and plain SC
\citep{abadie2010synthetic}.  The vertical dashed line marks the
$1997$ handover.  SBCA-Hamilton's post-$1997$ trend is a recursive
linear projection of Hong Kong's pre-$1997$ series and rises well
above the observed path.  Sample: eleven developed donor economies,
annual per-capita GDP, $1961$--$2003$ ($T_0=36$, $T_\post=7$).
\end{minipage}
\end{figure}

\subsection{Donor-weight diversification}
\label{subsec:hk_weights}

Figure~\ref{fig:hk_weights} compares the donor weights that HSC, SDID,
SC-INT, and SBCA-Hamilton assign across the eleven donors.  The
contrast is stark.  HSC distributes weight broadly across all eleven
economies, with no single weight exceeding $0.19$ (the largest are
Korea $0.18$, Germany $0.14$, the United States $0.13$, and Italy
$0.11$).  SC-INT collapses onto a corner solution, placing $0.91$ on
the United States and $0.09$ on Korea.  SBCA-Hamilton concentrates on
four donors (Italy $0.43$, Germany $0.25$, Korea $0.18$, the United
States $0.09$).  SDID is intermediate: its ridge penalty
de-concentrates the weights relative to SC-INT. The United States
weight falls from $0.91$ to $0.56$ and mass spreads to Denmark
($0.23$), Korea, and Germany.  

\begin{figure}[!ht]
\caption{Hong Kong: donor weights by estimator}\label{fig:hk_weights}
\centering
\includegraphics[width=0.8\textwidth]{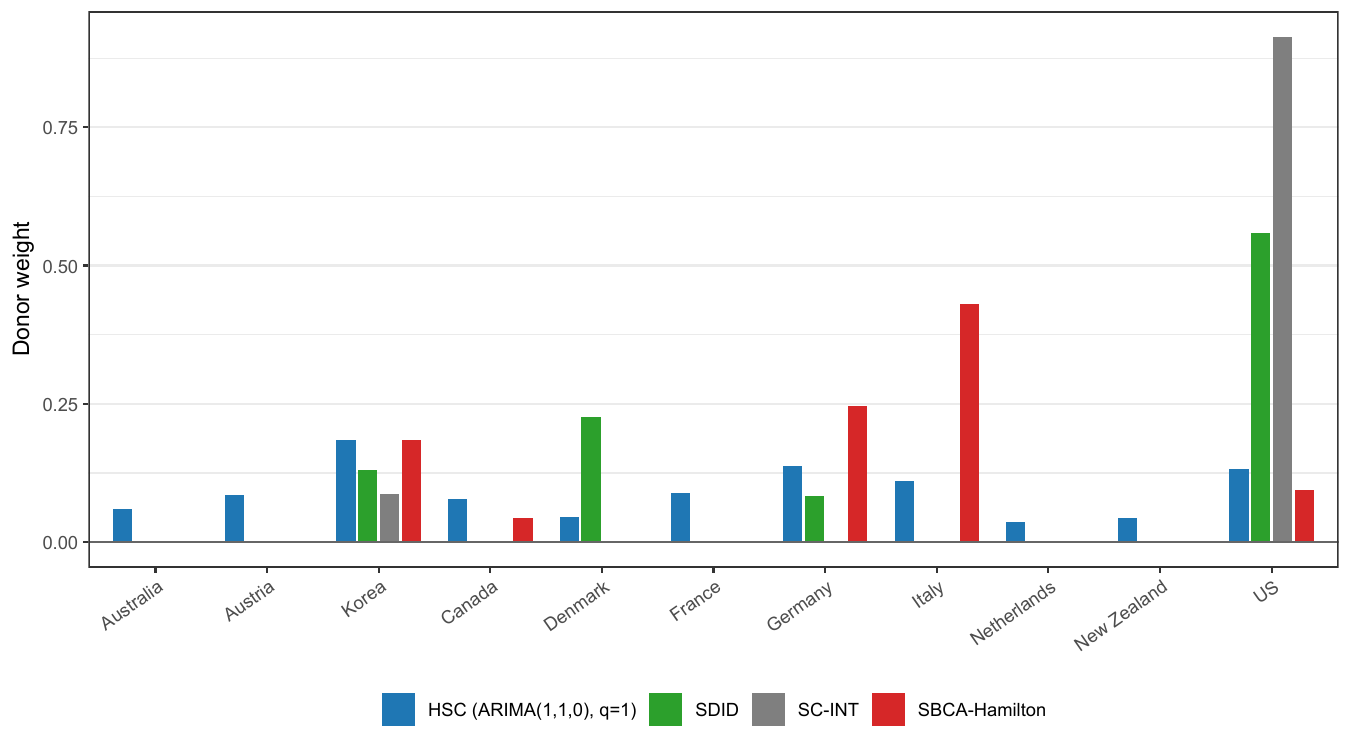}
\begin{minipage}[]{1\linewidth}
\footnotesize\textbf{Note:} Donor weights assigned to the eleven
developed donor economies by the cross-validation-selected HSC
configuration (ARIMA$(1,1,0)$, $q=1$, $\hat\rho=0.11$), SDID, SC-INT,
and SBCA-Hamilton.  All four estimators constrain the donor weights to
sum to one.  HSC spreads weight across all eleven donors; SC-INT and
SBCA-Hamilton concentrate on one and four donors respectively; SDID's
ridge penalty places it between these extremes.  Sample: annual
per-capita GDP, $1961$--$1996$ pre-treatment fitting window.
\end{minipage}
\end{figure}

\subsection{Out-of-sample accuracy}
\label{subsec:hk_oos}

Pre-treatment fit cannot discriminate among these estimators, because
each minimizes a different in-sample criterion.  We therefore evaluate
every method by the same rolling-origin, one-step-ahead
cross-validated MSPE used to select $\hat\rho$.  HSC is the most
accurate method by a wide margin: at $h=1$ the selected configuration
attains a CV-MSPE of $4.8\times10^{5}$, and all four HSC
configurations ($4.8$--$5.4\times10^{5}$) fall below every competing
estimator---SBCA-Hamilton ($1.2\times10^{6}$), SDID
($1.5\times10^{6}$), SC-INT ($3.6\times10^{6}$), and plain SC
($9.0\times10^{6}$).  HSC thus improves on the synthetic business-cycle
estimator by roughly a factor of two and a half, and on the
level-matching estimators by one to two orders of magnitude, on a
criterion that uses only pre-treatment data.  Appendix~\ref{app:hk_robust}
shows that this ranking is preserved when the cross-validation horizon
is lengthened to $h=4$ and when the donor pool is replaced by the
geographic-neighbour pool of \citet{hsiao2012panel}.

\clearpage

\section{Conclusion}
\label{sec:conclusion}

Harmonic synthetic control (HSC) addresses counterfactual estimation when untreated outcomes may contain both shared and idiosyncratic stochastic trends, a regime that the researcher cannot reliably distinguish ex ante. Instead of committing in advance to matching in the raw level or to differencing before matching, HSC introduces a treated-unit-specific smooth component and a single tuning parameter that rolling-origin cross-validation uses to allocate predictive responsibility between donor matching and time series forecaster. The spectral interpretation, the prediction-error decomposition, the Monte Carlo evidence, and the Hong Kong application all point to the same conclusion: a soft, data-driven allocation is more robust and can adapt to different regimes.

The present paper develops and evaluates the HSC point estimator; formal uncertainty quantification is the natural next step. A prediction interval for the HSC counterfactual should combine donor-weight estimation uncertainty with out-of-sample forecast-error calibration for the smooth component, extending the synthetic-control prediction-interval framework of \citet{cattaneo2021prediction} to the soft-allocation setting. Because that construction rests on additional assumptions beyond those required for the point estimator, we leave it to future works.

\clearpage
\onehalfspacing
\bibliographystyle{apsr}
\bibliography{scm}
\clearpage

\appendix
\onehalfspacing
\setcounter{page}{1}
\setcounter{table}{0}
\setcounter{figure}{0}
\setcounter{equation}{0}
\setcounter{footnote}{0}
\setcounter{lemma}{0}
\setcounter{proposition}{0}

\renewcommand{\theassumption}{A\arabic{assumption}}
\renewcommand\thetable{A\arabic{table}}
\renewcommand\thefigure{A\arabic{figure}}
\renewcommand{\thepage}{A-\arabic{page}}
\renewcommand{\theequation}{A\arabic{equation}}
\renewcommand{\thefootnote}{A\arabic{footnote}}
\renewcommand{\thelemma}{A\arabic{lemma}}

\begin{center}
    \Large\bf Online Supplementary Materials
\end{center}
\bigskip

\section{Proofs}
\label{app:proofs}

This appendix collects the proofs of all propositions, lemmas, and
corollaries stated in the main text.  Section~\ref{app:proofs_sec3}
covers the results of Section~\ref{sec:estimator} (the HSC estimator
and its endpoint properties), and
Section~\ref{app:proofs_sec5} covers the results of
Section~\ref{sec:predict_decompose} (the prediction-error
decomposition and the Term~A and Term~B envelopes).

\subsection{Proofs for Section~\ref{sec:estimator}}
\label{app:proofs_sec3}

\subsubsection{Proof of Proposition~\ref{prop:hsc_profiled}}

\begin{proof}
Fix $q\in\{1,2\}$, $\rho\in(0,1)$, and $\omega\in\Delta_{N_0}$, and write
$r:=r_\pre(\omega)=Y_{\pre}-X_\pre\omega$. The inner problem is
\begin{equation}
\min_{E\in\R^{T_0}} f(E),
\qquad
f(E):=\frac{1}{\rho}\|r-E\|_2^2+\frac{1}{1-\rho}E'K_qE.
\label{eq:app_profile_obj}
\end{equation}
The Hessian of $f$ is $\frac{2}{\rho}I_{T_0}+\frac{2}{1-\rho}K_q\succ 0$
(since $\rho\in(0,1)$), so $f$ is strictly convex and the minimizer is unique.

\paragraph*{Step 1: First-order condition.}
Setting $\nabla_E f=0$ gives
\[
\frac{1}{\rho}(\hat E-r)+\frac{1}{1-\rho}K_q\hat E=0.
\]
Multiplying by $\rho$ and writing $\lambda_\rho:=\rho/(1-\rho)$, this becomes
$(I_{T_0}+\lambda_\rho K_q)\hat E=r$, whence
\begin{equation}
\hat E = S_{\rho,q}\,r,
\qquad
S_{\rho,q}:=(I_{T_0}+\lambda_\rho K_q)^{-1}.
\label{eq:app_Ehat}
\end{equation}

\paragraph*{Step 2: Profiled objective value.}
The FOC can be rewritten as
\begin{equation}
\frac{1}{\rho}(r-\hat E)=\frac{1}{1-\rho}K_q\hat E.
\label{eq:app_foc_rewrite}
\end{equation}
Using~\eqref{eq:app_foc_rewrite} to substitute for the first term
evaluated at $\hat E$,
\begin{align}
f(\hat E)
&=
(r-\hat E)'\!\cdot\!\frac{1}{\rho}(r-\hat E)
\;+\;
\hat E'\!\cdot\!\frac{1}{1-\rho}K_q\hat E
\notag\\
&=
(r-\hat E)'\!\cdot\!\frac{1}{1-\rho}K_q\hat E
\;+\;
\hat E'\!\cdot\!\frac{1}{1-\rho}K_q\hat E
\notag\\
&=
r'\!\cdot\!\frac{1}{1-\rho}K_q\hat E
\notag\\
&=
\frac{1}{1-\rho}\,r'K_qS_{\rho,q}\,r.
\label{eq:app_fmin}
\end{align}
Now observe that
\[
\frac{1}{\rho}(I_{T_0}-S_{\rho,q})
=
\frac{1}{\rho}\,\lambda_\rho K_q S_{\rho,q}
=
\frac{1}{1-\rho}\,K_q S_{\rho,q},
\]
where the first equality uses
$(I_{T_0}+\lambda_\rho K_q)S_{\rho,q}=I_{T_0}$, which rearranges to
$I_{T_0}-S_{\rho,q}=\lambda_\rho K_q S_{\rho,q}$.\footnote{Because
$S_{\rho,q}=(I_{T_0}+\lambda_\rho K_q)^{-1}$ is a function of
$K_q$, the two operators commute and $K_qS_{\rho,q}=S_{\rho,q}K_q$,
so either ordering may be used throughout.}
Therefore,
defining $W_{\rho,q}:=\frac{1}{\rho}(I_{T_0}-S_{\rho,q})$,
\begin{equation}
f(\hat E)=r'W_{\rho,q}\,r.
\label{eq:app_fmin_W}
\end{equation}

\paragraph*{Step 3: Profiled weight problem.}
In the full HSC criterion~\eqref{eq:hsc_primal}, only $f(E)$ involves $E$.
Replacing $f(E)$ by its minimum~\eqref{eq:app_fmin_W} yields
\[
\hat\omega(\rho,q)
\;\in\;
\argmin_{\omega\in\Delta_{N_0}}
\left\{
r_\pre(\omega)'W_{\rho,q}\,r_\pre(\omega)
+\zeta^2T_0\|\omega\|_2^2
\right\},
\]
which is~\eqref{eq:hsc_profiled}. Substituting $\hat\omega(\rho,q)$
into~\eqref{eq:app_Ehat} gives~\eqref{eq:Ehat_profiled}.
\end{proof}

\subsubsection{Proof of Proposition~\ref{prop:endpoints}}

\begin{proof}
Fix $q\in\{1,2\}$ and write $K=K_q$, $P_0=P_{0,q}$. Let $0=\mu_1=\cdots=\mu_d<\mu_{d+1}\le\cdots\le\mu_{T_0}$ be the eigenvalues of $K$ (with $d=\dim\Null(K)=q$), and let $v_1,\dots,v_{T_0}$ be a corresponding orthonormal eigenbasis. For $\rho\in(0,1)$, the definitions $\lambda_\rho=\rho/(1-\rho)$ and $S_{\rho,q}=(I+\lambda_\rho K)^{-1}$ give
\[
S_{\rho,q}\,v_k = \frac{1}{1+\lambda_\rho\mu_k}\,v_k,
\qquad
W_{\rho,q}\,v_k = \frac{1}{\rho}\Bigl(1-\frac{1}{1+\lambda_\rho\mu_k}\Bigr)v_k
= \frac{\mu_k}{(1-\rho)+\rho\mu_k}\,v_k.
\]

\paragraph*{Part (i): limits.}
For each eigenvalue $\mu_k$, define $s_k(\rho):=1/(1+\lambda_\rho\mu_k)$ and $w_k(\rho):=\mu_k/\bigl((1-\rho)+\rho\mu_k\bigr)$.

\emph{As $\rho\downarrow 0$}: $\lambda_\rho\to 0$, so $s_k(\rho)\to 1$ for every $k$. Hence $S_{\rho,q}\to I_{T_0}$. Similarly, $w_k(\rho)\to\mu_k$ for every $k$, so $W_{\rho,q}\to K=K_q$.

\emph{As $\rho\uparrow 1$}: $\lambda_\rho\to\infty$. For the null-space eigenvectors ($\mu_k=0$, $k\le d$): $s_k(\rho)=1$ and $w_k(\rho)=0$ for all $\rho$. For the positive eigenvectors ($\mu_k>0$, $k>d$): $s_k(\rho)=1/(1+\lambda_\rho\mu_k)\to 0$ and $w_k(\rho)=\mu_k/((1-\rho)+\rho\mu_k)\to 1$. Therefore $S_{\rho,q}\to P_{0,q}$ and $W_{\rho,q}\to I_{T_0}-P_{0,q}$.

Since these limits agree with the boundary definitions~\eqref{eq:S_boundary} and~\eqref{eq:W_boundary}, both families admit unique continuous extensions to $[0,1]$.

\paragraph*{Part (ii): positive semidefiniteness.}
For $\rho\in(0,1)$, the eigenvalues $w_k(\rho)\ge 0$ for all $k$, so $W_{\rho,q}\succeq 0$. At the endpoints, $W_{0,q}=K=D_q'D_q\succeq 0$ and $W_{1,q}=I_{T_0}-P_{0,q}$ is an orthogonal projector, hence positive semidefinite.
\end{proof}

\subsection{Proofs for Section~\ref{sec:predict_decompose}}
\label{app:proofs_sec5}

\subsubsection{Proof of Proposition~\ref{prop:AB}}
\label{app:proof_AB}

\paragraph*{Term~A.}
From the oracle predictor~\eqref{eq:oracle_pred} and the HSC
counterfactual:
\begin{align*}
\mathrm{Term~A}(\rho)
&=
\widetilde Y_\post^\oracle(0;\rho) - \hat Y_\post(0;\rho,q)\\
&=
X_\post\omega^\oracle + \Pi_\rho\,r_\pre^\oracle
-X_\post\hat\omega - \Pi_\rho\,r_\pre(\hat\omega)\\
&=
X_\post(\omega^\oracle-\hat\omega)
+\Pi_\rho\bigl(r_\pre^\oracle - r_\pre(\hat\omega)\bigr).
\end{align*}
Since $r_\pre(\omega)=Y_\pre-X_\pre\omega$ is linear in $\omega$,
$r_\pre^\oracle - r_\pre(\hat\omega)
= -X_\pre(\omega^\oracle-\hat\omega)$, so
\[
\mathrm{Term~A}(\rho)
=
(X_\post - \Pi_\rho X_\pre)(\omega^\oracle-\hat\omega).
\]

\paragraph*{Term~B.}
From the definition~\eqref{eq:AB} and the oracle
predictor~\eqref{eq:oracle_pred}:
\begin{align*}
\mathrm{Term~B}(\rho)
&=
Y_\post(0) - \widetilde Y_\post^\oracle(0;\rho)\\
&=
Y_\post(0) - X_\post\omega^\oracle - \Pi_\rho\,r_\pre^\oracle\\
&=
(Y_\post(0) - X_\post\omega^\oracle)
- \Pi_\rho(Y_\pre - X_\pre\omega^\oracle).
\end{align*}
\hfill$\square$

\subsubsection{Gradient decomposition}
\label{app:grad_decomp}

The proof of Proposition~\ref{prop:TermA_env} relies on an exact
decomposition of the gradient discrepancy between the HSC and oracle
objectives at the oracle weights.

\begin{lemma}[Gradient decomposition at $\omega^\oracle$]
\label{lem:grad_decomp}
Let $F_\rho$ and $H^\oracle$ be defined
by~\eqref{eq:F_def}--\eqref{eq:H_def}.  Then
\begin{equation}
\frac{1}{2}\bigl(\nabla F_\rho(\omega^\oracle)
-\nabla H^\oracle(\omega^\oracle)\bigr)
\;=\;
-\underbrace{L_{0,\pre}'(W_\rho-P_\perp)\,
  e^L_\pre}_{g_1}
\;-\;
\underbrace{\cR_{0,\pre}'W_\rho\,
  e^L_\pre}_{g_2}
\;-\;
\underbrace{X_\pre'W_\rho\,
  e^\cR_\pre}_{g_3}.
\label{eq:grad_diff}
\end{equation}
\end{lemma}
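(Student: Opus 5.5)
The plan is a direct gradient computation followed by an algebraic regrouping of terms. Both $F_\rho$ and $H^\oracle$ carry the same ridge term $\zeta^2T_0\|\omega\|_2^2$, so its gradient $2\zeta^2T_0\omega$ cancels in the difference. Only the two quadratic fit terms need to be differentiated. Since $W_\rho$ is symmetric (Proposition~\ref{prop:endpoints}), the gradient of $r_\pre(\omega)'W_\rho r_\pre(\omega)$ with $r_\pre(\omega)=Y_\pre-X_\pre\omega$ is $-2X_\pre'W_\rho r_\pre(\omega)$. For the oracle term we use that $P_\perp$ is a symmetric idempotent, so $\|P_\perp(L_{1,\pre}-L_{0,\pre}\omega)\|_2^2=(L_{1,\pre}-L_{0,\pre}\omega)'P_\perp(L_{1,\pre}-L_{0,\pre}\omega)$. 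Its gradient is therefore $-2L_{0,\pre}'P_\perp(L_{1,\pre}-L_{0,\pre}\omega)$. Evaluating both at $\omega^\oracle$ and halving gives
\[
\tfrac12\bigl(\nabla F_\rho(\omega^\oracle)-\nabla H^\oracle(\omega^\oracle)\bigr)
=-X_\pre'W_\rho\,r_\pre^\oracle+L_{0,\pre}'P_\perp\,e^L_\pre .
\]

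Next, we substitute the additive structure~\eqref{eq:DGP_2comp}: $r_\pre^\oracle=e^L_\pre+e^\cR_\pre$ and $X_\pre=L_{0,\pre}+\cR_{0,\pre}$. The first term becomes
\[
-X_\pre'W_\rho e^\cR_\pre-X_\pre'W_\rho e^L_\pre .
\]
The piece $-X_\pre'W_\rho e^\cR_\pre$ is already $-g_3$. The remaining piece splits as
\[
-X_\pre'W_\rho e^L_\pre=-L_{0,\pre}'W_\rho e^L_\pre-\cR_{0,\pre}'W_\rho e^L_\pre .
\]
Here the second summand is $-g_2$. The first summand combines with the oracle contribution $+L_{0,\pre}'P_\perp e^L_\pre$ to give $-L_{0,\pre}'(W_\rho-P_\perp)e^L_\pre=-g_1$. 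Collecting the three pieces yields~\eqref{eq:grad_diff}.

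The only point needing care is the boundary cases, which fix what "gradient" means for each $\rho\in[0,1]$. I would note that $F_\rho$ is a smooth quadratic on all of $\R^{N_0}$ for every $\rho$, including $\rho\in\{0,1\}$ under the extended definitions. The gradient is thus the unconstrained gradient of the quadratic, evaluated at $\omega^\oracle\in\Delta_{N_0}$, and no constrained subgradient is involved. Symmetry of $W_\rho$ holds at every $\rho$ by Proposition~\ref{prop:endpoints}(ii), and $P_\perp$ is symmetric and idempotent, so the computation is uniform in $\rho$. Finally, the derivation never uses the null-space identities $W_\rho P_\perp e^L_\pre=W_\rho e^L_\pre$. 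The decomposition holds with $e^L_\pre$ itself, as stated, so no further step is required.
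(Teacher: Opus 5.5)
Your proof is correct and follows the paper's argument essentially line for line. Both proofs differentiate the two quadratics, cancel the common ridge gradient, substitute $r_\pre^\oracle=e^L_\pre+e^\cR_\pre$ and $X_\pre=L_{0,\pre}+\cR_{0,\pre}$, and regroup the $L_{0,\pre}'$ terms into $-L_{0,\pre}'(W_\rho-P_\perp)e^L_\pre$. You are also right that the null-space annihilation property is unnecessary: the paper invokes it when splitting $W_\rho r_\pre^\oracle$, but that step follows from linearity alone.
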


\begin{proof}
The gradients are
\begin{align*}
\nabla F_\rho(\omega)
&=
-2X_\pre'W_\rho\,r_\pre(\omega)+2\zeta^2T_0\omega,\\
\nabla H^\oracle(\omega)
&=
-2L_{0,\pre}'P_\perp(L_{1,\pre}-L_{0,\pre}\omega)
+2\zeta^2T_0\omega.
\end{align*}
At $\omega=\omega^\oracle$, the ridge terms cancel.  For the oracle
objective,
$P_\perp(L_{1,\pre}-L_{0,\pre}\omega^\oracle)=P_\perp\,e^L_\pre$.
For the HSC objective,
$r_\pre(\omega^\oracle)=r_\pre^\oracle=e^L_\pre+e^\cR_\pre$.
Since $P_0\,e^L_\pre\in\Null(K)$ and $W_\rho$ annihilates
$\Null(K)$ (Proposition~\ref{prop:endpoints}),
$W_\rho\,r_\pre^\oracle=W_\rho\,e^L_\pre+W_\rho\,e^\cR_\pre$.
Expanding $X_\pre=L_{0,\pre}+\cR_{0,\pre}$:
\begin{align*}
\tfrac{1}{2}\bigl(\nabla F_\rho(\omega^\oracle)
-\nabla H^\oracle(\omega^\oracle)\bigr)
&=
-X_\pre'W_\rho(e^L_\pre+e^\cR_\pre)
+L_{0,\pre}'P_\perp\,e^L_\pre\\
&=
-L_{0,\pre}'W_\rho\,e^L_\pre
-\cR_{0,\pre}'W_\rho\,e^L_\pre
-X_\pre'W_\rho\,e^\cR_\pre
+L_{0,\pre}'P_\perp\,e^L_\pre\\
&=
-L_{0,\pre}'(W_\rho-P_\perp)\,e^L_\pre
-\cR_{0,\pre}'W_\rho\,e^L_\pre
-X_\pre'W_\rho\,e^\cR_\pre,
\end{align*}
where the last line groups
$-L_{0,\pre}'W_\rho\,e^L_\pre+L_{0,\pre}'P_\perp\,e^L_\pre
=-L_{0,\pre}'(W_\rho-P_\perp)\,e^L_\pre$.
\end{proof}

\subsubsection{Constrained minimizer gap}

\begin{lemma}\label{lem:Q_gap}
Let
$\hat\omega(\rho)\in\argmin_{\omega\in\Delta_{N_0}}F_\rho(\omega)$
and
$\omega^\oracle\in\argmin_{\omega\in\Delta_{N_0}}H^\oracle(\omega)$.
Then
\begin{equation}
\bigl\|\omega^\oracle-\hat\omega(\rho)\bigr\|_{Q_\rho}
\;\le\;
\frac{1}{2T_0}\,
\bigl\|\nabla F_\rho(\omega^\oracle)
-\nabla H^\oracle(\omega^\oracle)\bigr\|_{Q_\rho^{-1}}.
\label{eq:lem_Qgap}
\end{equation}
\end{lemma}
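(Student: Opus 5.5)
The plan is the standard first-order-optimality argument for minimizers of two convex functions over a common convex set. It exploits the fact that $F_\rho$ is an exact quadratic whose Hessian is, up to the factor $2T_0$, the matrix $Q_\rho$ of~\eqref{eq:Q_def}.

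The first step is to record the curvature identity. Since $r_\pre(\omega)=Y_\pre-X_\pre\omega$ is affine in $\omega$, we have
\[
\nabla^2F_\rho=2X_\pre'W_\rho X_\pre+2\zeta^2T_0I_{N_0}=2T_0Q_\rho .
\]
Here $Q_\rho\succ 0$ because $W_\rho\succeq 0$ (Proposition~\ref{prop:endpoints}(ii)) and $\zeta>0$. The gradient of a quadratic is affine, so for $d:=\omega^\oracle-\hat\omega(\rho)$ this gives exactly
\[
\nabla F_\rho(\omega^\oracle)-\nabla F_\rho(\hat\omega(\rho))=2T_0Q_\rho d .
\]
Consequently $2T_0\|d\|_{Q_\rho}^2=\langle\nabla F_\rho(\omega^\oracle)-\nabla F_\rho(\hat\omega),d\rangle$. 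As a side remark, $F_\rho$ is strictly convex, so $\hat\omega(\rho)$ is unique, although uniqueness is not needed.

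The second step is to invoke the variational inequalities characterizing each minimizer over the convex set $\Delta_{N_0}$. For all $\omega\in\Delta_{N_0}$,
\[
\langle\nabla F_\rho(\hat\omega),\,\omega-\hat\omega\rangle\ge 0,
\qquad
\langle\nabla H^\oracle(\omega^\oracle),\,\omega-\omega^\oracle\rangle\ge0 .
\]
I will take $\omega=\omega^\oracle$ in the first inequality, which gives $\langle\nabla F_\rho(\hat\omega),d\rangle\ge0$. I will take $\omega=\hat\omega$ in the second, which gives $\langle\nabla H^\oracle(\omega^\oracle),d\rangle\le 0$. Using the first of these to drop $-\langle\nabla F_\rho(\hat\omega),d\rangle\le 0$ from the identity above yields $2T_0\|d\|_{Q_\rho}^2\le\langle\nabla F_\rho(\omega^\oracle),d\rangle$. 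Adding the nonnegative quantity $-\langle\nabla H^\oracle(\omega^\oracle),d\rangle$ to the right-hand side then gives
\[
2T_0\|d\|_{Q_\rho}^2\le\langle\nabla F_\rho(\omega^\oracle)-\nabla H^\oracle(\omega^\oracle),\,d\rangle .
\]

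The final step is the dual-norm Cauchy--Schwarz inequality $\langle g,d\rangle=\langle Q_\rho^{-1/2}g,Q_\rho^{1/2}d\rangle\le\|g\|_{Q_\rho^{-1}}\|d\|_{Q_\rho}$. If $d=0$ there is nothing to prove; otherwise dividing by $2T_0\|d\|_{Q_\rho}$ gives~\eqref{eq:lem_Qgap}.

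There is no real obstacle. The only points needing care are the constant (that the Hessian is $2T_0Q_\rho$ rather than $T_0Q_\rho$, which produces the factor $1/(2T_0)$) and the signs in the two variational inequalities. The argument uses only convexity of $H^\oracle$ and the minimizing property of $\omega^\oracle$, not its uniqueness.
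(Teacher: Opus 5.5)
Your proposal is correct and follows essentially the same argument as the paper: the two variational inequalities evaluated at each other's minimizer, the exact identity $\nabla F_\rho(\hat\omega)-\nabla F_\rho(\omega^\oracle)=-2T_0Q_\rho d$ coming from the quadratic structure of $F_\rho$, and the dual-norm Cauchy--Schwarz inequality. The only difference is cosmetic: you apply the two inequalities one after the other instead of adding them first, and this does not change the argument.
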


\begin{proof}
Let $d:=\omega^\oracle-\hat\omega(\rho)$.  The minimizers of the
convex differentiable functions $F_\rho$ and $H^\oracle$ over the
closed convex set $\Delta_{N_0}$ satisfy the variational inequalities
\[
\bigl\langle\nabla F_\rho(\hat\omega),\,
  \omega-\hat\omega\bigr\rangle\ge 0
\quad\forall\omega\in\Delta_{N_0},
\qquad
\bigl\langle\nabla H^\oracle(\omega^\oracle),\,
  \omega-\omega^\oracle\bigr\rangle\ge 0
\quad\forall\omega\in\Delta_{N_0}.
\]
Setting $\omega=\omega^\oracle$ in the first and
$\omega=\hat\omega$ in the second, then adding:
\[
\bigl\langle\nabla F_\rho(\hat\omega)
-\nabla H^\oracle(\omega^\oracle),\,d\bigr\rangle\ge 0.
\]
Inserting $\pm\nabla F_\rho(\omega^\oracle)$ and using
$\nabla F_\rho(\hat\omega)-\nabla F_\rho(\omega^\oracle)
=-2T_0Q_\rho d$ (since $F_\rho$ is quadratic with Hessian
$2T_0Q_\rho$):
\[
2T_0\|d\|_{Q_\rho}^2
\le
\bigl\langle\nabla F_\rho(\omega^\oracle)
-\nabla H^\oracle(\omega^\oracle),\,d\bigr\rangle
\le
\bigl\|\nabla F_\rho(\omega^\oracle)
-\nabla H^\oracle(\omega^\oracle)\bigr\|_{Q_\rho^{-1}}
\|d\|_{Q_\rho},
\]
where the last step is the generalized Cauchy--Schwarz inequality.
Dividing both sides by $2T_0\|d\|_{Q_\rho}$ (the claim is trivial
if $d=0$) yields~\eqref{eq:lem_Qgap}.
\end{proof}

\subsubsection{Dual-norm bound}

\begin{lemma}[Dual-norm bound]\label{lem:dual_norm}
For any $u\in\R^{T_0}$,
\begin{equation}
\frac{1}{T_0}\bigl\|X_\pre'W_\rho u\bigr\|_{Q_\rho^{-1}}
\;\le\;
\frac{1}{\sqrt{T_0}}\|u\|_{W_\rho}.
\label{eq:lem_dual_norm}
\end{equation}
\end{lemma}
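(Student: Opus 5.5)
The plan is to pass to the filtered coordinates and reduce the claim to the fact that a ridge-regularized hat matrix has operator norm at most one. Set $A:=W_\rho^{1/2}X_\pre\in\R^{T_0\times N_0}$, where $W_\rho^{1/2}=\widetilde C_{\rho,q}$ is the symmetric square root that exists by Proposition~\ref{prop:endpoints}(ii). Then
\[
X_\pre'W_\rho u=A'W_\rho^{1/2}u,
\qquad
Q_\rho=\tfrac{1}{T_0}A'A+\zeta^2 I_{N_0},
\qquad
\|u\|_{W_\rho}=\|W_\rho^{1/2}u\|_2 .
\]
Writing $z:=W_\rho^{1/2}u$, the claim becomes
\[
\frac{1}{T_0^2}\,z'A\,Q_\rho^{-1}A'z\;\le\;\frac{1}{T_0}\,\|z\|_2^2
\qquad\text{for all } z .
\]

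Multiply through by $T_0$. It then suffices to show the matrix inequality
\[
M:=\tfrac{1}{T_0}\,A\Bigl(\tfrac{1}{T_0}A'A+\zeta^2 I\Bigr)^{-1}A'\;\preceq\; I_{T_0}.
\]
I would verify this through the SVD $A=U\Sigma V'$ with singular values $\sigma_j$. On the range of $U$, $M$ has eigenvalues $\dfrac{\sigma_j^2/T_0}{\sigma_j^2/T_0+\zeta^2}\in[0,1)$, and $M$ vanishes on the orthogonal complement. An equivalent route avoids the SVD: $\tfrac{1}{T_0}A'A\preceq Q_\rho$ gives $\bigl\|Q_\rho^{-1/2}A'/\sqrt{T_0}\bigr\|_{\mathrm{op}}\le 1$, and $M$ is the Gram matrix of $Q_\rho^{-1/2}A'/\sqrt{T_0}$. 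Either way, $z'Mz\le\|z\|_2^2$, and taking square roots gives~\eqref{eq:lem_dual_norm}.

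The only real care point is the endpoint cases $\rho\in\{0,1\}$, where $W_\rho$ is defined by continuous extension. There we only need $W_\rho$ symmetric positive semidefinite so that the square root exists, and this is supplied by Proposition~\ref{prop:endpoints}(ii). The strict positivity $\zeta>0$ guarantees that $Q_\rho$ is invertible, so no rank condition on $X_\pre$ is required. I expect no genuine obstacle beyond bookkeeping the $T_0$ scalings correctly.
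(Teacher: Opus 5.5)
Your proposal is correct and follows the paper's proof essentially line for line: the paper also sets $b=W_\rho^{1/2}u$ and $\widetilde X=T_0^{-1/2}W_\rho^{1/2}X_\pre$ (your $A/\sqrt{T_0}$), and it bounds the ridge hat matrix by $\widetilde X(\widetilde X'\widetilde X+\zeta^2 I)^{-1}\widetilde X'\preceq I_{T_0}$ via a compact SVD. Your alternative Gram-matrix argument and your remark on the endpoints $\rho\in\{0,1\}$ are valid additions that the paper leaves implicit; the endpoint remark uses positive semidefiniteness of $W_\rho$ from Proposition~\ref{prop:endpoints}(ii) and invertibility of $Q_\rho$ from $\zeta>0$.
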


\begin{proof}
Set $b:=W_\rho^{1/2}u$ and
$\widetilde X:=T_0^{-1/2}W_\rho^{1/2}X_\pre$, so that
$Q_\rho=\widetilde X'\widetilde X+\zeta^2I_{N_0}$ and
$X_\pre'W_\rho u=\sqrt{T_0}\,\widetilde X'b$.  Then
\[
\frac{1}{T_0^2}\bigl\|X_\pre'W_\rho u\bigr\|_{Q_\rho^{-1}}^2
=
\frac{1}{T_0}\,b'\widetilde X
(\widetilde X'\widetilde X+\zeta^2I)^{-1}\widetilde X'b.
\]
Using a compact SVD $\widetilde X=U\Sigma V'$, we have
$\widetilde X
(\widetilde X'\widetilde X+\zeta^2I)^{-1}\widetilde X'
=U\diag\!\bigl(\sigma_i^2/(\sigma_i^2+\zeta^2)\bigr)U'
\preceq I_{T_0}$, so the expression is bounded by
$T_0^{-1}\|b\|_2^2=T_0^{-1}\|u\|_{W_\rho}^2$.
\end{proof}

\subsubsection{Proof of Proposition~\ref{prop:TermA_env}}

\begin{proof}
The proof proceeds in three steps.

\emph{Step~1: gradient decomposition.}
By Lemma~\ref{lem:grad_decomp}, the gradient discrepancy
at $\omega^\oracle$ decomposes as
$\frac{1}{2}(\nabla F_\rho(\omega^\oracle)
-\nabla H^\oracle(\omega^\oracle))=-(g_1+g_2+g_3)$,
with $g_1$, $g_2$, $g_3$ as in~\eqref{eq:grad_diff}.

\emph{Step~2: weight gap bound.}
From Lemma~\ref{lem:Q_gap} and the triangle inequality:
\[
\bigl\|\omega^\oracle-\hat\omega\bigr\|_{Q_\rho}
\le
\frac{1}{T_0}\|g_1\|_{Q_\rho^{-1}}
+\frac{1}{T_0}\|g_2\|_{Q_\rho^{-1}}
+\frac{1}{T_0}\bigl\|X_\pre'W_\rho\,
e^\cR_\pre\bigr\|_{Q_\rho^{-1}}.
\]
The first two terms equal $A_1$ and $A_2$ by
definition~\eqref{eq:A1_def}--\eqref{eq:A2_def}.  Applying
Lemma~\ref{lem:dual_norm} with $u=e^\cR_\pre$ to the third term
gives $A_3$ as defined in~\eqref{eq:A3_def}.  Hence
$\|\omega^\oracle-\hat\omega\|_{Q_\rho}\le A_1+A_2+A_3$.

\emph{Step~3: transfer to prediction error.}
From~\eqref{eq:TA_closed}, insert
$Q_\rho^{-1/2}Q_\rho^{1/2}$:
\[
\bigl\|\mathrm{Term~A}(\rho)\bigr\|_2
=\bigl\|C_\rho(\omega^\oracle-\hat\omega)\bigr\|_2
\le\|C_\rho Q_\rho^{-1/2}\|_{\mathrm{op}}\,
\|\omega^\oracle-\hat\omega\|_{Q_\rho}
=\mathcal{P}_\rho\,
\|\omega^\oracle-\hat\omega\|_{Q_\rho}
\le\mathcal{P}_\rho(A_1+A_2+A_3).
\qedhere
\]
\end{proof}

\subsubsection{Proof of~\eqref{eq:TB_direct}}

\begin{proof}
From the closed form~\eqref{eq:TB_closed},
\[
\mathrm{Term~B}(\rho)
\;=\;
r_\post^\oracle\;-\;\Pi_\rho\,r_\pre^\oracle.
\]
Decompose the pre-period residual along
$\Null(K)$ and its orthogonal complement:
$r_\pre^\oracle = P_0\,r_\pre^\oracle + \eta_\pre^\oracle$
by~\eqref{eq:eta_def}.  Because $S_\rho$ fixes $\Null(K)$
(Proposition~\ref{prop:endpoints}),
$S_\rho\,(P_0\,r_\pre^\oracle) = P_0\,r_\pre^\oracle$, and because
$\widetilde G_q$ satisfies the null-space continuation
property $\widetilde G_q v = G_q^{\mathrm{null}}v$ for every
$v\in\Null(K)$ of Section~\ref{subsec:counterfactual},
\[
\Pi_\rho\,(P_0\,r_\pre^\oracle)
\;=\;
\widetilde G_q\,(P_0\,r_\pre^\oracle)
\;=\;
G_q^{\mathrm{null}}\,P_0\,r_\pre^\oracle.
\]
Therefore
\begin{align*}
\mathrm{Term~B}(\rho)
&=\; r_\post^\oracle
   \;-\;\Pi_\rho\,(P_0\,r_\pre^\oracle)
   \;-\;\Pi_\rho\,\eta_\pre^\oracle\\
&=\; r_\post^\oracle
   \;-\;G_q^{\mathrm{null}}\,P_0\,r_\pre^\oracle
   \;-\;\widetilde G_q\,S_\rho\,\eta_\pre^\oracle\\
&=\; \eta_\post^\oracle
   \;-\;\widetilde G_q\,S_\rho\,\eta_\pre^\oracle,
\end{align*}
using the definition of $\eta_\post^\oracle$ in~\eqref{eq:eta_def}.
\end{proof}


\clearpage
\setcounter{page}{1}
\setcounter{table}{0}
\setcounter{figure}{0}
\setcounter{equation}{0}
\setcounter{footnote}{0}
\setcounter{lemma}{0}
\setcounter{proposition}{0}

\renewcommand{\theassumption}{B\arabic{assumption}}
\renewcommand\thetable{B\arabic{table}}
\renewcommand\thefigure{B\arabic{figure}}
\renewcommand{\thepage}{B-\arabic{page}}
\renewcommand{\theequation}{B\arabic{equation}}
\renewcommand{\thefootnote}{B\arabic{footnote}}
\renewcommand{\thelemma}{B\arabic{lemma}}

\section{Monte Carlo evaluation of the prediction-error decomposition}
\label{app:mc_decomp}

This appendix reports a Monte Carlo study that evaluates the
prediction-error decomposition of Section~\ref{sec:predict_decompose}
along the $\rho$-grid.  The goals are: (i)~to display the three
Term~A channels of Section~\ref{subsec:TermA_channels} and the
identification geometry encoded in $Q_\rho^{-1}$ and
$\mathcal{P}_\rho$; (ii)~to exhibit how the $\rho$-profile of
Term~B varies with the forecaster $\widetilde G_q$
(Section~\ref{subsec:TermB}); and (iii)~to show how Term~A and
Term~B combine into the $\rho$-shape of the prediction error across
four $(q,\widetilde G_q)$ configurations under two DGP regimes.

\subsection{Setup}
\label{app:mc_setup}

\paragraph*{Data generating process.}
For each replication we simulate panel data
$Y_{it}=\Lambda_i F_t+\kappa\,R_{it}^{\mathrm{rw}}+\varepsilon_{it}$
on $i\in\{0,1,\dots,N_0\}$ (one treated unit and $N_0$ donors) and
$t\in\{1,\dots,T_0+T_\post\}$.  The single common factor
$F_t=\sum_{s\le t}u_s$ is a random walk with $u_s\sim N(0,1)$.  The
loadings $\Lambda_i\sim N(1,0.5^2)$ are drawn once per replication and
held fixed across $t$.  The unit-specific random walks
$R_{it}^{\mathrm{rw}}=\sum_{s\le t}v_{is}$ have $v_{is}\sim N(0,1)$.
The idiosyncratic short-run noise $\varepsilon_{it}\sim N(0,0.5^2)$ is
iid across $(i,t)$.  The residual is then
$\cR_{it}=\kappa R_{it}^{\mathrm{rw}}+\varepsilon_{it}$, and the
observed outcome decomposes as $Y_{it}=L_{it}+\cR_{it}$ with
$L_{it}=\Lambda_i F_t$.  We set $N_0=10$, $T_0=80$, $T_\post=5$, and
the ridge parameter $\zeta=T_\post^{1/4}\,\hat\sigma_{\Delta X}$ as in
\citet{arkhangelsky2021synthetic}, where $\hat\sigma_{\Delta X}$ is
the empirical standard deviation of the first-differenced donor
pre-period series.  

\paragraph*{Two experiments.}
We consider two values of $\kappa$, corresponding to the regimes of
Section~\ref{subsec:design_goal}:
\begin{itemize}
\item \emph{Shared stochastic trend regime} ($\kappa=0$): $\cR$
  reduces to idiosyncratic short-run noise.
\item \emph{Shared + idiosyncratic stochastic trend regime}
  ($\kappa=2$): each unit carries an independent random walk in
  addition to the idiosyncratic short-run noise.
\end{itemize}

\paragraph*{Four configurations.}
For each replication we estimate HSC under four configurations.  All
four use the admissible forecast operator
$\widetilde G_q=G_q^{\mathrm{null}}P_{0,q}+\hat G_q P_{\perp,q}$ of
Section~\ref{subsec:counterfactual}: the null-space part of the
pre-period residual, $P_{0,q}r$, is always continued by the canonical
operator $G_q^{\mathrm{null}}$ (a constant for $q=1$, an
intercept-plus-linear trend for $q=2$), while the data-driven
component $\hat G_q$ acts only on the non-null part $P_{\perp,q}r$.
The configurations differ in the smoothness order $q$ and in the
choice of $\hat G_q$:
\begin{itemize}
\item $(q=1,\ G_1^{\mathrm{const}})$: $\hat G_1$ is the constant
  carry-forward, holding the last value of the non-null part flat.
  Recombined with the continued null-space mean, the two parts
  collapse to the raw last residual $r_{T_0}$ held constant across the
  post-period window.
\item $(q=1,\ \mathrm{ARIMA}(1,1,0))$: $\hat G_1$ is an
  ARIMA$(1,1,0)$ model fitted to the non-null part of the pre-period
  residual, with the null-space mean continued exactly as above.
\item $(q=2,\ G_2^{\mathrm{const}})$: $\hat G_2$ is again the constant
  carry-forward on the non-null part.  Recombined with the canonical
  intercept-plus-linear continuation on $\Null(K_2)$, the composed
  forecast at horizon $h$ is $r_{T_0}+\hat\beta\,h$, where $\hat\beta$
  is the slope of the line fitted to the pre-period residual over
  $\Null(K_2)$; equivalently, the fitted linear trend is extrapolated
  with its level re-anchored to the last observed residual $r_{T_0}$.
\item $(q=2,\ \mathrm{ARIMA}(1,1,0))$: $\hat G_2$ is an
  ARIMA$(1,1,0)$ model on the non-null part, with the null-space
  intercept-plus-linear trend continued by $G_2^{\mathrm{null}}$.
\end{itemize}

\paragraph*{Grid in $\rho$.}
For every simulation round we sweep
$\rho$ over the 19-point grid

\noindent $\{0,0.05,0.1,0.2,0.3,\dots,0.8,0.85,0.9,0.93,0.95,0.97,0.98,0.99,
0.995,1\}$, denser near $\rho=1$ where the most rapid changes occur.

\paragraph*{Computed quantities.}
At every simulation round we compute the squared prediction error and
its RMSE counterpart $\sqrt{T_\post^{-1}\sum_t(\cdot)^2}$, the norms
$\|\mathrm{Term~A}(\rho)\|_2$ and $\|\mathrm{Term~B}(\rho)\|_2$ from
the closed forms~\eqref{eq:TA_closed}--\eqref{eq:TB_closed}, the three
channels $A_1,A_2,A_3$ of~\eqref{eq:A1_def}--\eqref{eq:A3_def},
$\lambda_{\max}(Q_\rho^{-1})$, and the transfer multiplier
$\mathcal{P}_\rho=\|C_\rho Q_\rho^{-1/2}\|_{\mathrm{op}}$, and average
them across $B=200$ replications.  The design is run twice: with the
baseline ridge $\zeta=T_\post^{1/4}\hat\sigma_{\Delta X}$ and with
$\zeta=0$, to isolate the stabilizing role of the ridge.

\subsection{Term~A: channels and identification geometry}
\label{app:mc_termA}

The channels $A_1,A_2,A_3$ and $\lambda_{\max}(Q_\rho^{-1})$ depend on
the data and the penalty $K_q$ but not on $\widetilde G_q$, so we
report them at one configuration per $q$ ($G_1^{\mathrm{const}}$ and
$G_2^{\mathrm{const}}$); the transfer multiplier
$\mathcal{P}_\rho=\|C_\rho Q_\rho^{-1/2}\|_{\mathrm{op}}$ depends on
$\widetilde G_q$ through $C_\rho=X_\post-\Pi_\rho X_\pre$ and is
reported for all four configurations.

\paragraph*{Three channels.}
Figure~\ref{fig:appx_A123_lmaxQinv} (top) plots the three channels of
Section~\ref{subsec:TermA_channels}.  The metric-distortion channel
$A_1$ is hump-shaped and vanishes at $\rho=1$ (where $W_\rho=P_\perp$).
The interaction channel $A_2$ is uniformly small and never the binding
channel.  The spurious matching channel
$A_3=T_0^{-1/2}\|e^\cR_\pre\|_{W_\rho}$ stays small and bounded in the
shared stochastic trend regime; in the shared + idiosyncratic regime
it is controlled at small $\rho$, where the $q$th differencing in
$W_0=K_q$ stationarizes the unit-specific random walks, and rises
sharply toward $\rho=1$ as $W_\rho$ retains their low-frequency
content.

\paragraph*{Identification geometry.}
Figure~\ref{fig:appx_A123_lmaxQinv} (bottom) plots
$\lambda_{\max}(Q_\rho^{-1})$ (Lemma~\ref{lem:Q_gap}).  Since
$Q_\rho\succeq\zeta^2 I_{N_0}$, the bound
$\lambda_{\max}(Q_\rho^{-1})\le 1/\zeta^2$ holds for every $\rho$.  The
ridged profile is nearly flat, around $0.25$ in the shared stochastic
trend regime and around $0.065$ in the shared + idiosyncratic regime.
With $\zeta=0$ (Figure~\ref{fig:appx_A123_lmaxQinv_noridge}) it is
markedly larger and far more variable, demonstrating the stabilizing
role of the ridge.

\paragraph*{Transfer multiplier.}
Figure~\ref{fig:appx_Prho} reports $\mathcal{P}_\rho$, the
$\rho$-dependent factor multiplying $A_1+A_2+A_3$ in the envelope of
Proposition~\ref{prop:TermA_env}.  In every configuration
$\mathcal{P}_\rho$ is larger in the shared + idiosyncratic regime,
peaking near $8$, than in the shared stochastic trend regime, where it
peaks near $3$ for the $q=1$ configurations and near $4$--$5$ for the
$q=2$ configurations (largest under the $q=2$ ARIMA$(1,1,0)$
forecaster).

\paragraph*{Tightness of the envelope.}
Figure~\ref{fig:appx_envelope} overlays
$\mathcal{P}_\rho\,(A_1+A_2+A_3)$ with the realized
$\|\mathrm{Term~A}(\rho)\|$.  In every ridged cell and at every grid
point the envelope lies weakly above the realized norm, confirming
Proposition~\ref{prop:TermA_env} numerically.  It is conservative,
exceeding $\|\mathrm{Term~A}\|$ by roughly five- to eighteenfold, the
gap being largest where $\|\mathrm{Term~A}\|$ is smallest, in the
shared stochastic trend regime.

\begin{figure}[!ht]
\centering
\includegraphics[width=0.9\textwidth]{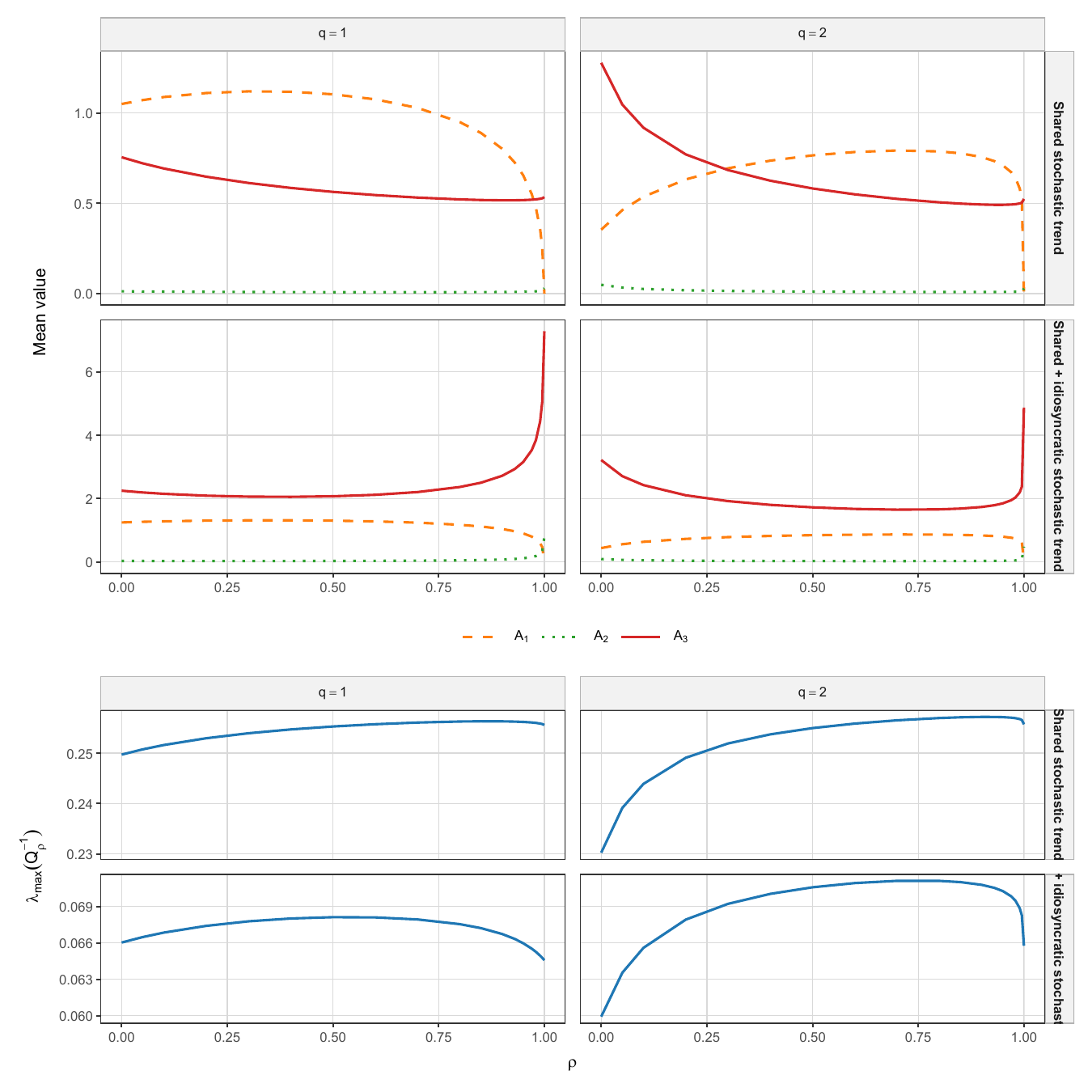}
\caption{Term~A channels and identification geometry.}
\label{fig:appx_A123_lmaxQinv}
\vspace{0.5em}
{\footnotesize\textbf{Note:} Top row: the three channels of the
Term~A envelope of Proposition~\ref{prop:TermA_env}, plotted as a
function of $\rho$.  Orange dashed: $A_1$, the metric-distortion
channel.  Green dotted: $A_2$, the interaction channel.  Red solid:
$A_3$, the spurious matching channel.  Bottom row:
$\lambda_{\max}(Q_\rho^{-1})$, the largest eigenvalue of the inverse
of the Hessian
$Q_\rho=T_0^{-1}X_\pre'W_\rho X_\pre+\zeta^2 I_{N_0}$.  Columns
correspond to the representative configurations $(q=1,
G_1^{\mathrm{const}})$ and $(q=2, G_2^{\mathrm{const}})$.  Top facet
within each column: shared stochastic trend regime ($\kappa=0$).
Bottom facet: shared + idiosyncratic stochastic trend regime
($\kappa=2$).  All curves are
means over $B=200$ replications of the DGP described in
Section~\ref{app:mc_setup}; ridge parameter
$\zeta=T_\post^{1/4}\hat\sigma_{\Delta X}$.\par}
\end{figure}

\begin{figure}[!ht]
\centering
\includegraphics[width=0.9\textwidth]{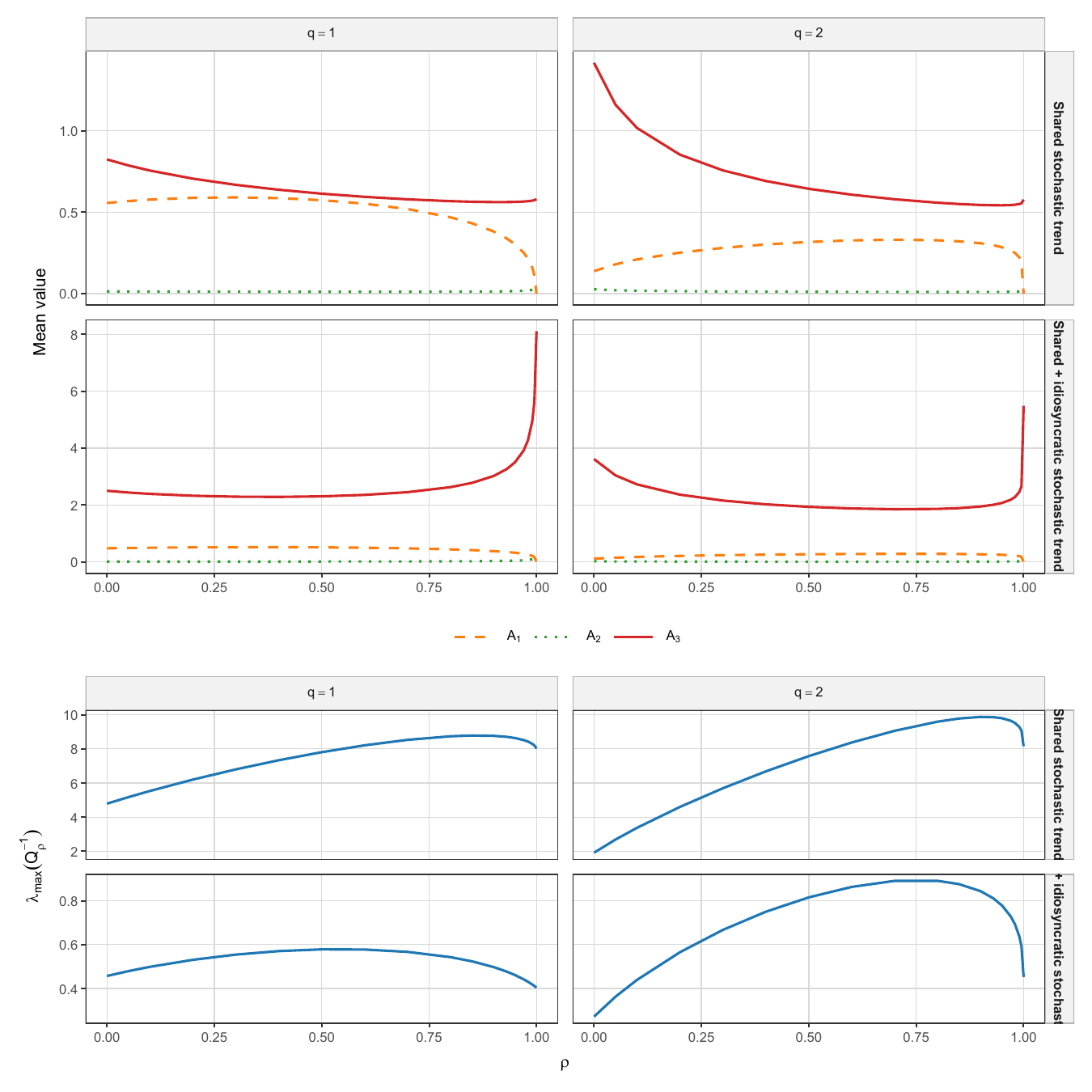}
\caption{Term~A channels and identification geometry without ridge
regularization ($\zeta=0$).}
\label{fig:appx_A123_lmaxQinv_noridge}
\vspace{0.5em}
{\footnotesize\textbf{Note:} Same layout as
Figure~\ref{fig:appx_A123_lmaxQinv} but with $\zeta=0$ (no ridge
regularization).  Without the ridge floor, $\lambda_{\max}(Q_\rho^{-1})$
can grow much larger, demonstrating the stabilizing role of the ridge
term.  The $A_1$, $A_2$, $A_3$ channels in the top row are also
computed at $\zeta=0$.\par}
\end{figure}

\begin{figure}[!ht]
\centering
\includegraphics[width=\textwidth]{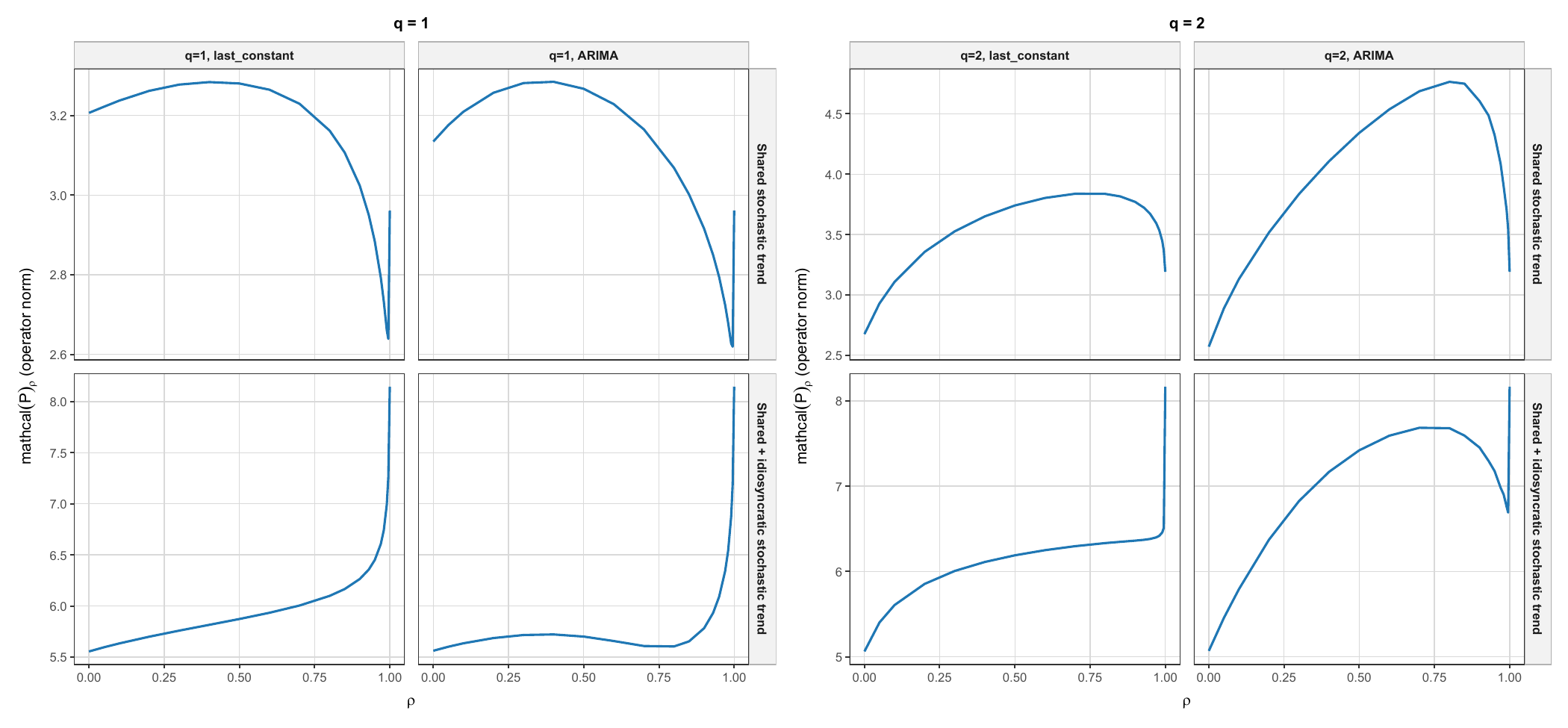}
\caption{Transfer multiplier $\mathcal{P}_\rho$.}
\label{fig:appx_Prho}
\vspace{0.5em}
{\footnotesize\textbf{Note:} Mean operator norm
$\mathcal{P}_\rho=\|C_\rho Q_\rho^{-1/2}\|_{\mathrm{op}}$ as a
function of $\rho$, where $C_\rho:=X_\post-\Pi_\rho X_\pre$.  Left
panel: $q=1$.  Right panel: $q=2$.  Within each panel, columns are
the two estimator configurations
($G_q^{\mathrm{const}}$ and ARIMA$(1,1,0)$) and rows are the two
DGP regimes (shared stochastic trend on top, shared +
idiosyncratic stochastic trend on bottom).  Means are over $B=200$
replications.\par}
\end{figure}

\begin{figure}[!ht]
\centering
\includegraphics[width=\textwidth]{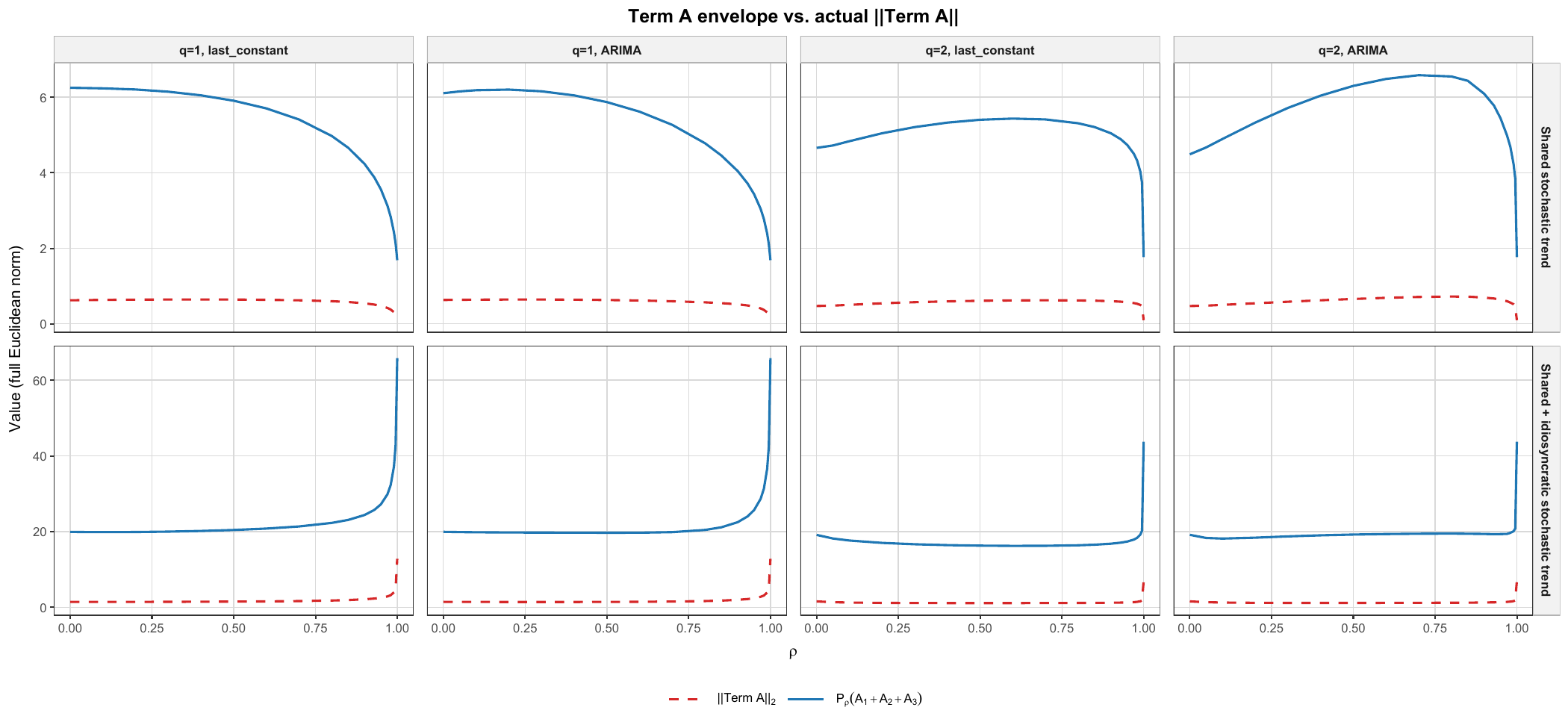}
\caption{Tightness of the Term~A envelope.}
\label{fig:appx_envelope}
\vspace{0.5em}
{\footnotesize\textbf{Note:} Mean Term~A envelope
$\mathcal{P}_\rho\,(A_1+A_2+A_3)$ of
Proposition~\ref{prop:TermA_env} (orange dashed) and the mean
realized norm $\|\mathrm{Term~A}(\rho)\|_2$ (black solid), as
functions of $\rho$.  Left panel: $q=1$.  Right panel: $q=2$.  Within
each panel, columns are the two estimator configurations
($G_q^{\mathrm{const}}$ and ARIMA$(1,1,0)$) and rows are the two DGP
regimes (shared stochastic trend on top, shared + idiosyncratic
stochastic trend on bottom).  The envelope lies weakly above the
realized norm at every grid point in every cell, confirming
Proposition~\ref{prop:TermA_env} numerically; the gap measures the
conservativeness of the uniform bound.  Means are over $B=200$
replications; ridge parameter
$\zeta=T_\post^{1/4}\hat\sigma_{\Delta X}$.\par}
\end{figure}

\subsection{Synthesis: $\mathrm{Term~A}$, $\mathrm{Term~B}$, and the
$\rho$-shape of the prediction error}
\label{app:mc_synthesis}

Figure~\ref{fig:appx_rmse_AB} overlays the per-period RMSE with the
per-period norms of $\mathrm{Term~A}$ and $\mathrm{Term~B}$.  Across
every panel $\|\mathrm{Term~B}\|$ exceeds $\|\mathrm{Term~A}\|$
throughout the grid, so forecasting is the dominant contributor to the
prediction error; and the $\rho$-shape of $\|\mathrm{Term~B}\|$ is
configuration-specific, neither monotone nor uniformly oriented across
the four $(q,\widetilde G_q)$ cells (Section~\ref{subsec:TermB}).

\paragraph*{Shared stochastic trend regime ($\kappa=0$).}
The RMSE is low, with mean RMSE $\approx0.70$--$0.77$ at the optimum
(Table~\ref{tab:appx_summary}).  $\|\mathrm{Term~B}\|$ is the larger
term throughout (per-period $\approx 0.6$ against $\approx 0.1$--$0.2$
for $\|\mathrm{Term~A}\|$), but both are small in absolute terms
($\|\mathrm{B}\|\approx 1.4$, $\|\mathrm{A}\|\approx 0.3$--$0.5$ at the
optimum).

\paragraph*{Shared + idiosyncratic stochastic trend regime ($\kappa=2$).}
Both terms grow and Term~B dominates: at the optimum
$\|\mathrm{Term~B}\|\approx 7.2$--$7.4$ against
$\|\mathrm{Term~A}\|\approx 1.1$--$1.6$, roughly a factor of five.  The
ex-post RMSE-minimizing $\rho^\star$ is small for every configuration
(between $0$ and $0.30$), the mean RMSE is $\approx 3.3$, and the RMSE
rises toward $\rho=1$ (Table~\ref{tab:appx_summary}).

\begin{figure}[!ht]
\centering
\includegraphics[width=\textwidth]{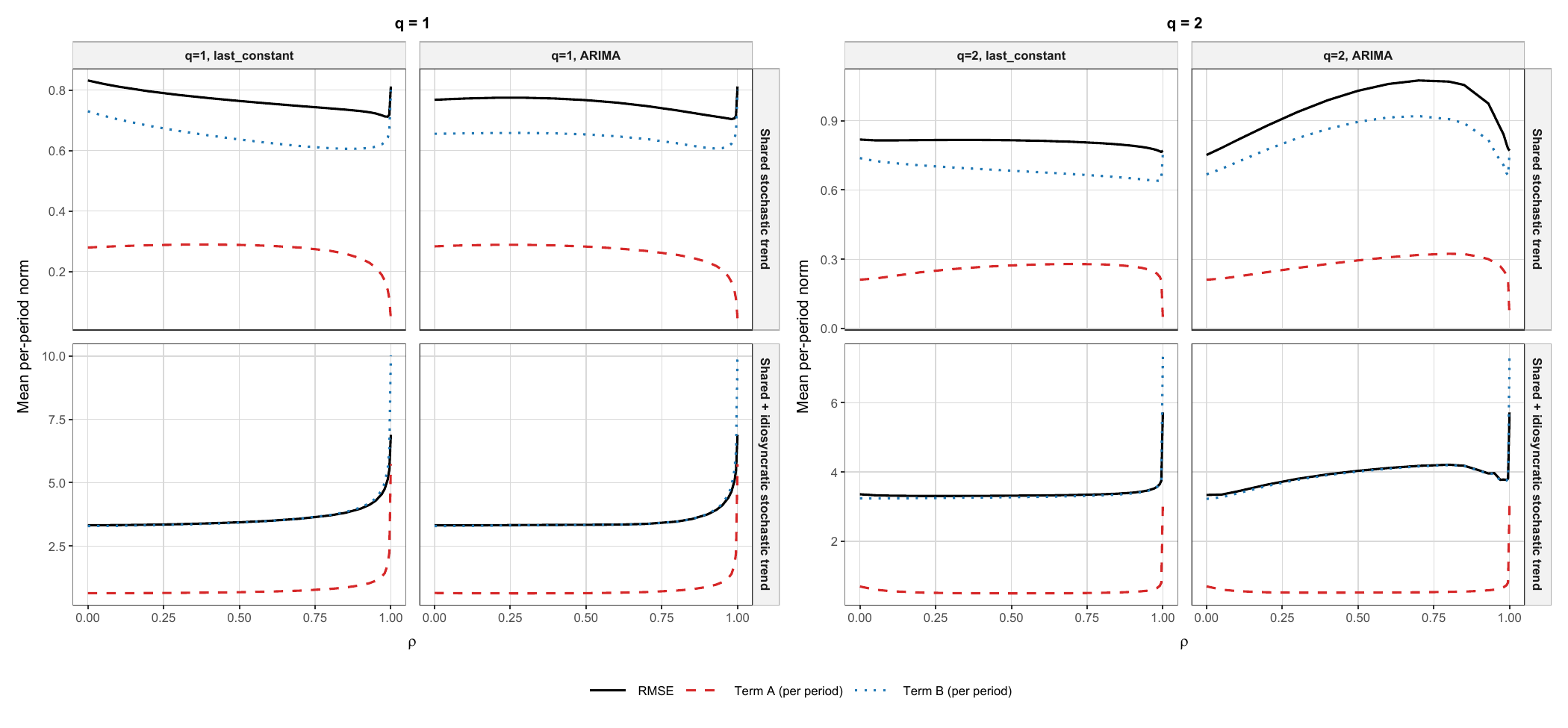}
\caption{Per-period RMSE, $\mathrm{Term~A}$, and $\mathrm{Term~B}$.}
\label{fig:appx_rmse_AB}
\vspace{0.5em}
{\footnotesize\textbf{Note:} Mean per-period RMSE (black solid),
mean per-period $\|\mathrm{Term~A}(\rho)\|_2/\sqrt{T_\post}$ (red
dashed), and mean per-period
$\|\mathrm{Term~B}(\rho)\|_2/\sqrt{T_\post}$ (blue dotted) as
functions of $\rho$.  Left panel: $q=1$.  Right panel: $q=2$.  Within
each panel, columns are the two estimator configurations
($G_q^{\mathrm{const}}$ and ARIMA$(1,1,0)$) and rows are the two
DGP regimes (shared stochastic trend on top, shared +
idiosyncratic stochastic trend on bottom).  Per-period normalization
places the three curves on the same scale as the RMSE, which is the
square root of the per-period mean squared error.  Means are over
$B=200$ replications.\par}
\end{figure}

\begin{table}[!ht]
\centering
\caption{Ex-post RMSE-minimizing $\rho^\star$, mean RMSE, and the
corresponding mean norms of $\mathrm{Term~A}$ and $\mathrm{Term~B}$
at the optimum.}
\label{tab:appx_summary}
\vspace{0.3em}
\resizebox{\textwidth}{!}{\small
\begin{tabular}{llcccccccc}
\hline\hline
& & \multicolumn{4}{c}{Shared stochastic trend ($\kappa=0$)}
& \multicolumn{4}{c}{Shared + idiosyncratic stochastic trend ($\kappa=2$)} \\
\cmidrule(lr){3-6}\cmidrule(lr){7-10}
& Configuration
& $\rho^\star$ & RMSE
& $\|\mathrm{A}\|$ & $\|\mathrm{B}\|$
& $\rho^\star$ & RMSE
& $\|\mathrm{A}\|$ & $\|\mathrm{B}\|$ \\
\hline
\multicolumn{10}{l}{\itshape Panel~A: Ridge regularization
$\zeta=T_\post^{1/4}\hat\sigma_{\Delta X}$ (default)} \\[2pt]
& $q=1$,\; $G_1^{\mathrm{const}}$
  & 0.990 & 0.71 & 0.31 & 1.44
  & 0.050 & 3.32 & 1.42 & 7.36 \\
& $q=1$,\; ARIMA$(1,1,0)$
  & 0.980 & 0.70 & 0.37 & 1.39
  & 0.050 & 3.31 & 1.42 & 7.35 \\
& $q=2$,\; $G_2^{\mathrm{const}}$
  & 0.995 & 0.76 & 0.47 & 1.44
  & 0.300 & 3.31 & 1.12 & 7.27 \\
& $q=2$,\; ARIMA$(1,1,0)$
  & 0.000 & 0.75 & 0.47 & 1.49
  & 0.000 & 3.34 & 1.56 & 7.20 \\[4pt]
\multicolumn{10}{l}{\itshape Panel~B: No ridge regularization
$(\zeta=0)$} \\[2pt]
& $q=1$,\; $G_1^{\mathrm{const}}$
  & 0.970 & 0.60 & 0.46 & 1.35
  & 0.100 & 3.41 & 4.05 & 8.06 \\
& $q=1$,\; ARIMA$(1,1,0)$
  & 0.970 & 0.60 & 0.46 & 1.35
  & 0.200 & 3.39 & 4.00 & 7.96 \\
& $q=2$,\; $G_2^{\mathrm{const}}$
  & 0.995 & 0.63 & 0.61 & 1.41
  & 0.700 & 3.39 & 4.15 & 8.15 \\
& $q=2$,\; ARIMA$(1,1,0)$
  & 1.000 & 0.63 & 0.45 & 1.40
  & 0.000 & 3.47 & 4.33 & 7.85 \\
\hline
\end{tabular}}
\vspace{4pt}
\begin{minipage}{0.95\textwidth}
\footnotesize
\textit{Notes:} Each entry reports the ex-post RMSE-minimizing grid
value $\rho^\star$ on the 19-point grid that minimizes mean per-period
RMSE over $B=200$ replications, the corresponding mean RMSE, and
the mean Euclidean norms $\|\mathrm{Term~A}(\rho^\star)\|_2$ and
$\|\mathrm{Term~B}(\rho^\star)\|_2$ at the same $\rho^\star$.  This
$\rho^\star$ is an ex-post RMSE minimizer and is distinct from the
cross-validation selector $\hat\rho$ of
Section~\ref{subsec:rho_selection}.  RMSE is the
per-period quantity $\sqrt{T_\post^{-1}\sum_t (\cdot)^2}$, while
$\|\mathrm{Term~A}\|$ and $\|\mathrm{Term~B}\|$ are full Euclidean
norms over the post-treatment window of length $T_\post=5$; dividing
the latter by $\sqrt{T_\post}\approx 2.24$ converts them to the same
per-period scale as the RMSE column.  The DGP is described in
Section~\ref{app:mc_setup}, with $N_0=10$, $T_0=80$, $T_\post=5$,
$\sigma_\Lambda=\sigma_\varepsilon=0.5$.  Panel~A uses the default
ridge parameter
$\zeta=T_\post^{1/4}\hat\sigma_{\Delta X}$; Panel~B sets
$\zeta=0$.
\end{minipage}
\end{table}

\paragraph*{Role of ridge regularization.}
In the shared stochastic trend regime the no-ridge RMSE is lower
(e.g.\ $0.60$ vs.\ $0.71$ for $q=1$, $G_1^{\mathrm{const}}$), since
without the ridge bias the estimator matches more precisely when
identification is strong.  In the shared + idiosyncratic regime
removing the ridge inflates $\|\mathrm{Term~A}\|$ by roughly
$2.8$--$3.7\times$ yet the RMSE barely moves (e.g.\ $3.41$ vs.\
$3.32$), because Term~B dominates and is largely unaffected by the
ridge.  The ridge thus stabilizes weight estimation but has little
effect on RMSE when forecasting error binds.

\subsection{Summary and limitations}
\label{app:mc_summary}

The Monte Carlo evidence matches Section~\ref{sec:predict_decompose}:
several channel profiles are non-monotone in $\rho$, and Term~B
consistently exceeds Term~A (roughly fivefold at the optimum in the
shared + idiosyncratic regime, threefold to fivefold in the shared
stochastic trend regime).  The ex-post RMSE-minimizing $\rho$ is small
throughout the shared + idiosyncratic regime; it depends jointly on
the DGP and the forecaster $\widetilde G_q$
(Section~\ref{subsec:synthesis}).

Three caveats apply.  The design fixes $N_0=10$, $T_0=80$,
$T_\post=5$; the channel magnitudes and the location of the RMSE
minimum depend on $T_0$ and on $\sigma_\Lambda/\sigma_\varepsilon$.
The persistent part of $\cR$ is a unit-specific random walk; richer time series processes would change the relative magnitudes of $A_3$
and may shift the optimal $\rho$.  And the figures are sample means
over $B=200$ replications.

Cross-validation (Section~\ref{subsec:rho_selection}) navigates this
landscape empirically, with $\rho$ allocating the pre-period
information between weight estimation (Term~A) and forecasting
(Term~B), and the channels of
Sections~\ref{subsec:TermA_channels}--\ref{subsec:TermB} accounting
for the shapes it traces.

\clearpage
\setcounter{page}{1}
\setcounter{table}{0}
\setcounter{figure}{0}
\setcounter{equation}{0}
\setcounter{footnote}{0}
\setcounter{lemma}{0}
\setcounter{proposition}{0}

\renewcommand{\theassumption}{C\arabic{assumption}}
\renewcommand\thetable{C\arabic{table}}
\renewcommand\thefigure{C\arabic{figure}}
\renewcommand{\thepage}{C-\arabic{page}}
\renewcommand{\theequation}{C\arabic{equation}}
\renewcommand{\thefootnote}{C\arabic{footnote}}
\renewcommand{\thelemma}{C\arabic{lemma}}

\section{Monte Carlo robustness for the estimator comparison}
\label{app:mc_estimators}

This appendix supplements the Monte Carlo evidence of
Section~\ref{sec:simulation}.
Section~\ref{app:mc_estimators_design} reports the simulation
details, including the parameter values for the data-generating
process, the seed scheme, and the frozen-versus-redrawn split
for each component of the panel.
Section~\ref{app:mc_perperiod} reports the per-period RMSE
trajectory that underlies the pooled-RMSE comparison of
Section~\ref{subsec:mc_rmse}.
Section~\ref{app:mc_bias_var} decomposes the pooled RMSE into
pooled bias and pooled variance and briefly attributes HSC's
RMSE advantage to a variance reduction.
Section~\ref{app:mc_cv_h} compares the cross-validated $\hat\rho$
and post-period RMSE between $h=1$ and $h=20$ cross-validation.

\subsection{Simulation details}
\label{app:mc_estimators_design}

The data-generating process used throughout
Section~\ref{sec:simulation} and this appendix is
\begin{equation}\label{eq:dgp_full}
Y_{j,t}(0)=L_{j,t}+\kappa\,\mathcal{E}_{j,t}+\varepsilon_{j,t}+\alpha_j+\delta_t,
\qquad j\in\{0,1,\dots,N_0\},\ t\in\{1,\dots,T_0+T_\post\},
\end{equation}
where the index $j=0$ denotes the treated unit and $j=1,\dots,N_0$
denote the donors (this appendix indexes the treated unit as $j=0$ for
notational convenience in the DGP).  The treated unit receives no
treatment effect ($\tau=0$), so the post-period error of any
counterfactual estimator $\hat Y_{0,T_0+h}$ is
$\hat Y_{0,T_0+h}-Y_{0,T_0+h}(0)$.

\paragraph*{Low-rank component.}
The component $L_{j,t}$ is built from $K=3$ latent factors:
\begin{equation}\label{eq:factor_mix}
L_{j,t}=\sum_{k=1}^{3}\Lambda_{j,k}F_{k,t},
\end{equation}
where $F_{1,t}$ is a random walk with innovation variance
$\sigma_{\mathrm{rw}}^{2}=4$, $F_{2,t}$ is an integrated
ARIMA$(1,1,0)$ process with autoregressive coefficient
$\phi_{F}=0.5$ and innovation variance $\sigma_{\mathrm{arima}}^{2}=4$,
and $F_{3,t}$ is a stationary AR(1) with autoregressive coefficient
$\rho_{s}=0.6$ and innovation variance $\sigma_{s}^{2}=1$.  The
factor paths $F_{k,t}$ are drawn fresh in every replication.  The
loadings $\Lambda_{j,k}$ for the donors are drawn iid from
$\mathcal{N}(0,0.5^{2})$ and truncated to $[-2,2]$, and are held
fixed across replications.

\paragraph*{Treated loading.}
The treated unit's loading vector is the convex combination
\begin{equation}\label{eq:lambda_treat}
\lambda_{0}=\sum_{j\in\mathcal{S}}\omega^{\star}_{j}\Lambda_{j,\cdot},
\qquad \omega^{\star}\in\Delta_{N_0},\ \abs{\mathcal{S}}=8,
\end{equation}
where $\mathcal{S}\subset\{1,\dots,N_0\}$ is a uniformly drawn
support of eight donors and $\omega^{\star}$ is drawn from a
sparse Dirichlet distribution on that support.  By construction
$\lambda_0$ lies inside the convex hull of the donor loadings.
The treated loading is held fixed across replications.

\paragraph*{Idiosyncratic stochastic trend.}
The component $\mathcal{E}_{j,t}$ is a unit-specific ARIMA$(1,1,0)$ process
with autoregressive coefficient $\phi_{e}=0.25$.  Its innovation
$U_{j,t}$ mixes a common shock and an idiosyncratic shock:
\begin{equation}\label{eq:E_innov}
U_{j,t}=\sqrt{\rho_u}\,u_{t}^{\mathrm{c}}+\sqrt{1-\rho_u}\,u_{j,t}^{\mathrm{i}},
\end{equation}
with $u_{t}^{\mathrm{c}},u_{j,t}^{\mathrm{i}}\sim\mathcal{N}(0,1-\phi_{e}^{2})$
iid.  The amplitude $\kappa\in\{0,0.5,1,2\}$ scales the entire
$\mathcal{E}_{j,t}$ component, and the cross-unit correlation parameter
$\rho_u\in\{0,0.5,1\}$ continuously interpolates between purely
idiosyncratic drift and purely shared drift.  Both shocks are
drawn fresh in every replication.

\paragraph*{Stationary noise and fixed effects.}
The stationary noise $\varepsilon_{j,t}$ is iid
$\mathcal{N}(0,1)$ and is drawn fresh in every replication.  The
unit fixed effect $\alpha_{j}$ is drawn iid from $U(5,15)$ for the
donors and fixed at $\alpha_{0}=0$ for the treated unit; the unit
fixed effects are held fixed across replications.  The time fixed
effect $\delta_{t}\sim\mathcal{N}(0,1)$ is iid and is drawn fresh
in every replication.

\paragraph*{Sample sizes and grid.}
We use $T_0=200$ pre-treatment periods, $T_\post=20$
post-treatment periods, $N_0=50$ donors, and $R=500$ replications
per cell of the $(\kappa,\rho_u)$ grid.  Table~\ref{tab:dgp_split}
summarizes which components are redrawn per replication and
which are held fixed.

\begin{table}[!ht]
    \centering
    \caption{Frozen versus redrawn components of the
    data-generating process.}\label{tab:dgp_split}
\resizebox{0.7\textwidth}{!}{\small
\begin{tabular}{lcc}
\hline\hline
Component & Redrawn per rep & Fixed across reps \\
\hline
Donor loadings $\Lambda_{j,k}$               &  & \checkmark \\
Treated loading $\lambda_{0}$                 &  & \checkmark \\
Unit fixed effects $\alpha_{j}$               &  & \checkmark \\
Factor paths $F_{k,t}$                        & \checkmark &  \\
Idiosyncratic component $\mathcal{E}_{j,t}$             & \checkmark &  \\
Stationary noise $\varepsilon_{j,t}$          & \checkmark &  \\
Time fixed effects $\delta_{t}$               & \checkmark &  \\
\hline
\end{tabular}}
\end{table}

The seed for each replication is derived deterministically from a
master seed and the cell index $(\kappa,\rho_u,\text{rep})$, so the
entire study is reproducible.  Bit-for-bit reproducibility is
verified by re-running the harness with the same master seed and
checking $R$-level identity of all numerical columns of the raw
output.

\subsection{Per-period RMSE}
\label{app:mc_perperiod}

Section~\ref{subsec:mc_rmse} reports RMSE pooled across the 20
post-treatment periods.  Figure~\ref{fig:c_perperiod} shows the period-by-period picture
behind the pooled comparison.  In the cells with $\kappa\ge 0.5$ and
$\rho_u\le 0.5$, the constant carry-forward configurations
(\texttt{last\_constant} at $q=1$ and $q=2$) and the $q=1$
\texttt{arima110} configuration lie at or below SC-INT and SDID at
every horizon.  The $q=2$ \texttt{ar}, \texttt{arima110}, and
\texttt{hamilton} trajectories instead rise above SDID and, at the
longest horizons, above SC-INT, with the unfavorable gap widening as
$h$ grows; the $q=1$ \texttt{hamilton} configuration (and, in the
$\kappa{=}0.5$, $\rho_u{=}0$ cell, $q=1$ \texttt{ar}) likewise exceeds
SDID at the longest horizons while staying below SC-INT.  For example,
at $(\kappa,\rho_u)=(2,0.5)$ and $h=20$ the $q=2$ \texttt{ar}
trajectory reaches about $11.6$ against SDID $9.5$ and SC-INT $10.6$.  In the shared stochastic trend
cells (top row $\kappa=0$ or right column $\rho_u=1$) the per-period
RMSEs are small (about $1.0$--$1.3$) and the HSC trajectories stay
near SC-INT and SDID, except the $q=2$ forecasters, which run up to
roughly $10$--$18\%$ above the better of the two baselines at the
longest horizons.

\begin{figure}[!ht]
\caption{Per-period RMSE by method across the
$(\kappa,\rho_u)$ grid}\label{fig:c_perperiod}
\centering
\begin{minipage}{1\linewidth}{
\centering
\includegraphics[width=\textwidth]{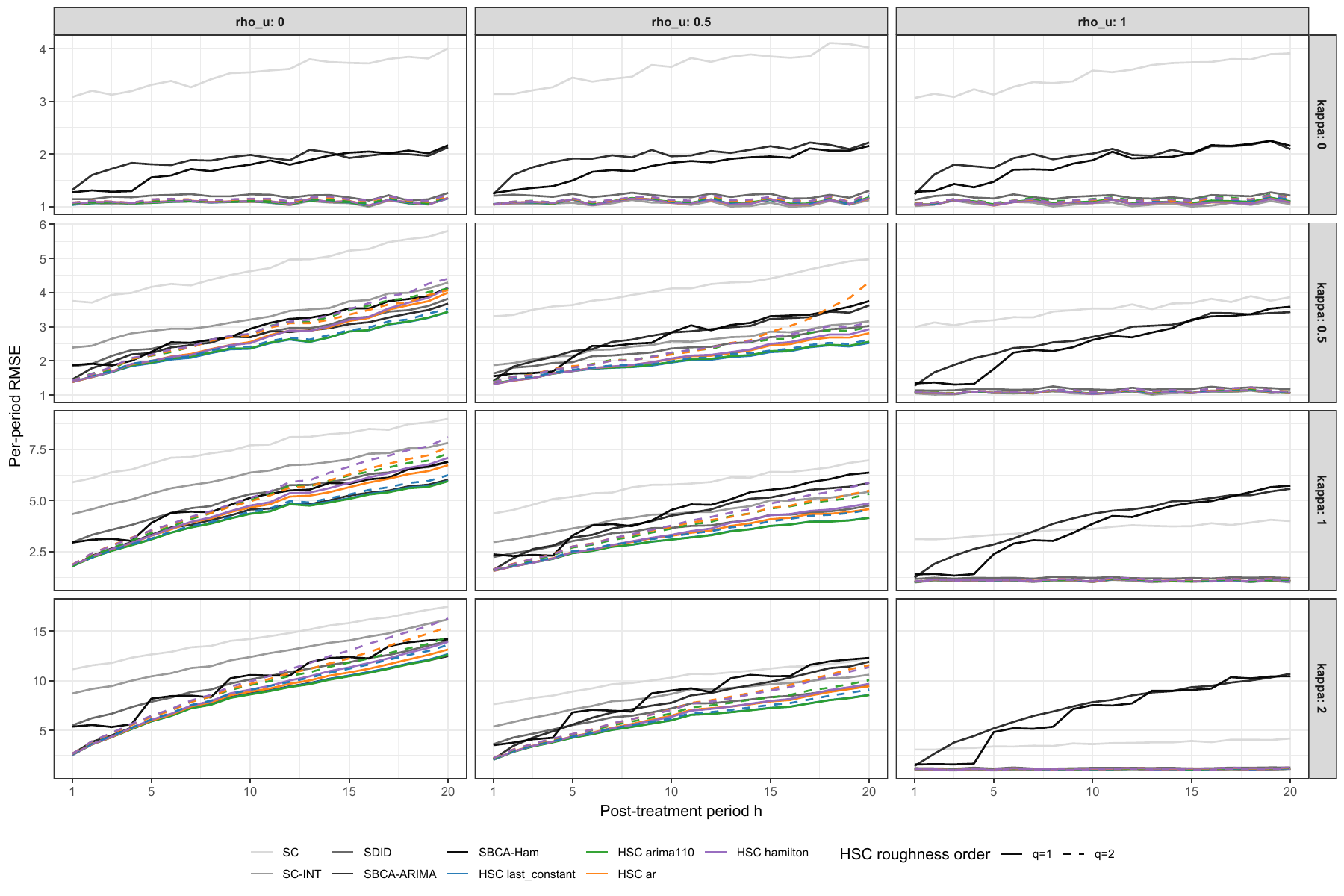}
}
\footnotesize\textit{Notes:} Lines report per-period RMSE
$\sqrt{R^{-1}\sum_{r}(\hat Y_{1,T_0+h}^{(r)}-Y_{1,T_0+h}^{(0,r)})^2}$
across $R=500$ replications, by post-treatment period
$h\in\{1,\dots,20\}$.  Solid lines correspond to $q=1$ and dashed
lines to $q=2$ for each HSC forecaster; baselines do not depend on
$q$.  $T_0=200$, $N_0=50$, $h=1$ cross-validation.
\end{minipage}
\end{figure}

\subsection{Pooled bias and pooled variance}
\label{app:mc_bias_var}

The pooled RMSE in Figure~\ref{fig:mc_rmse} can be decomposed into
a bias contribution and a variance contribution.
Figures~\ref{fig:c_pooled_bias} and~\ref{fig:c_pooled_variance}
report the pooled bias and pooled variance using the same layout
as Figure~\ref{fig:mc_rmse}.

\begin{figure}[!ht]
\caption{Pooled bias by method across the
$(\kappa,\rho_u)$ grid}\label{fig:c_pooled_bias}
\centering
\begin{minipage}{1\linewidth}{
\centering
\includegraphics[width=\textwidth]{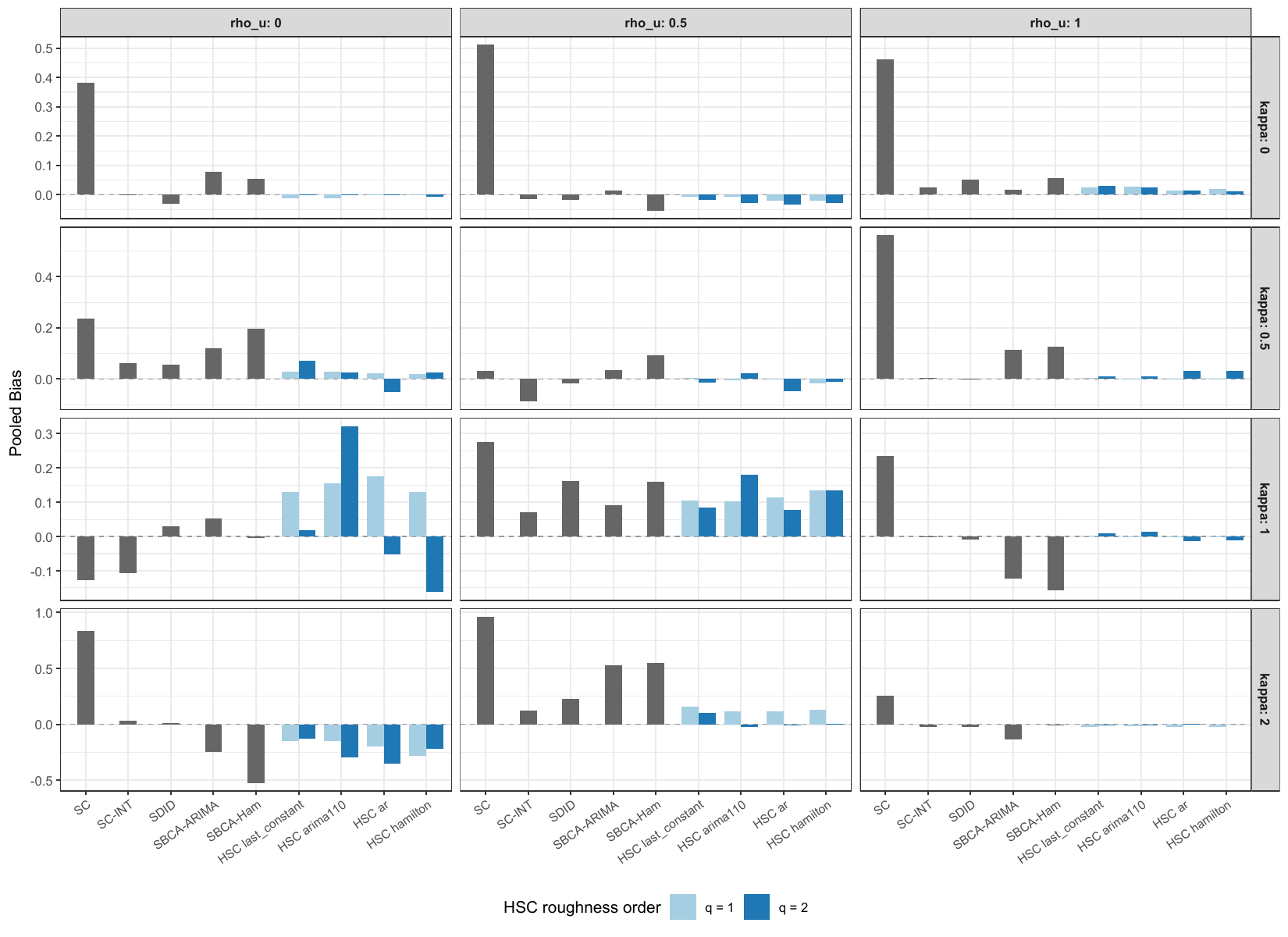}
}
\footnotesize\textit{Notes:} Bars report pooled bias
$R^{-1}T_\post^{-1}\sum_{r,h}(\hat Y_{1,T_0+h}^{(r)}-Y_{1,T_0+h}^{(0,r)})$
across $R=500$ replications and $T_\post=20$ post-treatment periods,
for each $(\kappa,\rho_u)$ cell.  Layout and color encoding match
Figure~\ref{fig:mc_rmse}.
\end{minipage}
\end{figure}

\begin{figure}[!ht]
\caption{Pooled variance by method across the
$(\kappa,\rho_u)$ grid}\label{fig:c_pooled_variance}
\centering
\begin{minipage}{1\linewidth}{
\centering
\includegraphics[width=\textwidth]{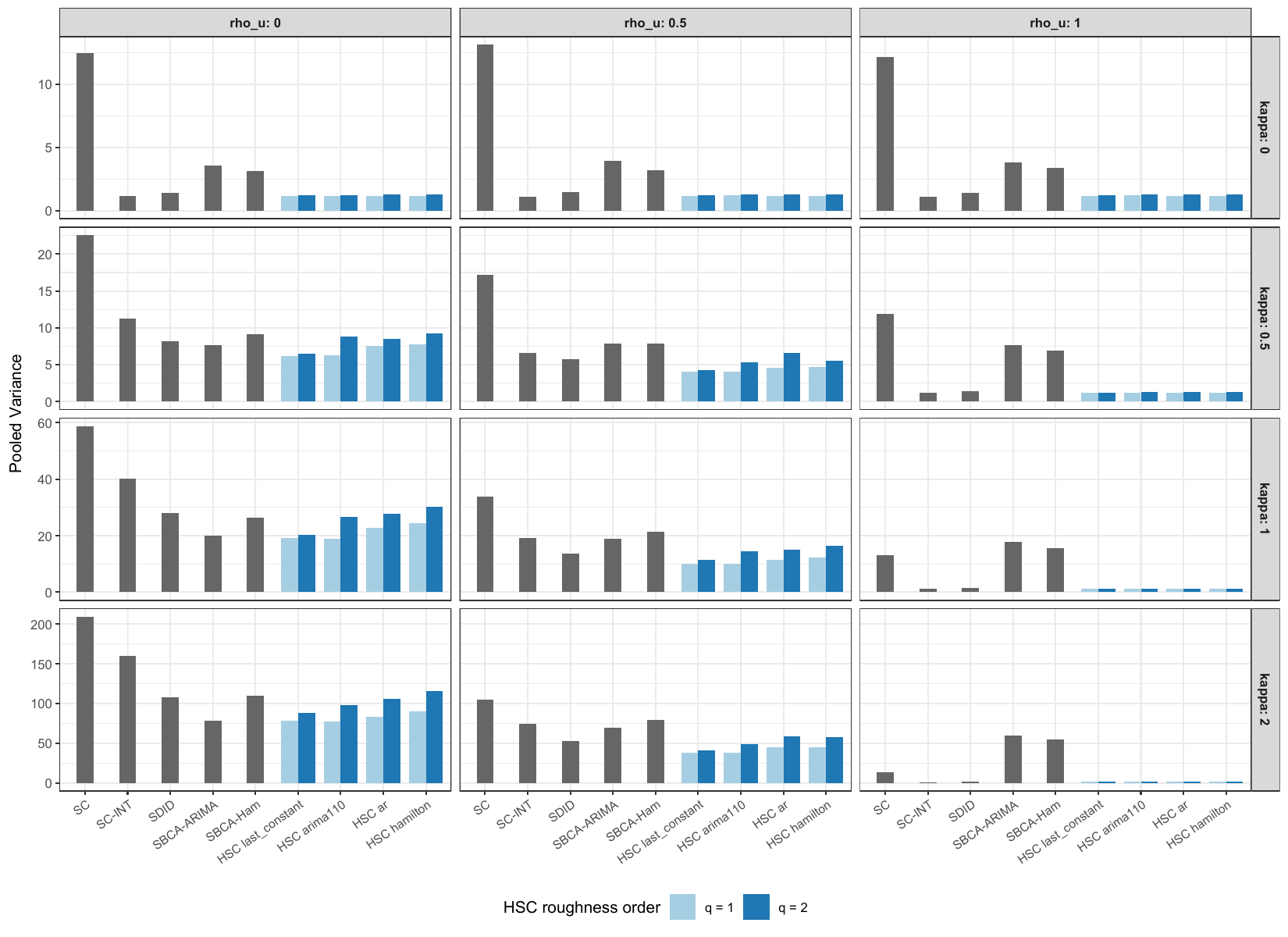}
}
\footnotesize\textit{Notes:} Bars report pooled variance
$\mathrm{Var}_{r,h}(\hat Y_{1,T_0+h}^{(r)}-Y_{1,T_0+h}^{(0,r)})$
across $R=500$ replications and $T_\post=20$ post-treatment periods,
for each $(\kappa,\rho_u)$ cell.  Layout and color encoding match
Figure~\ref{fig:mc_rmse}.
\end{minipage}
\end{figure}

In the shared stochastic trend cells the bias is small for every
method and the variance is comparable across HSC, SC-INT, and SDID.
In the shared + idiosyncratic stochastic trend cells HSC carries a
slightly larger bias than
SC-INT and SDID but a meaningfully smaller variance, and the
variance reduction more than offsets the bias penalty in the
pooled RMSE of Figure~\ref{fig:mc_rmse}.  The mechanism is
consistent with the spectral interpretation of
Section~\ref{sec:spectral}: by selecting an interior $\hat\rho$,
HSC trades a small amount of low-frequency information against a
substantial reduction in spurious matching of the donor pool to
the treated unit's idiosyncratic drift.

\subsection{Cross-validation horizon}
\label{app:mc_cv_h}

Section~\ref{subsec:mc_rho} fixes the cross-validation horizon at
$h=1$.  Here we examine how the cross-validated $\hat\rho$ and the
post-period RMSE shift when the horizon is extended to $h=20$.

\paragraph*{Distribution of $\hat\rho$.}
Figures~\ref{fig:c_rho_h_q1} and~\ref{fig:c_rho_h_q2} report the
distribution of $\hat\rho$ at $h=1$ and $h=20$ for $q=1$ and $q=2$
respectively.  In cells where $\hat\rho$ is already near one
under $h=1$ (the top row $\kappa=0$ and the right column
$\rho_u=1$), the distribution is essentially unchanged at $h=20$.
In cells where $\hat\rho$ is interior under $h=1$ (the
shared + idiosyncratic stochastic trend region with $\kappa\ge 1$ and $\rho_u\le 0.5$),
the distribution shifts upward at $h=20$: the median moves
substantially toward one and the boxes widen.  The shift is
systematic across forecasters and cells.  Intuitively, a
longer-horizon CV objective penalizes forecaster extrapolation
error more heavily, so the optimizer reallocates predictive
responsibility from the time series branch toward the donor pool
by raising $\rho$.

\begin{figure}[!ht]
\caption{Distribution of $\hat\rho$ at $h=1$ versus $h=20$
cross-validation, $q=1$}\label{fig:c_rho_h_q1}
\centering
\begin{minipage}{1\linewidth}{
\centering
\includegraphics[width=\textwidth]{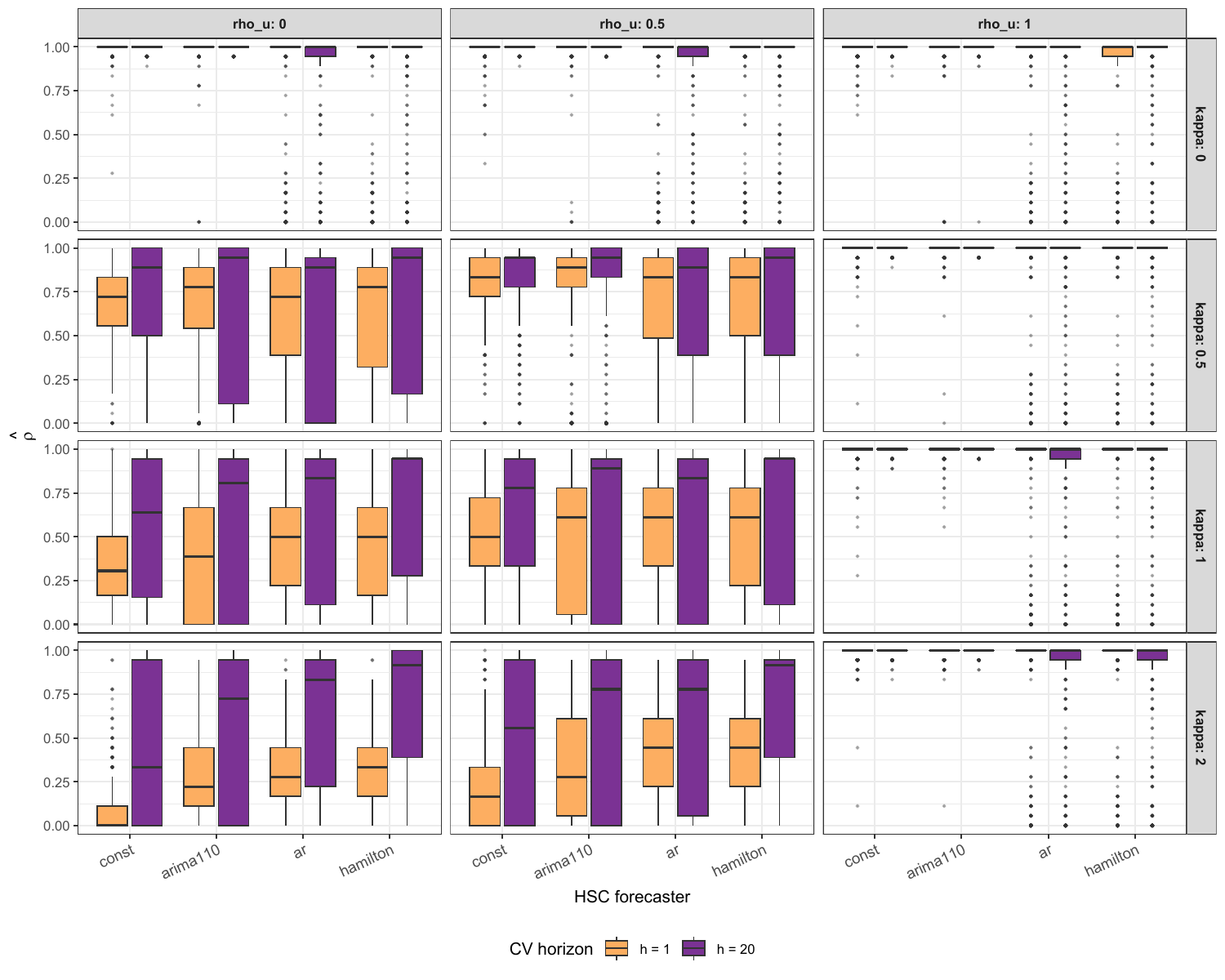}
}
\footnotesize\textit{Notes:} Boxplots report the distribution of
the cross-validated $\hat\rho$ across $R=500$ replications for each
$(\kappa,\rho_u)$ cell, at cross-validation horizons $h=1$ (orange)
and $h=20$ (purple), for the four HSC time series forecasters at
$q=1$.  $T_0=200$, $N_0=50$.
\end{minipage}
\end{figure}

\begin{figure}[!ht]
\caption{Distribution of $\hat\rho$ at $h=1$ versus $h=20$
cross-validation, $q=2$}\label{fig:c_rho_h_q2}
\centering
\begin{minipage}{1\linewidth}{
\centering
\includegraphics[width=\textwidth]{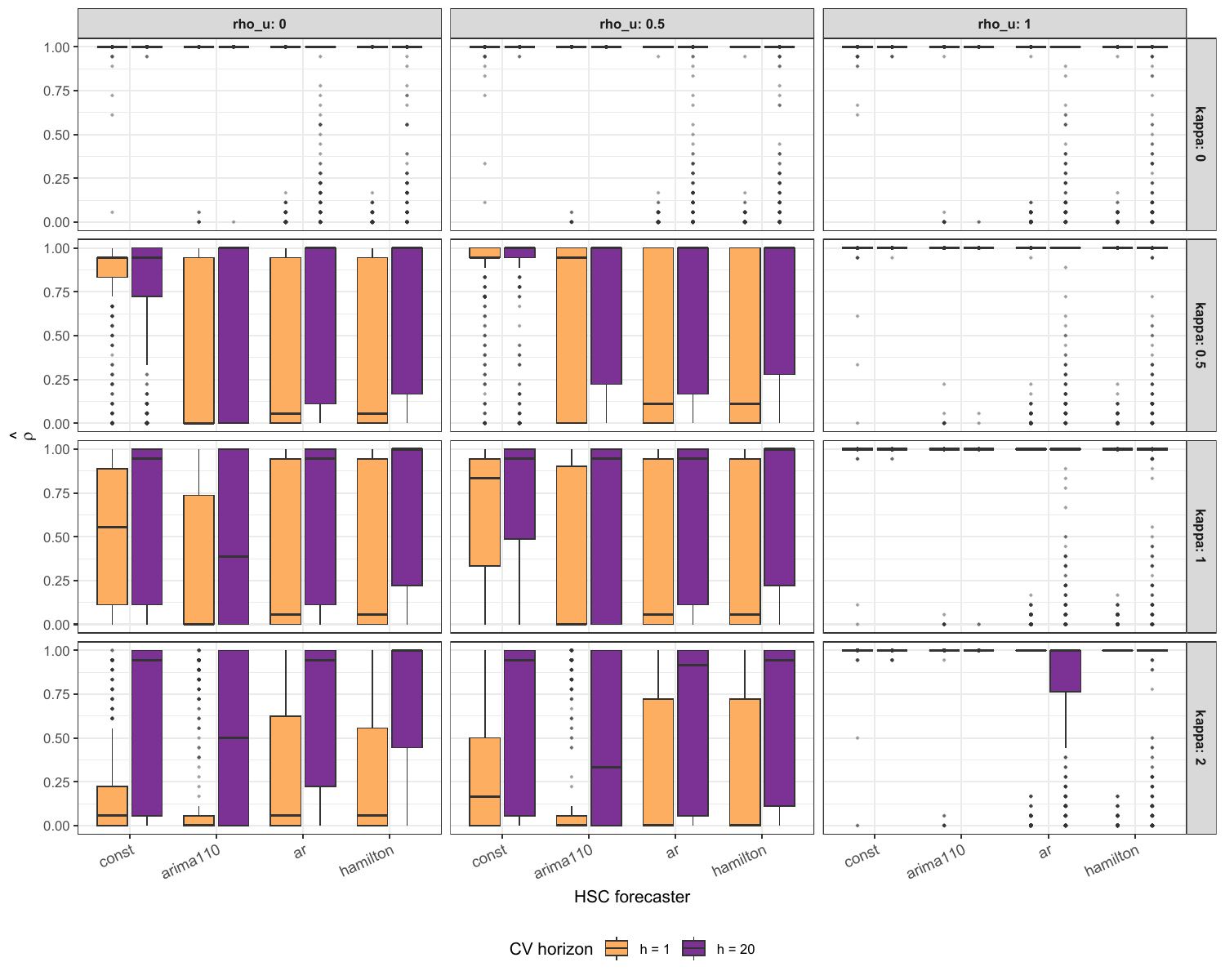}
}
\footnotesize\textit{Notes:} Same as Figure~\ref{fig:c_rho_h_q1}
for smoothness order $q=2$.
\end{minipage}
\end{figure}

\paragraph*{RMSE consequences of the horizon choice.}
The upward shift in $\hat\rho$ does not translate into uniformly
better RMSE.  Figure~\ref{fig:c_rmse_window} reports the
window-pooled RMSE at the representative shared + idiosyncratic
stochastic trend cell $(\kappa,\rho_u)=(2,0.5)$ for two windows: an
early window
covering post-treatment periods $\{1,\dots,10\}$ and a late
window covering $\{11,\dots,20\}$.  In the early window the $h=1$ HSC bars sit at or below the $h=20$
bars for every forecaster at both smoothness orders.  In the late
window this still holds for all $q=1$ configurations and for the $q=2$
\texttt{arima110}, \texttt{hamilton}, and \texttt{last\_constant}
configurations, but the $q=2$ \texttt{ar} configuration reverses, its
$h=20$ bar ($9.2$) falling below its $h=1$ bar ($9.6$).  Apart from
that single configuration the larger $\hat\rho$ selected by $h=20$
does not improve long-horizon predictive accuracy in this
representative cell; the
pattern is similar across other cells in the
shared + idiosyncratic stochastic trend region.

\begin{figure}[!ht]
\caption{Window-pooled RMSE at $h=1$ versus $h=20$
cross-validation, representative cell
$(\kappa,\rho_u)=(2,0.5)$}\label{fig:c_rmse_window}
\centering
\begin{minipage}{1\linewidth}{
\centering
\includegraphics[width=\textwidth]{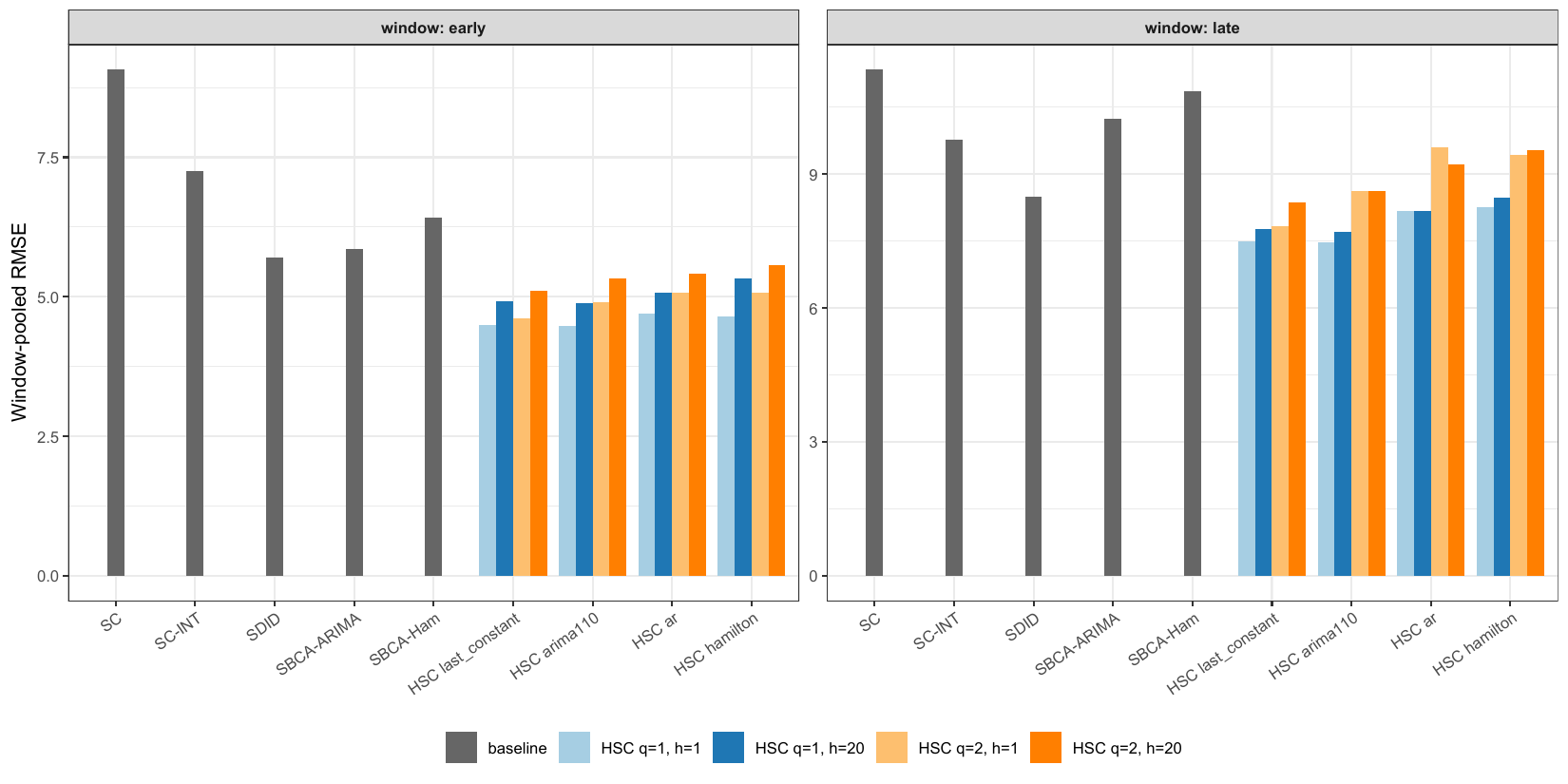}
}
\footnotesize\textit{Notes:} Bars report window-pooled RMSE
$\sqrt{R^{-1}\abs{W}^{-1}\sum_{r,h\in W}(\hat Y_{1,T_0+h}^{(r)}-Y_{1,T_0+h}^{(0,r)})^2}$
for the early window $W=\{1,\dots,10\}$ and the late window
$W=\{11,\dots,20\}$, restricted to the cell
$(\kappa,\rho_u)=(2,0.5)$.  Each HSC forecaster appears as four
bars: $q=1$ with $h=1$ (light blue), $q=1$ with $h=20$ (dark
blue), $q=2$ with $h=1$ (light orange), and $q=2$ with $h=20$
(dark orange).  The five baseline estimators do not depend on the
HSC cross-validation and appear as single grey bars.  $T_0=200$,
$N_0=50$.
\end{minipage}
\end{figure}

The takeaway is that the $h=1$ default used in the main text is a
robust choice: shortening the cross-validation horizon does not
sacrifice accuracy at the longer post-treatment horizons, and a
longer cross-validation horizon does not deliver an improvement
that compensates for the additional computational cost.

\clearpage
\setcounter{page}{1}
\setcounter{table}{0}
\setcounter{figure}{0}
\setcounter{equation}{0}
\setcounter{footnote}{0}
\setcounter{lemma}{0}
\setcounter{proposition}{0}

\renewcommand{\theassumption}{D\arabic{assumption}}
\renewcommand\thetable{D\arabic{table}}
\renewcommand\thefigure{D\arabic{figure}}
\renewcommand{\thepage}{D-\arabic{page}}
\renewcommand{\theequation}{D\arabic{equation}}
\renewcommand{\thefootnote}{D\arabic{footnote}}
\renewcommand{\thelemma}{D\arabic{lemma}}

\section{Hong Kong: robustness}
\label{app:hk_robust}

This appendix reports three robustness checks for the Hong Kong
application of Section~\ref{sec:empirical}: sensitivity to the choice
of HSC forecaster and roughness order, sensitivity to the
cross-validation horizon, and sensitivity to the donor pool.  All
three preserve the qualitative conclusions of the main text.

\subsection{All four HSC configurations}
\label{app:hk_allcfg}

Figure~\ref{fig:hk_d_cfg} overlays the counterfactuals of all four HSC
configurations, each evaluated at its own cross-validated $\hat\rho$.
The four trajectories stay close throughout the post-treatment
window---within about $2\%$ of one another in $1997$, widening to
about $6\%$ by $2003$---and all four imply a sizable negative effect,
so the headline result of Section~\ref{subsec:hk_cf} is not an
artifact of selecting the ARIMA$(1,1,0)$, $q=1$ configuration.

\begin{figure}[!ht]
\caption{Hong Kong: counterfactuals from all four HSC configurations
($h=1$)}\label{fig:hk_d_cfg}
\centering
\begin{minipage}{1\linewidth}{
\centering
\includegraphics[width=\textwidth]{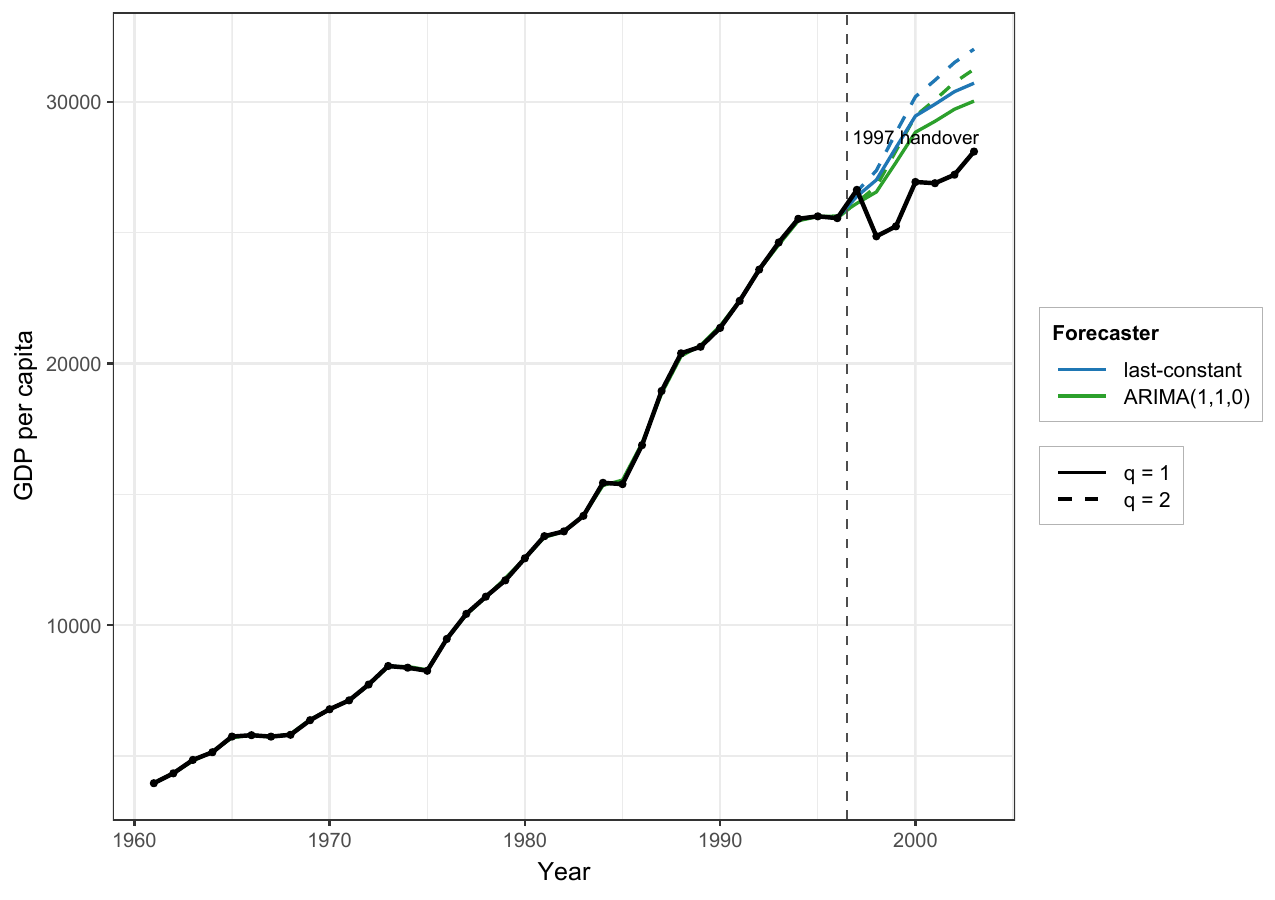}
}
\footnotesize\textbf{Note:} Estimated no-handover counterfactuals from
the four HSC configurations (\texttt{last\_constant} and
ARIMA$(1,1,0)$ forecasters at $q\in\{1,2\}$), each evaluated at its
own one-step-ahead cross-validated $\hat\rho$, with observed Hong Kong
per-capita GDP (solid black).  Sample: eleven developed donor
economies, annual per-capita GDP, $1961$--$2003$.
\end{minipage}
\end{figure}

\subsection{Cross-validation horizon}
\label{app:hk_h4}

The Monte Carlo study (Section~\ref{subsec:mc_rho}) shows that
lengthening the cross-validation horizon shifts $\hat\rho$ upward,
reallocating predictive responsibility from the treated-unit
forecaster toward the donor pool.  The same pattern appears in the
Hong Kong data.  Figure~\ref{fig:hk_d_cvh4_cv} shows that at $h=4$ the
cross-validated optimum of the selected configuration moves from
$\hat\rho=0.11$ to $\hat\rho=0.50$.  Figure~\ref{fig:hk_d_cvh4_cf}
shows the resulting counterfactual comparison: HSC continues to track
observed Hong Kong more closely than any baseline, and on the
four-step-ahead CV-MSPE all four HSC configurations
($1.6$--$2.0\times10^{6}$) again fall below every baseline
(SBCA-Hamilton $4.4\times10^{6}$, SDID $4.6\times10^{6}$, SC-INT
$9.2\times10^{6}$, plain SC $1.9\times10^{7}$).

\begin{figure}[!ht]
\caption{Hong Kong: cross-validated MSPE at the four-year horizon
($h=4$)}\label{fig:hk_d_cvh4_cv}
\centering
\begin{minipage}{1\linewidth}{
\centering
\includegraphics[width=\textwidth]{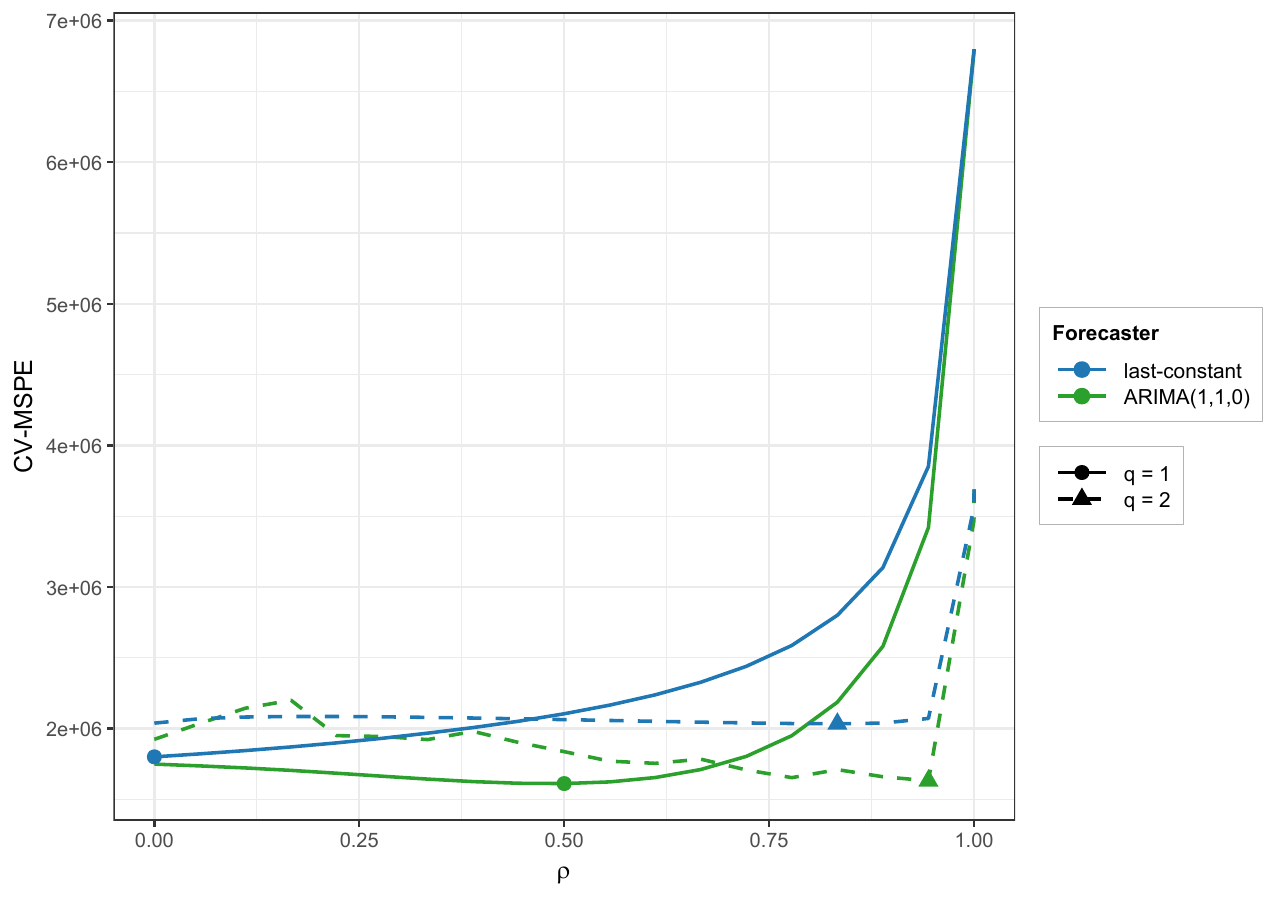}
}
\footnotesize\textbf{Note:} Cross-validated mean squared prediction
error at the four-step-ahead horizon ($h=4$) as a function of $\rho$,
for the four HSC configurations.  A marker on each curve denotes that
configuration's cross-validated $\hat\rho$.  Relative to the
one-step-ahead horizon of Figure~\ref{fig:hk_cv}, the selected
$\hat\rho$ shifts upward (from $0.11$ to $0.50$), reallocating
predictive responsibility toward the donor pool.  Sample: eleven
developed donor economies, $1961$--$1996$ pre-treatment.
\end{minipage}
\end{figure}

\begin{figure}[!ht]
\caption{Hong Kong: counterfactual per-capita GDP by estimator
($h=4$)}\label{fig:hk_d_cvh4_cf}
\centering
\begin{minipage}{1\linewidth}{
\centering
\includegraphics[width=\textwidth]{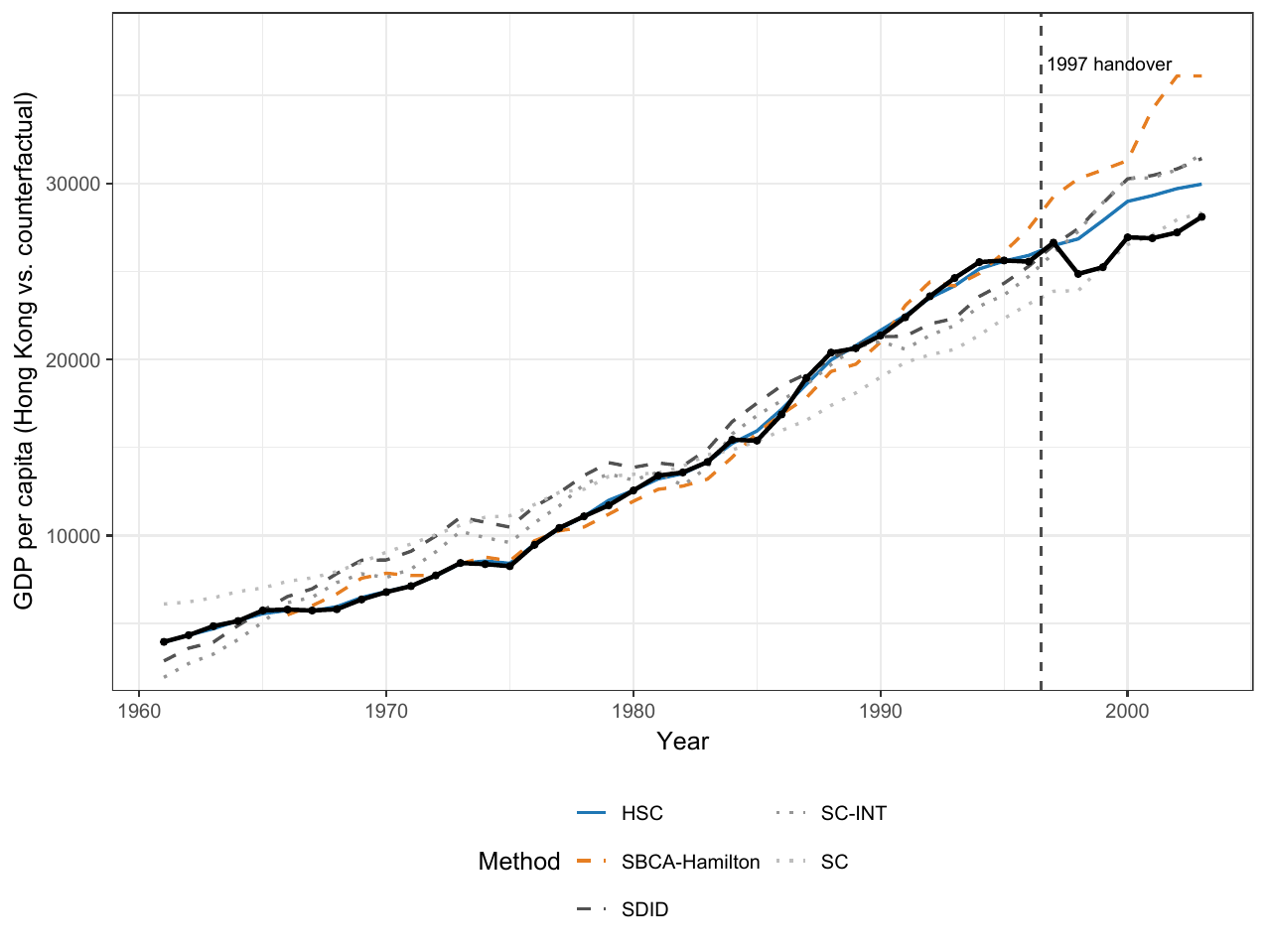}
}
\footnotesize\textbf{Note:} Observed Hong Kong per-capita GDP (solid
black) and estimated no-handover counterfactuals from the
cross-validation-selected HSC configuration at $h=4$, SBCA-Hamilton,
SDID, SC-INT, and plain SC.  The vertical dashed line marks the
$1997$ handover.  Sample: eleven developed donor economies, annual
per-capita GDP, $1961$--$2003$.
\end{minipage}
\end{figure}

\subsection{Geographic-neighbour donor pool}
\label{app:hk_hcw}

\citet{hsiao2012panel} select donors by geographic and economic
proximity rather than long-run comparability.  We re-estimate HSC and
the baselines on the nine economies from that pool that are available
in the \citet{shi2025synthetic} data release---mainland China,
Indonesia, Japan, Korea, Malaysia, the Philippines, Singapore,
Thailand, and the United States; Taiwan, also used by
\citet{hsiao2012panel}, is not in the release---over the same
$1961$--$2003$ window.  The cross-validation again selects
ARIMA$(1,1,0)$, $q=1$ with $\hat\rho=0.11$.
Figure~\ref{fig:hk_d_asianw} shows that HSC again distributes weight
broadly across the donor pool while the comparison estimators
concentrate, and Figure~\ref{fig:hk_d_asiancfg} shows that the four
HSC configurations again cluster.  On the one-step-ahead CV-MSPE all
four HSC configurations ($3.6$--$3.9\times10^{5}$) again fall below
every baseline (SDID $5.1\times10^{5}$, SC-INT $1.1\times10^{6}$,
plain SC $1.2\times10^{6}$, SBCA-Hamilton $1.6\times10^{6}$).  HSC's advantage over every baseline therefore survives replacing the
entire donor-selection philosophy.  The ordering among the baselines
does change: with a geographic-neighbour donor pool SBCA-Hamilton
becomes the weakest comparator rather than the strongest, while SDID,
SC-INT, and plain SC all improve substantially.

\begin{figure}[!ht]
\caption{Hong Kong: donor weights by estimator, geographic-neighbour
pool}\label{fig:hk_d_asianw}
\centering
\begin{minipage}{1\linewidth}{
\centering
\includegraphics[width=\textwidth]{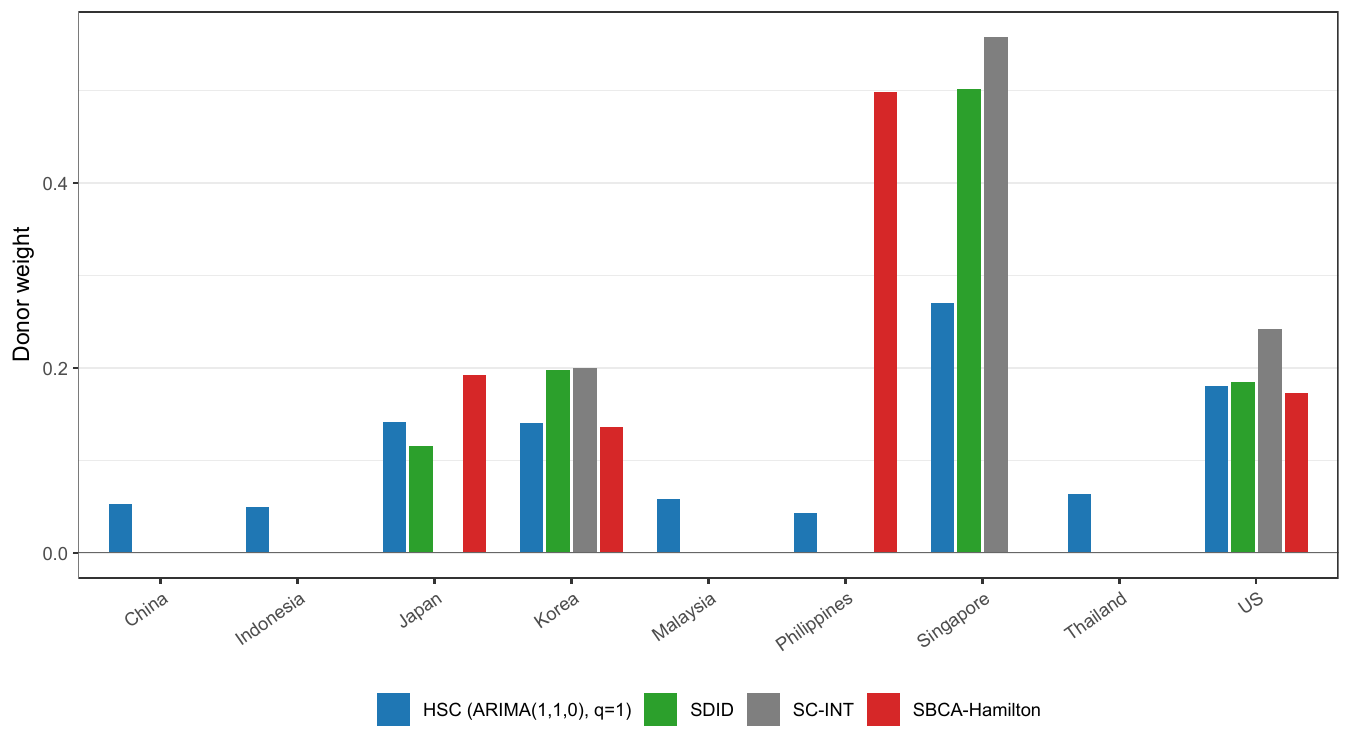}
}
\footnotesize\textbf{Note:} Donor weights assigned to the nine
geographic-neighbour donor economies of \citet{hsiao2012panel}
available in the \citet{shi2025synthetic} release, by the
cross-validation-selected HSC configuration (ARIMA$(1,1,0)$, $q=1$,
$\hat\rho=0.11$), SDID, SC-INT, and SBCA-Hamilton.  HSC again spreads
weight broadly while the comparison estimators concentrate.  Sample:
annual per-capita GDP, $1961$--$1996$ pre-treatment fitting window.
\end{minipage}
\end{figure}

\begin{figure}[!ht]
\caption{Hong Kong: counterfactuals from all four HSC configurations,
geographic-neighbour pool ($h=1$)}\label{fig:hk_d_asiancfg}
\centering
\begin{minipage}{1\linewidth}{
\centering
\includegraphics[width=\textwidth]{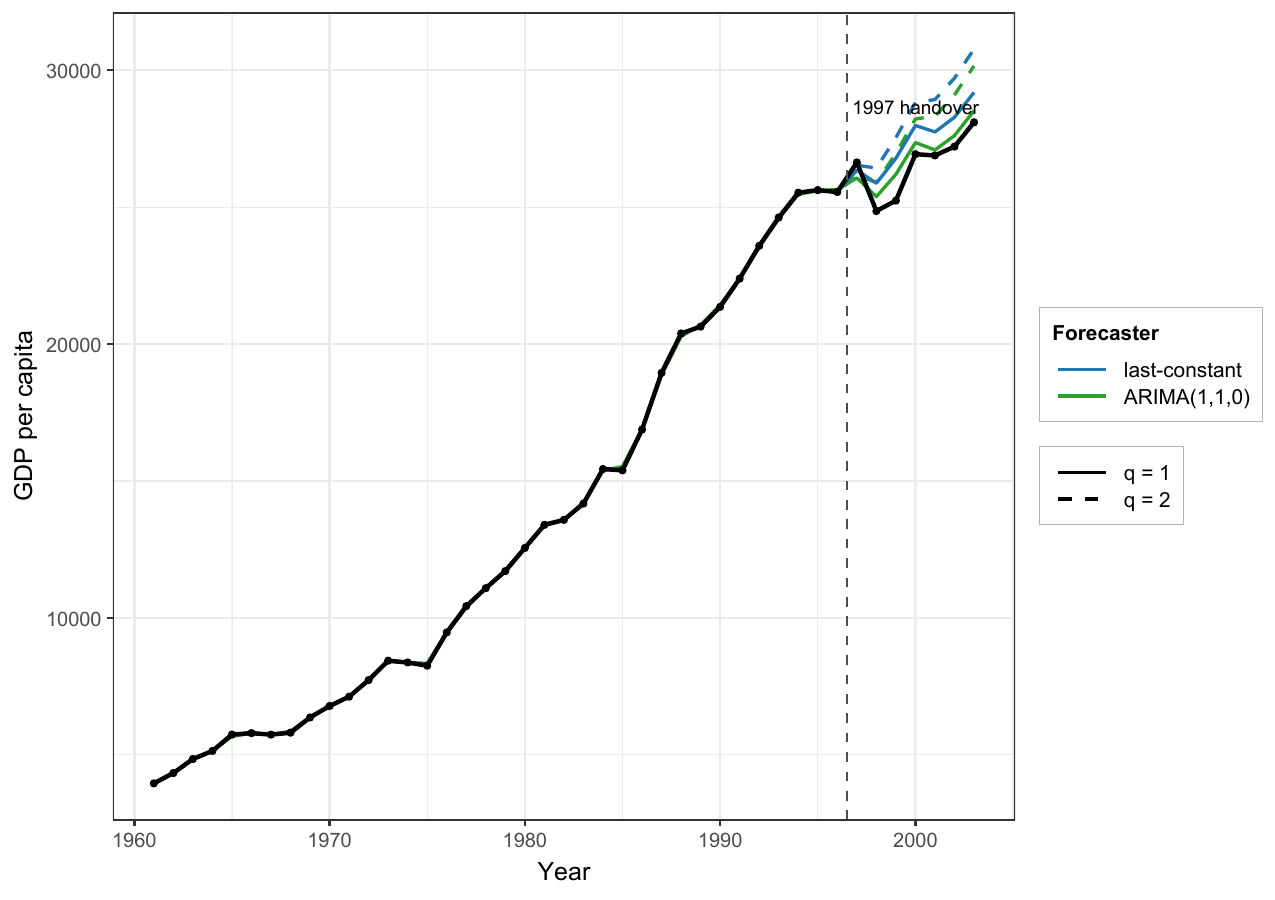}
}
\footnotesize\textbf{Note:} Estimated no-handover counterfactuals from
the four HSC configurations, each at its own one-step-ahead
cross-validated $\hat\rho$, with observed Hong Kong per-capita GDP
(solid black), estimated on the nine geographic-neighbour donors of
\citet{hsiao2012panel}.  Sample: annual per-capita GDP,
$1961$--$2003$.
\end{minipage}
\end{figure}

\end{document}